\newtheorem{lemma}{Lemma}
\newtheorem{procedure}{Procedure}
\DeclareMathOperator{\tr}{Tr}
\DeclareMathOperator{\diag}{diag}
\title{Quantum teleportation of a qutrit state via a hypergraph state in a noisy environment}
\author{Souvik Giri\thanks{Email: \texttt{souvikgiri3010@gmail.com}}, Supriyo Dutta\thanks{Email: \texttt{dosupriyo@gmail.com}} \\
	\small{Department of Mathematics, National Institute of Technology Agartala}\\
	\small{Jirania, West Tripura, India - 799046.}}
\date{}
\begin{document}
	\maketitle
	\begin{abstract}
		The concept of quantum teleportation is fundamental in the theory of quantum communication. Developing models of quantum teleportation in different physical scenario is a modern trend of research in this direction. This work is at the interface of hypergraph theory and quantum teleportation. We propose a new teleportation protocol for qutrit states in a noisy environment. This protocol utilizes a shared quantum hypergraph state between parties as quantum channels, to carry quantum information. During the preparation of the shared state different quantum noise may act on it. In this article, we consider six types of noises which are generalized for qutrits using Wyel operators. They are the qutrit-flip noise, qutrit-phase-flip noise, depolarizing noise, Markovian and non-Markovian amplitude damping channel, Markovian and non-Markovian dephasing channel, and non-Markovian depolarization noise. There are only five qutrit hypergraph states which satisfies our requirements to be used as a channel. We consider all of them in this work. For different hypergraph states and different noises we work out the analytical expressions of quantum teleportation fidelity.\\
		
		\textbf{Keywords:} Quantum teleportation, Kraus operator, Weyl operator, noisy channel, qutrit, hypergraph quantum states, teleportation fidelity.
	\end{abstract}
	
	\tableofcontents
	
	\section{Introduction}
	
		Transmitting information from one party to another is a fundamental aspect of communication theory. In quantum communication, the quantum teleportation is a crucial tool to communicate quantum information \cite{mcmahon2007quantum}. A quantum state is a mathematical entity that symbolize the knowledge of a quantum system. Quantum teleportation  enables us to transfer quantum information without physically transmitting the quantum state. Therefore, investigations in quantum teleportation is essential theoretically and experimentally. Quantum teleportation was first introduced in \cite{bennett1993teleporting}. The initial experimental investigations on quantum teleportation was reported in \cite{boschi1998experimental, bouwmeester1997experimental}. Quantum teleportation is an active area of research in quantum information theory, in our age.
		
		A quantum state carries quantum information. Mathematically, a quantum states is a normalized vector in a Hilbert space. A qubit is a two-level quantum state, which is represented by a state vector of dimension two. A state of dimension three is called a qutrit. In general, a qudit is a state with dimension $d$. Initially, quantum teleportation was invented for qubit states. The higher-order qudits are an important resource in quantum-information processing \cite{lanyon2009simplifying, qudit2009emulation}. Hence, a teleportation scheme for them is certainly desirable. Quantum teleportation for higher dimensional states are investigated qutrits \cite{luo2019quantum, jie2008controlled, sebastian2023beyond, li2023improving}, and qudits \cite{al2010quantum, de2021efficient, wang2017generalized, fonseca2019high}. Experimental investigations on higher dimensional states are also reported in literature \cite{goyal2014qudit}.
		 
		Usually, in a quantum teleportation process we consider two parties, say Alice and Bob, who like to communicate a quantum state between them. We first distribute an entangled state between them, which acts as a channel to carry information. In the initial work of quantum teleportation the Bell state was used as a channel \cite{bennett1993teleporting}. Later the channel was replaced by GHZ state \cite{greenberger1989going, yang2009quantum, tabatabaei2023bi} , W state \cite{cabello2002bell, joo2003quantum}, cluster states \cite{briegel2001persistent, da2007teleportation}, etc. In this work, we consider a quantum hypergraph state \cite{qu2013encoding, rossi2013quantum} as a channel. These states are generalizations of graph states \cite{hein2004multiparty, van2005local, anders2006fast} and cluster states. The hypergraph states corresponding to the connected hypergraphs are entangled states with limited entanglement  \cite{guhne2014entanglement, dutta2019permutation}. Our motivation behind considering quantum hypergraph states is their potential applications in quantum communication, such as, in  building quantum repeater network \cite{hahn2019quantum, epping2016large, tzitrin2018local, azuma2023quantum}. The experimental implementation in this direction includes \cite{azuma2015all, borregaard2020one, zhan2023performance}. The graph and hypergraph states were initially developed for qubit states. Due to the experimental requirements they were generalized for qudits \cite{steinhoff2017qudit, xiong2018qudit, sarkar2021phase}. 
		
		The above model is an ideal description of quantum teleportation. However, in realistic physical implementations quantum teleportation noise is always present due to the inevitable interaction of the
		quantum states with the external environment. The effects of noise on quantum teleportation received considerable attention \cite{fonseca2019high, bennett1996purification, google2023measurement}. Noise may be applied at any phase of teleportation. In our investigation, we apply noise while establishing a state between two parties.
		
		Therefore, in this work we combine the following three fundamental aspects to build up our quantum teleportation procedure:
		\vspace{-0.2cm}
		\begin{itemize} 
			\itemsep0em 
			\item[-]
				The teleportation procedure is applicable for qutrits.
			\vspace{-0.2cm}
			\item[-] 
				It considers quantum hypergraph states, as a channel.
			\vspace{-0.2cm}
			\item[-] 
				The environmental noise is employed during the distribution of the channel.
		\end{itemize}
		\vspace{-0.2cm}
		The qutrit hypergraph states are generalization of qubit hypergraphh states. The idea of qutrit hypergraph states and their constructions are not readily available in literature, to the best of our knowledge. Therefore in this article we explain their construction in details. Till date, the qutrit hypergraph states are not used in quantum teleportation, to the best of our knowledge. This article fills this gap. Our teleportation protocol consists of three parties, which are Alice, Bob, and Dave. Dave prepares a qutrit hypergraph state and distribute between Alice and Bob via a noisy quantum channel. Then, Alice sends a qutrit quantum state to Bob via the shared state. We consider six types of quantum noise which may act on qutrit hypergraph state when Dave share it with Alice and Bob. They are the qutrit-flip noise, qutrit-phase-flip noise, depolarizing noise, Markovian and non-Markovian amplitude damping channel, Markovian and non-Markovian dephasing channel, and non-Markovian depolarization noise. The Kraus operators representing these noise are generalized for qutrits using Weyl operators. We have only five different options of quantum hypergraph states to be used as a quantum channel. We work out the analytical expressions of teleportation fidelity for different hypergraph states and be applying different noises on each of them. 
					
		This article is distributed into six sections and an appendix. In section 2, we discuss preliminary ideas on qutrit states and quantum channel to carry single-qutrit states. In section 3, we demonstrate qutrit hypergraph states and their construction which is an essential resource in our quantum teleportation. Section 4 is dedicated to the our teleportation protocol. Section 5 is distributed into six subsections. We consider qutrit-flip noise, qutrit-phase-flip noise, depolarizing noise, amplitude damping noise, dephasing, and non-markovian depolarization  channels in different subsections of section 5. Then we conclude this article. In appendix we present a calculation to find teleportation fidelity.

	\section{Qutrit quantum states and channels} 
	    
	    The fundamental component of quantum information in quantum computing is a qubit, also known as a quantum bit. We represent a qubit as a normalized vector in a two dimensional Hilbert space $\mathcal{H}_{2}$, which is spanned by $\ket{0} = \begin{bmatrix} 1 \\ 0 \end{bmatrix}$ and $\ket{1} = \begin{bmatrix} 0 \\ 1 \end{bmatrix}$. The Pauli operators play a significant role in quantum computation and information with qubits. Throughout this article $\sigma_x = \begin{bmatrix} 0 & 1 \\ 1 & 0 \end{bmatrix}, \sigma_y = \begin{bmatrix} 0 & -\iota \\ \iota & 0 \end{bmatrix}$ and $\sigma_z = \begin{bmatrix} 1 & 0 \\ 0 & -1 \end{bmatrix}$ denote the Pauli $X, Y$ and $Z$ operators.
	    
		In this work, we consider the qutrits which are the quantum states belonging to the three dimensional Hilbert space $\mathcal{H}_{3}$ \cite{byrd1998differential}. The set of vectors $\{\ket{j}: j = 0, 1, 2\}$ forms the computational basis of $\mathcal{H}_{3}$, where $\ket{0} = (1, 0, 0)^\dagger$, $\ket{1} = (0, 1, 0)^\dagger$, and $\ket{2} = (0, 0, 1)^\dagger$. The notations $\ket{0}$ and $\ket{1}$ are used for both qubits and qutrits. If we use them to denote qubits, we mention it specifically. Otherwise, they will indicate qutrit states. In this article, we consider a family of qutrit states to transmit utilizing a quantum teleportation protocol. In general, we can express them as
		\begin{equation}\label{quantum state vector}
		    \ket{\phi} = \cos (\text{$\theta $}_1) \ket{0} + \sin (\text{$\theta $}_1) \cos (\text{$\theta $}_2) \ket{1} + \sin (\text{$\theta $}_1) \sin (\text{$\theta $}_2) \ket{2}.
		\end{equation}
		Note that, $|\cos (\text{$\theta $}_1)|^2 + |\sin (\text{$\theta $}_1) \cos (\text{$\theta $}_2)|^2 + |\sin (\text{$\theta $}_1) \sin (\text{$\theta $}_2)|^2 = 1$. Here, $\theta_1$ and $\theta_2$ are the state parameters. In general there is no relationship between them. To make our calculations simpler, we sometime assume a linear relationship between $\theta_1$ and $\theta_2$, which is $\theta_1 = 3 \theta_2$. Under this condition we get
		\begin{equation}\label{quantum state vector_1}
			\ket{\phi}_{\theta_2 = \frac{\theta_1}{3}} = \cos(\theta_1) \ket{0} + \sin(\theta_1) \cos \left(\frac{\theta_1}{3} \right) \ket{1} + \sin(\theta_1) \sin \left(\frac{\theta_1}{3} \right) \ket{2},
		\end{equation}
		which has only one single state parameter $\theta_2$. We can be more specific. In equation (\ref{quantum state vector}), if $\theta_1 = \sin^{-1} \left(\sqrt{\frac{2}{3}}\right)$, and $\theta_2 = \frac{\pi}{4}$, then $\cos (\text{$\theta $}_1) = \sin (\text{$\theta $}_1) \cos (\text{$\theta $}_2) = \sin (\text{$\theta $}_1) \sin (\text{$\theta $}_2) = \frac{1}{\sqrt{3}}$. It generates the quantum state 
		\begin{equation}\label{plus_state}
		    	\ket{+} = \frac{1}{\sqrt{3}} (\ket{0} + \ket{1} + \ket{2}),
		\end{equation}
		which can be considered as a generalization of qubit $\ket{+}$ state. Also, putting $\theta_1 = \frac{\pi}{4}$ and $\theta_2 = \frac{\pi}{2}$ in equation (\ref{quantum state vector}), we get
		\begin{equation}\label{zero+two_state}
			\ket{\phi}_{\theta_1 = \frac{\pi}{4}, \theta_2 = \frac{\pi}{2}} = \frac{1}{\sqrt{2}} (\ket{0} + \ket{2}).
		\end{equation}
		In addition, putting $\theta_1 = 0$ and $\theta_2 = \frac{\pi}{2}$ in equation (\ref{quantum state vector}) we get
		\begin{equation}\label{zero_state}
			\ket{\phi}_{\theta_1 = 0, \theta_2 = \frac{\pi}{2}} = \ket{0}. 
		\end{equation}
	
		We can also describe a quantum state with a density matrix $\rho$, which is a positive semi-definite Harmitian matrix with trace $1$. The density matrix of the state vector $\ket{\phi}$ is $\rho = \ket{\phi} \bra{\phi}$.
	
		Let $A$ and $B$ be two qutrit systems. The space of composite two-qutrit system is represented by the Hilbert space $\mathcal{H}_3^{(2)} = \mathcal{H}_3 \otimes \mathcal{H}_3$, where $\otimes$ denotes tensor product. The idea of partial trace plays a crucial role, in this article. Let $\rho_{AB}$ be a density operator in $\mathcal{H}_3^{(2)}$. Then, the partial trace over the subsystem $B$ provides a density matrix in the system $A$, which is represented by 
		\begin{equation}
			\rho_A = \tr_{B}[\rho_{AB}] = \sum_{j = 0}^2 (I_A \otimes \bra{j}_B) \rho_{AB} (I_A \otimes \ket{j}_B).
		\end{equation}
		Recall that $\{\ket{j}: j = 0, 1, 2\}$ forms a computational basis of $\mathcal{H}_3$. Also, $\tr[A]$ denotes trace of matrix $A$ throughout this article. 
		
		A quantum channel carries quantum information from one site to another. Mathematically, a quantum channel is a completely positive and trace preserving linear map which transforms a density matrix to another density matrix \cite{kraus1971general}. A quantum channel is represented by a set of Kraus operators $\{K_1, K_2, \dots K_n\}$. The Kraus operators representing a quantum channel carrying a single qutrit state to another qutrit state are expressed by $3 \times 3$ matrices. Also, they should satisfy
		\begin{equation}\label{Kraus operators}
		   	\sum_{j = 1}^{n} K^{\dagger}_j K_j =I_3,
		\end{equation}
		where $I_3$ denotes the identity matrix of order $3$, and $n \leq 9$. Transferring a quantum state with density matrix $\rho$ via a quantum channel, we obtain a new state with density matrix 
		\begin{equation} \label{kraus}
		   	\rho' = \sum_{j = 1}^{n} K_j \rho K^{\dagger}_j.
		\end{equation}
		  	
		In this article, we utilize the Weyl operators for constructing quantum channels \cite{dutta2023qudit}. The Weyl operators are unitary operators. There are nine Weyl operators \cite{bennett1993teleporting, bertlmann2008bloch} for the qutrit system, which are represented by
		\begin{equation}
			{W}_{r,s} = \sum_{i=0}^{2} \omega^{ir} \ket{i} \bra{i \oplus s} ~\text{for}~ 0 \leq r,s \leq 2,
		\end{equation}
		where $\oplus$ denotes addition modulo $3$, throughout this article. We can mention them in matrix form as follows:
		\begin{equation} \label{Weyl operator}
			\small
			\begin{split}
				& W_{0,0} = \begin{bmatrix}
					1 & 0 & 0\\
					0 & 1 & 0\\ 
					0 & 0 & 1\\
				\end{bmatrix},
				W_{0,1} = \begin{bmatrix}
					0 & 1 & 0\\
					0 & 0 & 1\\ 
					1 & 0 & 0\\
				\end{bmatrix},
				W_{0,2} = \begin{bmatrix}
					0 & 0 & 1\\
					1 & 0 & 0\\ 
					0 & 1 & 0\\
				\end{bmatrix},\\
				&W_{1,0} = \begin{bmatrix}
					1 & 0 & 0\\
					0 & \omega & 0\\ 
					0 & 0 & \omega^2\\
				\end{bmatrix},
				W_{1,1} = \begin{bmatrix}
					0 & 1 & 0\\
					0 & 0 & \omega\\ 
					\omega^2 & 0 & 0\\
				\end{bmatrix},
				W_{1,2} = \begin{bmatrix}
					0 & 0 & 1\\
					\omega & 0 & 0\\ 
					0 & \omega^2 & 0
				\end{bmatrix},\\
				&W_{2,0} = \begin{bmatrix}
					1 & 0 & 0\\
					0 & \omega^2 & 0\\ 
					0 & 0 & \omega
				\end{bmatrix},
			W_{2,1} = \begin{bmatrix}
					0 & 1 & 0\\
					0 & 0 & \omega^2\\ 
					\omega & 0 & 0
				\end{bmatrix},
				W_{2,2} = \begin{bmatrix}
					0 & 0 & 1\\
					\omega^2 & 0 & 0\\ 
					0 & \omega & 0
				\end{bmatrix}.
			\end{split}
		\end{equation}
		It is clear that ${W}_{r,s}$ are the unitary operators for all $r$ and $s$, because ${W^{\dagger}}_{r,s} {W}_{r,s} = {W}_{r,s} {W^{\dagger}}_{r,s} = I_3$.  
	    	
    	At the end of quantum teleportation, we compare the initial and the final states using the quantum  teleportation fidelity. Our initial state is a pure state $\ket{\phi}$. After teleportation we get a mixed state $\rho$. The fidelity between $\ket{\phi}$ and $\rho$ is defined by \cite{mcmahon2007quantum}
	    \begin{equation}
	    	F(\ket{\phi}, \rho) = \sqrt{\braket{\phi | \rho | 	\phi}}.
	    \end{equation}

	\section{Quantum hypergraph states with qutrits}
	\label{Quantum_hypergraph_states_with_qutrits}

		A hypergraph \cite{bretto2013hypergraph} is a generalization of combinatorial graphs. We denote a hypergraph by $H = (V(H), E(H))$ which is a combination of a set of vertices $V(H)$ and a set of hyperedges $E(H) \subset \mathcal{P}(V(H))$, where $\mathcal{P}(V(H))$ denotes the power set of $V(H)$. A hyperedge $e = (v_1, v_2, \dots v_n)$ is a non-empty subset of the vertex set $V(H)$ consists of the vertices $v_1, v_2, \dots v_n$. The number of vertices $n$ in a hyperedge $e$ is called the cardinality of $e$. A loop is a hyperedge with cardinality $1$. Pictorially, we denote a vertex by a dot marked by a vertex labeling. We represent a hyperedge by a closed curve surrounding the vertices in it.

		For our teleportation protocol, we need three-qutrit entangled hypergraph states, which correspond the connected hypergraphs with three vertices. Therefore, the vertex set of all the hypergraphs considered in this article is $V(H) = \{0, 1, 2\}$. There are $2^3 = 8$ possible hyperedges with at most three vertices. A hyperedge may or may not be selected to construct a hypergraph. Thus, there are at most $64$ hypergraphs with three vertices. Many of them are isomorphic to each other, or may not be connected. In the construction of hypergraph states, a loop contributes a local $Z$-gate operation. We observe that there are only five non-isomorphic and connected hypergraphs with three vertices having no loop. The quantum hypergraph states corresponding to these hypergraphs are utilized in this article. We depict them in Figure \ref{Hypergraph}. 
		
		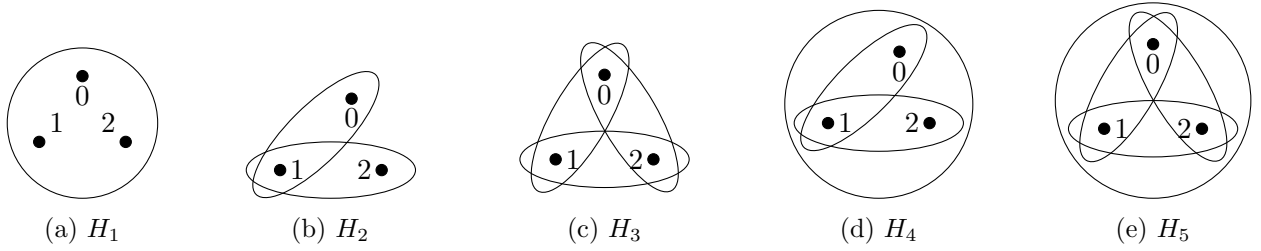
\begin{figure}
			\centering
			\begin{subfigure}{0.16\textwidth}
				\centering
				\begin{tikzpicture}[scale=0.25]
					\draw (0,0) circle [radius= 4];
					\draw [fill] (0,2.5) circle [radius=0.3] ;
					\node[below] at (0,2.5) {$0$};
					\draw [fill] (2.3,-1) circle [radius=0.3] ;
					\node[above left] at (2.3,-1) {$2$};
					\draw [fill] (-2.3,-1) circle [radius=0.3];
					\node[above right] at (-2.3,-1) {$1$};
				\end{tikzpicture}
				\caption{$H_1$}
				\label{fig:$H1$}
			\end{subfigure}
			\begin{subfigure}{0.2\textwidth}
				\centering
				\begin{tikzpicture}[scale=0.25]
					\draw (2,2) ellipse (4.5cm and 1.5cm);
					\draw[rotate=45] (3.6,1.9) ellipse (4.5cm and 1.5cm);
					\draw [fill] (-0.7,2) circle [radius=0.3] ;
					\node[right] at (-0.7,2) {$1$};
					\draw [fill] (4.7,2) circle [radius=0.3] ;
					\node[left] at (4.7,2) {$2$};
					\draw [fill] (3.1,5.8) circle [radius=0.3] ;
					\node[below] at (3.1,5.8) {$0$};
				\end{tikzpicture}
				\caption{$H_2$}
				\label{fig:$H2$}
			\end{subfigure}
			\begin{subfigure}{0.2\textwidth}
				\centering
				\begin{tikzpicture}[scale=0.25]
					\draw (2,2) ellipse (4.5cm and 1.5cm);
					\draw[rotate=60] (4,1.5) ellipse (4.5cm and 1.5cm);
					\draw [fill] (-0.6,2) circle [radius=0.3] ;
					\node[right] at (-0.6,2) {$1$};
					\draw [fill] (4.6,2) circle [radius=0.3] ;
					\node[left] at (4.6,2) {$2$};
					\draw [fill] (2,6.5) circle [radius=0.3] ;
					\node[below] at (2,6.5) {$0$};
					\draw[rotate=120] (2.01,-5) ellipse (4.5cm and 1.5cm);			
				\end{tikzpicture}
				\caption{$H_3$}
				\label{fig:$H3$}
			\end{subfigure}
			\begin{subfigure}{0.2\textwidth}
				\centering
				\begin{tikzpicture}[scale=0.25]
					\draw (2,2) ellipse (4.5cm and 1.5cm);
					\draw[rotate=45] (3.6,1.9) ellipse (4.5cm and 1.5cm);
					\draw [fill] (-0.7,2) circle [radius=0.3] ;
					\node[right] at (-0.7,2) {$1$};
					\draw [fill] (4.7,2) circle [radius=0.3] ;
					\node[left] at (4.7,2) {$2$};
					\draw [fill] (3.1,5.8) circle [radius=0.3] ;
					\node[below] at (3.1,5.8) {$0$};
					\draw (2,3) circle [radius=5];
				\end{tikzpicture}
				\caption{$H_4$}
				\label{fig:$H4$}
			\end{subfigure}
			\begin{subfigure}{0.2\textwidth}
				\centering
				\begin{tikzpicture}[scale=0.25]
					\draw (2,2) ellipse (4.5cm and 1.5cm);
					\draw[rotate=60] (4,1.5) ellipse (4.5cm and 1.5cm);
					\draw [fill] (-0.6,2) circle [radius=0.3] ;
					\node[right] at (-0.6,2) {$1$};
					\draw [fill] (4.6,2) circle [radius=0.3] ;
					\node[left] at (4.6,2) {$2$};
					\draw [fill] (2,6.5) circle [radius=0.3] ;
					\node[below] at (2,6.5) {$0$};
					\draw[rotate=120] (2.01,-5) ellipse (4.5cm and 1.5cm);		
					\draw (2,3.5) circle [radius=5.2];			
				\end{tikzpicture}
				\caption{$H_5$}
				\label{fig:$H5$}
			\end{subfigure}
			\caption{Hypergraphs representing 3 qutrit entangled states used in the article.}
			\label{Hypergraph}
		\end{figure}
		
		To construct a quantum hypergraph state from a hypergraph, we assign a $\ket{+}$ state corresponding to all the vertices which is defined in equation (\ref{plus_state}). Hence, the composite state is 
		\begin{equation}\label{+++}
			\begin{split}
				\ket{+++} = & \ket{+} \otimes \ket{+} \otimes \ket{+} \\
				= & \frac{1}{3\sqrt{3}} (\ket{000} + \ket{001} + \ket{002} + \ket{010} + \ket{011} + \ket{012} + \ket{020} + \ket{021} + \ket{022} + \ket{100}\\
				& + \ket{101} + \ket{102} + \ket{110}+\ket{111}+\ket{112}+\ket{120}+\ket{121}+\ket{122}+\ket{200}+\ket{201}+\ket{202}\\
				& +\ket{210}+\ket{211}+\ket{212}+ \ket{220}+ \ket{221}+ \ket{222}) \\
				= & \frac{1}{3\sqrt{3}} \sum_{q_0 = 0}^{2} \sum_{q_1 = 0}^{2} \sum_{q_2 = 0}^{2} \ket{q_0, q_1, q_2},
			\end{split}
		\end{equation}
		where $q_0, q_1$ and $q_2$ are the computational basis states of qutrit system corresponding to the vertices $0$, $1$, and $2$, respectively.  
		
		Now, we apply non-local operations on $\ket{+++}$ corresponding to the hyperedges to generate entangled states, which are different types of controlled-$Z^{(3)}$ operators. The standard generalization of Pauli $\sigma_z$ operator \cite{rossi2013quantum, volya2023qudcom} applicable on single qutrit state is given by
		\begin{equation}
			Z^{(3)} = \begin{bmatrix}
				1 & 0 & 0 \\
				0 & \omega & 0 \\
				0 & 0 & \omega^2
			\end{bmatrix} = \sum^{2} _{j=0}  \omega^j  \ket{j} \bra{j},
		\end{equation}
		where $\omega = e^{2\pi \iota/3}$ is the complex cubic root of $1$ that is $\omega = \left(\frac{-1 + \iota \sqrt{3}}{2}\right), \omega^2 = \left(\frac{-1 - \iota \sqrt{3}}{2}\right)$ and $\omega^3 = 1$.
		
		As there is no loop in our hypergraphs and they have three vertices, a hyperedge in them contains either two vertices or three vertices. Given any hyperedge $e = (c, t)$ with two vertices, we apply a two-qutrit controlled-$Z$ gate which is defined by a block diagonal matrix $CZ^{(3)} = \diag\{I_3, I_3, Z^{(3)}\}$ applicable on a state $\ket{c} \otimes \ket{t} \in \mathcal{H}_3^{(2)}$. It assumes the first qutrit $\ket{c}$ as control and the second qutrit $\ket{t}$ as target. Note that,
		\begin{equation}\label{controlled_z_for_cardinality_2}
			CZ^{(3)} \ket{c} \otimes \ket{t} = \begin{cases}
				\omega^t \ket{c} \otimes \ket{t} & ~\text{if}~ c = 2; \\ \ket{c} \otimes \ket{t} & ~\text{otherwise}.
			\end{cases}
		\end{equation}
		Note that, $CZ^{(3)}$ is applicable on two qutrits of $\ket{+++}$ depending on the hyperedge. We apply identity operation on the remaining qutrit
		
		A hypergraph may also contain a hyperedge with three vertices. We apply the control-control-$Z$ ($CCZ^{(3)}$) operation corresponding to them. It is defined by a block diagonal matrix $CCZ^{(3)} = \diag\{I_3, I_3, \dots, I_3(8\text{-times}), Z^{(3)}\}$. Note that,
		\begin{equation}\label{controlled_z_for_cardinality_3} 
			CCZ^{(3)} \ket{c_1} \otimes \ket{c_2} \otimes \ket{t} = \begin{cases}
				\omega^t \ket{c_1} \otimes \ket{c_2} \otimes \ket{t} & ~\text{if}~ c_1 = c_2 = 2; \\ \ket{c_1} \otimes \ket{c_2} \otimes \ket{t} & ~\text{otherwise}.
			\end{cases}
		\end{equation}
				
		The operators $CZ^{(3)}$, and $CCZ^{(3)}$ are commuting block diagonal matrices. Therefore, we can apply them on $\ket{+++}$, in any order. The resultant state depends on the existence of a particular hypereedges in the corresponding hypergraphs. After applying all the controlled-$Z^{(3)}$ operations corresponding to the hyperedges available in the hypergraph $H$ on $\ket{+++}$ we get the hypergraph state \cite{gachechiladze2016extreme, rossi2013quantum, xiong2018qudit, dutta2019permutation, morimae2017verification}
		\begin{equation}
			\ket{H} = \frac{1}{3\sqrt{3}} \sum_{q_0 = 0}^{2} \sum_{q_1 = 0}^{2} \sum_{q_2 = 0}^{2} h(q_0, q_1, q_2) \ket{q_0, q_1, q_2},
		\end{equation}
		where $h$ is a three variable function $h: \{0, 1, 2\}^{\times 3} \rightarrow \{1, \omega, \omega^2\}$.    
		
		Recall that we consider only five hypergraphs and their corresponding states which are $H_1, H_2, H_3, H_4$, and $H_5$, where $V(H_1) = V(H_2) = V(H_3) = V(H_4) = V(H_5) = \{0, 1, 2\}$. Also, $E(H_1) = \{(0, 1, 2)\}$, $E(H_2) = \{(0, 1), (1, 2)\}$, $E(H_3) = \{(0, 1), (1, 2), (2, 0)\}$, $E(H_4) = \{(0, 1), (1, 2), (0, 1, 2)\}$, and $E(H_5) = \{(0, 1), (1, 2), (2, 0), (0, 1, 2)\}$. Applying different combinations of $CZ^{(3)}$ and $CCZ^{(3)}$ operators on $\ket{+++}$ we construct the corresponding hypergraph states, which are mentioned below:
		\begin{equation} \label{h1}
			\small
			\begin{split}
				\ket{H_1}= & \frac{1}{3\sqrt{3}}(\ket{000}+\ket{001}+\ket{002}+\ket{010}+\ket{011}+\ket{012}+\ket{020}+\ket{021}+\ket{022}+\ket{100}+\ket{101}\\
				& +\ket{102} +\ket{110} +\ket{111}+\ket{112}+\ket{120}+\ket{121}+\ket{122}+\ket{200}+\ket{201}+\ket{202}+\ket{210}\\
				& +\ket{211}+\ket{212}+\ket{220} + \omega \ket{221}+ \omega^2\ket{222}).
			\end{split}
		\end{equation} 
		\begin{equation} \label{h2}
			\small
			\begin{split}
				\ket{H_2}= & \frac{1}{3\sqrt{3}}(\ket{000}+\ket{001}+\ket{002}+\ket{010}+\ket{011}+\ket{012}+\ket{020}+ \omega \ket{021}+ \omega^2 \ket{022}+\ket{100}\\
				& +\ket{101} +\ket{102} +\ket{110}+\ket{111}+\ket{112}+\ket{120}+ \omega \ket{121}+ \omega^2 \ket{122}+\ket{200}+\ket{201}\\
				& +\ket{202}+ \omega \ket{210}+ \omega \ket{211} + \omega \ket{212}+ \omega^2 \ket{220}+\ket{221}+ \omega \ket{222}.
			\end{split}
		\end{equation}
		\begin{equation} \label{h3}
			\small
			\begin{split}
				\ket{H_3}= & \frac{1}{3\sqrt{3}}(\ket{000}+\ket{001}+\ket{002}+\ket{010}+\ket{011}+\ket{012}+\ket{020}+ \omega \ket{021}+ \omega^2 \ket{022}+\ket{100}\\
				& +\ket{101}+ \omega \ket{102} +\ket{110}+\ket{111}+ \omega \ket{112}+ \ket{120}+ \omega^2 \ket{121}+ \ket{122}+\ket{200}+ \ket{201}\\
				& + \omega^2 \ket{202}+ \omega \ket{210}+ \omega \ket{211} + \ket{212}+ \omega^2 \ket{220}+ \ket{221}+ \ket{222}.
			\end{split}
		\end{equation}
		\begin{equation} \label{h4}
			\small
			\begin{split}
				\ket{H_4}= & \frac{1}{3\sqrt{3}}(\ket{000}+\ket{001}+\ket{002}+\ket{010}+\ket{011}+\ket{012}+\ket{020}+ \omega \ket{021}+ \omega^2 \ket{022}+\ket{100}\\
				& +\ket{101}+\ket{102} +\ket{110}+\ket{111}+\ket{112}+\ket{120}+ \omega \ket{121}+ \omega^2 \ket{122}+\ket{200}+\ket{201}\\
				& +\ket{202}+ \omega \ket{210}+ \omega \ket{211} + \omega \ket{212}+ \omega^2 \ket{220}+ \omega \ket{221}+\ket{222}.
			\end{split}
		\end{equation}
		\begin{equation} \label{h5}
			\small
			\begin{split}
				\ket{H_5}= & \frac{1}{3\sqrt{3}}(\ket{000}+\ket{001}+\ket{002}+\ket{010}+\ket{011}+\ket{012}+\ket{020}+ \omega \ket{021}+ \omega^2 \ket{022}+\ket{100}\\
				& +\ket{101}+ \omega \ket{102} +\ket{110}+\ket{111}+ \omega \ket{112}+ \ket{120}+ \omega^2 \ket{121}+ \ket{122}+\ket{200}+ \ket{201}\\
				& + \omega^2 \ket{202}+ \omega \ket{210}+ \omega \ket{211} + \ket{212}+ \omega^2 \ket{220}+ \omega \ket{221}+ \omega^2 \ket{222}.
			\end{split}
		\end{equation}

	\section{Schemes of teleportation}
		
		Now, we are in a position to discuss our quantum teleportation procedure. It involves three parties, which are Alice, Bob, and Dave. Dave prepares a quantum hypergraph state and transmit to Alice and Bob via a noisy quantum channel. After distribution of the state between Alice and Bob, Alice wants to sends an unknown state $\ket{\phi}_{a}$ to Bob. In general we assume that $\ket{\phi}_{a} = \ket{\phi}$, which is mentioned in equation (\ref{quantum state vector}). The density matrix of $\ket{\phi}_{a}$ is $\rho_{in} = \ket{\phi}_a\bra{\phi}$. Below is our procedure for quantum teleportation:
		
		\begin{procedure}\label{teleportation_procedure}
			Our teleportation protocol  consists of the following steps:
			\begin{enumerate}[Step 1:]
				\item \label{step_1}
					Dave creates one of the qutrit entanglement states $\ket{H_{i}}_{123}$, where $i=1,2,3,4,5$ mentioned in equation (\ref{h1}), (\ref{h2}), (\ref{h3}), (\ref{h4}), and (\ref{h5}). The density matrix of $\ket{H_{i}}_{123}$ is 
					\begin{equation} \label{H_i}
						\hat{\rho_{i}} = \ket{H_{i}}_{123} \bra{H_{i}}.
					\end{equation}	 
				\item \label{step_2}
					Dave transmits particles $0$ and $1$ to Alice and particle $2$ to Bob via a noisy quantum channel. We assume that same noise acts on all qutrits. Applying equation (\ref{kraus}), on the density matrix in equation (\ref{H_i}) we find a new density matrix
					\begin{equation}
						\hat{\hat{\rho_i}}=\sum_{j=1} K_j \hat{\rho_{i}} K_j^{\dagger}, \{i=1, 2, 3, 4, 5\}.
					\end{equation} 
					Thus, state of the whole system is
					\begin{equation}
						\rho_{H_i} = \rho_{in} \otimes \hat{\hat{\rho_i}}.
					\end{equation}
				\item \label{step_3}
					Using the following four orthogonal qutrit states $\{\ket{\psi_l}, l = 1, 2, 3, 4\}$ \cite{hu2020experimental}, Alice performs a von-Neumann measurement on her own qutrits $a, 0, 1$, with respect to the following orthogonal states: 
					\begin{equation}\begin{aligned} \label{orthogonal}
							\ket{\psi_1}=\frac{1}{\sqrt{3}}\Big[\ket{000} + \ket{111} + \ket{222}\Big], \ket{\psi_2}=\frac{1}{\sqrt{3}}\Big[\ket{000} + \omega \ket{111} + \omega^2 \ket{222}\Big],\\ \ket{\psi_3}=\frac{1}{\sqrt{3}}\Big[\ket{012} + \ket{120} + \ket{201}\Big], \ket{\psi_4}=\frac{1}{\sqrt{3}}\Big[\ket{012} + \omega \ket{120} + \omega^2 \ket{201}\Big].
						\end{aligned}
					\end{equation}
					
					 The density matrices which are generated after the measurement are
					\begin{equation} \label{M_l}
						\rho_l = \frac{M_l\rho_{H_i} M^{\dagger}_l}{\tr[M^{\dagger}_l M_l \rho_{H_i}]},
					\end{equation}
					where $M_l = \ket{\psi_l} \bra{\psi_l}$ are the measurement operators. Therefore, after these measurements the state $\rho_{H_i}$ is shifted to one of $\rho_l$ for $l = 1, 2, 3, 4$ with a assured probability.\\
				\item \label{step_4}
					Alice informs the measurement result to Bob using classical communication. The state of Bob is
					\begin{equation}
						\rho_{B_l} = \tr_{a01} [\rho_l] = \frac{\tr_{a01}[M_l \rho_{H_i} M^{\dagger}_l]}{\tr[M^{\dagger}_l M_l \rho_{H_i}]},
					\end{equation}
					where $\tr_{a01}$ stands for the partial trace over the particles $(a, 0, 1)$.
				\item \label{step_5}
					On his particle $2$, Bob applies a unitary operator $U_l=\Lambda_l, l = 1, 2, 3, 4$ \cite{zhang2023efficiency} which are 
					\begin{equation} \label{unitary}
						\Lambda_1 = \begin{bmatrix}
							1&0&0\\
							0&1&0\\ 
							0&0&1\\
						\end{bmatrix},
						\Lambda_2 = \begin{bmatrix}
							1&0&0\\
							0&-1&0\\ 
							0&0&1\\
						\end{bmatrix},
						\Lambda_3 = \begin{bmatrix}
							0&1&0\\
							1&0&0\\ 
							0&0&1\\
						\end{bmatrix},
						\Lambda_4 = \begin{bmatrix}
							0&-1&0\\
							1&0&0\\ 
							0&0&1\\
						\end{bmatrix}.
					\end{equation}
					Therefore, Bob's final state is\\
					\begin{equation} \label{output}
						\widetilde{\rho}_{B_l} = U_l \rho_{B_l} U^{\dagger}_l = \frac{U_l\tr_{a01}[M_l \rho_{H_i} M^{\dagger}_l]U^{\dagger}_l}{\tr[M^{\dagger}_l M_l \rho_{H_i}]}.
					\end{equation}
					To quantify the efficiency of the procedure, we calculate the teleportation fidelity, which measures the overlap between the initial and final states. Depending on the hypergraph state $\ket{H_i}$, the quantum teleportation fidelity is determined by  
					\begin{equation} \label{fidelity}
						F_{H_i} = \sum^{4}_{l=1} P_l F_l,
					\end{equation}
					where $P_l = \tr[M^{\dagger}_l M_l \rho_{H_i}]$ and $F_l = \bra{\phi}_a \widetilde{\rho}_{B_l} \ket{\phi}_a$ for $i = 1, 2, \dots 5$.
			\end{enumerate}
		\end{procedure}

	\section{Noisy quantum channels for 3-qutrit systems and quantum teleportation}
       
       In this section, we construct 3-qutrit quantum channels to share a state between Alice and Bob, which is prepared by Dave. A number of channels were initially defined for qubtis. We generalized them for qutrits. We consider all five hypergraph states for any channel. Then we develop the analytic expression of quantum teleportation fidelity. For transmitting a three qutrit state we need a tensor product structure in the quantum channel, which we discuss in the following lemma. 	
       
       \begin{lemma}\label{multiqutrit_channel_lemma}
       	Let $\{K_i: i=1, 2, \dots, 9\}$ forms a set of Kraus operators for a qutrit quantum channel. Define $K_{i_1} = K_{r,s} \otimes I_3 \otimes I_3, K_{i_2} = I_3 \otimes K_{r,s} \otimes I_3, K_{i_3} = I_3 \otimes I_3 \otimes K_{r,s}$ where $r = s = 0, 1, 2$. Then the set of operators $\{K_{i_1}, K_{i_2}, K_{i_3} : {i_1} = {i_2} = {i_3} = 1, 2, \dots, 9\}$ forms a set of Kraus operators for three qutrit quantum state.
       \end{lemma}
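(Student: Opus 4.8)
The plan is to reduce the statement to a single completeness check. By the definition of a quantum channel recalled around equation (\ref{Kraus operators}), a family of $27 \times 27$ matrices qualifies as Kraus operators for a three-qutrit channel exactly when the sum of $K^{\dagger} K$ over the family equals the $27$-dimensional identity $I_3 \otimes I_3 \otimes I_3$; complete positivity and trace preservation of the induced map (\ref{kraus}) then follow automatically from the operator-sum form. So the whole proof amounts to evaluating one operator sum, and there is nothing to verify about positivity separately.

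First I would isolate the two elementary algebraic facts that carry the argument. One is the interaction of the adjoint with the tensor product, $(A \otimes B \otimes C)^{\dagger} = A^{\dagger} \otimes B^{\dagger} \otimes C^{\dagger}$, combined with the mixed-product rule $(A \otimes B \otimes C)(A' \otimes B' \otimes C') = AA' \otimes BB' \otimes CC'$; together these give $(A \otimes B \otimes C)^{\dagger}(A \otimes B \otimes C) = A^{\dagger}A \otimes B^{\dagger}B \otimes C^{\dagger}C$. The other is bilinearity of $\otimes$, which lets a summation in a single tensor slot be pulled inside, so that $\sum_i (K_i^{\dagger} K_i) \otimes I_3 \otimes I_3 = \left(\sum_i K_i^{\dagger} K_i\right) \otimes I_3 \otimes I_3$.

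Next I would apply these facts to each of the three single-site families in turn. For the first family $K_{i_1} = K_i \otimes I_3 \otimes I_3$, the product rule gives $K_{i_1}^{\dagger} K_{i_1} = (K_i^{\dagger} K_i) \otimes I_3 \otimes I_3$; summing over $i$ and invoking the single-qutrit completeness relation $\sum_i K_i^{\dagger} K_i = I_3$ from equation (\ref{Kraus operators}) collapses this to $I_3 \otimes I_3 \otimes I_3$. The identical computation on the second slot ($I_3 \otimes K_i \otimes I_3$) and the third slot ($I_3 \otimes I_3 \otimes K_i$) each returns $I_3 \otimes I_3 \otimes I_3$, using unitarity of the Weyl-based operators $W_{r,s}$ exactly where it is needed.

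The step I expect to be the main obstacle is bookkeeping of the overall normalization, not the tensor algebra. Summing $K_{i_1}^{\dagger} K_{i_1} + K_{i_2}^{\dagger} K_{i_2} + K_{i_3}^{\dagger} K_{i_3}$ over all $i$ yields $3\,(I_3 \otimes I_3 \otimes I_3)$ rather than the identity, so the three single-site families cannot be complete as written without a corrective weight. I would resolve this by either attaching a factor $\tfrac{1}{\sqrt{3}}$ to every operator, so that each $K^{\dagger}K$ carries $\tfrac{1}{3}$ and the three contributions average to $I_3 \otimes I_3 \otimes I_3$, or, if the intended physical channel is the same noise acting independently on each qutrit, by replacing the single-site operators with the full tensor family $\{K_i \otimes K_j \otimes K_k\}$, whose completeness factorizes cleanly as $\left(\sum_i K_i^{\dagger} K_i\right) \otimes \left(\sum_j K_j^{\dagger} K_j\right) \otimes \left(\sum_k K_k^{\dagger} K_k\right) = I_3 \otimes I_3 \otimes I_3$ with no extra constant. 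In every case the core verification is the same two-line tensor computation; the only genuine care required is fixing the normalization convention so that the completeness relation lands precisely on $I_3 \otimes I_3 \otimes I_3$.
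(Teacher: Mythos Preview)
Your approach is exactly the paper's: reduce the claim to the completeness relation and compute each single-slot sum using the mixed-product rule and the single-qutrit completeness $\sum_j K_j^\dagger K_j = I_3$. The paper's proof is the one-line observation that the three sums together give $I_3\otimes I_3\otimes I_3$.

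You are right, and in fact more careful than the paper, about the normalization. As stated, each of the three slot-wise sums already equals $I_3\otimes I_3\otimes I_3$, so the total is $3\,I_3\otimes I_3\otimes I_3$; the paper's proof asserts the identity without the factor of $3$ and is, taken literally, incorrect. The resolution you propose---attaching $1/\sqrt{3}$ to every operator---is precisely what the paper does in practice: in each subsection the three-qutrit Kraus operators are written with an explicit overall $\sqrt{1/3}$ (e.g.\ $K_1=\sqrt{(1-p)/3}\,(K_{0,0}\otimes I_3\otimes I_3)$, etc.), which restores completeness. So your argument is the intended one, with the bookkeeping made explicit where the paper's lemma and its proof left it implicit.
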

       
       \begin{proof}
       	Let $K_1, K_2, \dots, K_n$ where $n \leq 9$ are the Kraus operators each of dimension $3$ representing a single qutrit quantum channel. Therefore $\sum_{j=1}^{n} K_j^\dagger K_j = I_3$. For three qutrit system observe that $\sum_{i_1 =1}^{n} K_{i_1}^\dagger K_{i_1} + \sum_{i_2=1}^{n} K_{i_2}^\dagger K_{i_2} + \sum_{i_3 = 1}^{n} K_{i_3}^\dagger K_{i_3} = I_3 \otimes I_3 \otimes I_3$.
       \end{proof}
       
       Now we present teleportation fidelity which depends on noisy channels, hypergraph states, and the state to teleport.
       
		\subsection{Qutrit-flip noise and quantum teleportation:}\label{Qutrit-flip noise and quantum teleportation} 
			We generalize the bit-flip noise for qutrits and call it qutrit-flip noise. Recall that the bit-flip noise is represented by
			\begin{center}
				\begin{tabular}{c c}
					$K_1 = \sqrt{1-p} I_2,$ & $K_2 =\sqrt{p} \sigma_x$.
				\end{tabular}
			\end{center} 
			It changes qubit $\ket{0}$ to $\ket{1}$ or $\ket{1}$ to $\ket{0}$ with probability $p$ as well as keep them unchanged with probability $(1 - p)$. 
			
			Similarly, a qutrit-flip noise flips the state $\ket{i}$ to the states $\ket{i \oplus 1}$, and $\ket{i \oplus 2}$ with probability $p$ for $i = 0, 1, 2$ as well as keep them unchanged with probability $(1 - p)$. The corresponding Kraus operators are \cite{fonseca2019high, dutta2023qudit, fortes2015fighting}
      \begin{equation} \label{Qutrit flip noise}
      	K_{0,s} = \begin{cases}
      		\sqrt{\frac{1-p}{3}} I_3 & \text{when}~ r = 0, s = 0, \\
      		\sqrt{\frac {p} {6}} W_{0,s} & \text{for}~  r = 0, 1\leq s \leq 2,
      	\end{cases}
      \end{equation}
      where the Weyl operators are $W_{0,0}, W_{0,1}, \text{and}~ W_{0,2}$ define in equation (\ref{Weyl operator}). Using Lemma \ref{multiqutrit_channel_lemma} we can construct a three-qutrit channel which can be described by the following Kraus operators:

      \small
      \begin{tabular}{p{5.5cm} p{5.5cm} p{5.5cm}}
      	\centering
      	$K_1 = \sqrt{\frac{1-{p}}{3}} (K_{0,0} \otimes I_3 \otimes I_3)$, & $K_2 = \sqrt{\frac{{p}}{6}} (K_{0,1} \otimes I_3 \otimes I_3)$, & $K_3=\sqrt{\frac{{p}}{6}} (K_{0,2} \otimes I_3 \otimes I_3)$,\\
      	$K_4=\sqrt{\frac{1-{p}}{3}} (I_3 \otimes K_{0,0} \otimes I_3)$, & $K_5=\sqrt{\frac{{p}}{6}} (I_3 \otimes K_{0,1} \otimes I_3)$, & $K_6=\sqrt{\frac{{p}}{6}} (I_3 \otimes K_{0,2} \otimes I_3)$,\\
      	$K_7=\sqrt{\frac{1-{p}}{3}} (I_3 \otimes I_3 \otimes K_{0,0})$, & $K_8=\sqrt{\frac{{p}}{6}} (I_3 \otimes I_3 \otimes K_{0,1})$, & $K_9=\sqrt{\frac{{p}}{6}} (I_3 \otimes I_3 \otimes K_{0,2})$.
		\end{tabular}
      
		Recall that $\hat{\rho_i}$ is a density matrix of any one of our hypergraph states $\ket{H_i}$ for $i = 1, 2, \dots 5$, which is generated by Dave in Step \ref{step_1} of our teleportation procedure. Now in Step \ref{step_2} Dave transfers the quantum state $\hat{\rho_i}$ via qutrit-flip noise channel. It generates the new quantum state
		\begin{equation} \label{evaluation 1}
			\hat{\hat{\rho_i}}=\sum_{j=1}^{9} K_j \hat{\rho_{i}} K_j^{\dagger}, ~\text{for}~ i=1, 2, 3, 4, 5.
		\end{equation}
		Therefore after transportation of hypergraph states via the noisy quantum channel the combined state can by written as 
		\begin{equation} \label{H1}
			\rho_{H_i} = \rho_{in} \otimes \hat{\hat{\rho_i}},
		\end{equation}
		where $\rho_{in}$ is the density matrix of $\ket{\phi}_a$ which Alice wants to transfer to Bob, and mentioned in equation  (\ref{quantum state vector}). 
		
		Alice performs a von Neumann measurement on her particles $(a, 0, 1)$ and sends the outcomes of her measurements to Bob via classical communication. Bob applies an unitary operator to restore the state of his particle $2$ to state $\ket{\phi}_{a}$. Then we calculate the teleportation fidelity, which can be expressed in terms of the parameters of the initial state and the parameters of noisy channel as follows: 
		\begin{equation}
			\small
			\begin{split}
			F_{H_{1}} = & \frac{5}{5184}[8 \cos (2 \text{$\theta $}_1)-162 \cos (4 \text{$\theta $}_1)+24 (2-3 \text{p}) \sin ^4(\text{$\theta $}_1) \sin (4 \text{$\theta $}_2)\\
			& +8 (2-15 \text{p}) \sin ^4(\text{$\theta $}_1) \cos (4 \text{$\theta $}_2) -16 (15 \text{p}+8) \sin ^3(\text{$\theta $}_1) \cos (\text{$\theta $}_1) \cos (3 \text{$\theta $}_2)\\
			& +32 (3 \text{p}+17) \sin ^3(\text{$\theta $}_1) \cos (\text{$\theta $}_1) \sin (3 \text{$\theta $}_2)+8 \sin ^2(\text{$\theta $}_1) \sin (2 \text{$\theta $}_2)((74-25 \text{p}) \cos (2 \text{$\theta $}_1)\\
			& -19 \text{p}+110) +16 (\text{p}+4) \sin ^2(2 \text{$\theta $}_1) \cos (2 \text{$\theta $}_2)+4 \sin (2 \text{$\theta $}_1) \cos (\text{$\theta $}_2) (-(19 \text{p}+8)\cos (2 \text{$\theta $}_1)\\
			& +11 \text{p}+40)+8 \sin (2 \text{$\theta $}_1) \sin (\text{$\theta $}_2) ((7-17 \text{p}) \cos (2 \text{$\theta $}_1)+13 \text{p}+41)-60 \text{p} \cos (2 \text{$\theta $}_1)\\
			& +47 \text{p} \cos (4 \text{$\theta $}_1)+13 \text{p}+666].
		\end{split}
		\end{equation}
		\begin{equation}
			\small
			\begin{split}
				F_{H_{2}} = & \frac{5}{2592}[16 (9 p+2) \sin ^2(\text{$\theta $}_1) \sin (2 \text{$\theta $}_2)+8 (3 p-2) \sin ^4(\text{$\theta $}_1) \cos (4 \text{$\theta $}_2)\\
				& +16 (4 p-7) \sin ^3(\text{$\theta $}_1) \cos (\text{$\theta $}_1) \cos (3 \text{$\theta $}_2)+16 (29-26 p) \sin ^3(\text{$\theta $}_1) \cos (\text{$\theta $}_1) \sin (3 \text{$\theta $}_2)\\
				& -104 (p-1) \sin ^2(2 \text{$\theta $}_1) \cos (2 \text{$\theta $}_2)+4 \cos (2 \text{$\theta $}_1) (8 (7 p-5) \sin ^2(\text{$\theta $}_1) \sin (2 \text{$\theta $}_2)+3 p-2)\\
				& +4 \sin (2 \text{$\theta $}_1) \cos (\text{$\theta $}_2) ((20 p-19) \cos (2 \text{$\theta $}_1)+11)\\
				& +4 \sin (2 \text{$\theta $}_1) \sin (\text{$\theta $}_2) ((19-18 p) \cos (2 \text{$\theta $}_1)-6 p+29)+(p-18) \cos (4 \text{$\theta $}_1)-13 p+282].
			\end{split}
		\end{equation}
		\begin{equation}
			\small
			\begin{split}
				F_{H_{3}} = & \frac{5}{5184}[-40 \cos (2 \text{$\theta $}_1)-78 \cos (4 \text{$\theta $}_1)+48 (p-1) \sin ^4(\text{$\theta $}_1) \sin (4 \text{$\theta $}_2)\\
				& +40 (p-2) \sin ^4(\text{$\theta $}_1) \cos (4 \text{$\theta $}_2)+16 (9 p-14) \sin ^3(\text{$\theta $}_1) \cos (\text{$\theta $}_1) \cos (3 \text{$\theta $}_2)\\
				& +16 (16-3 p) \sin ^3(\text{$\theta $}_1) \cos (\text{$\theta $}_1) \sin (3 \text{$\theta $}_2)+16 \sin ^2(\text{$\theta $}_1) \sin (2 \text{$\theta $}_2) ((16 p+7) \cos (2 \text{$\theta $}_1)+10 p+13)\\
				& +16 (7-8 p) \sin ^2(2 \text{$\theta $}_1) \cos (2 \text{$\theta $}_2)+4 \sin (2 \text{$\theta $}_1) \cos (\text{$\theta $}_2) ((41 p-38) \cos (2 \text{$\theta $}_1)-9 p+22)\\
				& +4 \sin (2 \text{$\theta $}_1) \sin (\text{$\theta $}_2) ((32-21 p) \cos (2 \text{$\theta $}_1)-43 p+64)+20 p \cos (2 \text{$\theta $}_1)\\
				& +59 p \cos (4 \text{$\theta $}_1)-79 p+630]. 
			\end{split}
		\end{equation}
		\begin{equation}
			\small
			\begin{split}
				F_{H_{4}} = & \frac{5}{5184}[-16 \cos (2 \text{$\theta $}_1)-60 \cos (4 \text{$\theta $}_1)-8 p \sin ^4(\text{$\theta $}_1) \sin (4 \text{$\theta $}_2)+8 (5 p-4) \sin ^4(\text{$\theta $}_1) \cos (4 \text{$\theta $}_2)\\
				&+112 (p-2) \sin ^3(\text{$\theta $}_1) \cos (\text{$\theta $}_1) \cos (3 \text{$\theta $}_2)+32 (17-12 p) \sin ^3(\text{$\theta $}_1) \cos (\text{$\theta $}_1) \sin (3 \text{$\theta $}_2)\\
			& +8 \sin ^2(\text{$\theta $}_1) \sin (2 \text{$\theta $}_2) ((63 p-52) \cos (2 \text{$\theta $}_1)+45 p-4)+16 (10-11 p) \sin ^2(2 \text{$\theta $}_1) \cos (2 \text{$\theta $}_2)\\
				& +4 \sin (2 \text{$\theta $}_1) \cos (\text{$\theta $}_2) ((35 p-38) \cos (2 \text{$\theta $}_1)-3 p+22)-8 \sin (2 \text{$\theta $}_1) \sin (\text{$\theta $}_2) ((18 p-19) \cos (2 \text{$\theta $}_1)\\
				& +6 p-29) +20 p \cos (2 \text{$\theta $}_1)+43 p \cos (4 \text{$\theta $}_1)-63 p+588]. 
			\end{split}
		\end{equation}
		\begin{equation}
			\small
			\begin{split}
				F_{H_{5}} = & \frac{5}{2592}[12 (5 p-4) \sin ^4(\text{$\theta $}_1) \sin (4 \text{$\theta $}_2)+4 (13 p+20) \sin ^2(\text{$\theta $}_1) \sin (2 \text{$\theta $}_2)\\
				& +8 (4 p-5) \sin ^4(\text{$\theta $}_1) \cos (4 \text{$\theta $}_2)+8 (10-9 p) \sin ^3(\text{$\theta $}_1) \cos (\text{$\theta $}_1) \cos (3 \text{$\theta $}_2)\\
				& +16 (13 p-7) \sin ^3(\text{$\theta $}_1) \cos (\text{$\theta $}_1) \sin (3 \text{$\theta $}_2)+4 (14-17 p) \sin ^2(2 \text{$\theta $}_1) \cos (2 \text{$\theta $}_2)\\
				& +4 \cos (2 \text{$\theta $}_1) ((31 p-4) \sin ^2(\text{$\theta $}_1) \sin (2 \text{$\theta $}_2)+4 p-5)+2 \sin (2 \text{$\theta $}_1) \cos (\text{$\theta $}_2) ((19 p-14) \cos (2 \text{$\theta $}_1)\\
				& +5 p-2) +4 \sin (2 \text{$\theta $}_1) \sin (\text{$\theta $}_2) ((p+7) \cos (2 \text{$\theta $}_1)-37 p+41)+(38 p-39) \cos (4 \text{$\theta $}_1)\\
				& -54 p+315].
			\end{split}
		\end{equation}
		
			Now, we specify a quantum state by fixing $\ket{\phi}_a = \ket{+}$, which is in equation (\ref{plus_state}). It makes the teleportation fidelity $F_{H_{1}}, F_{H_{2}}, \dots F_{H_{5}}$ linear functions of $p$. We plot them in Figure \ref{fig:Qutritflip noise}. It can be noted from the figure that $F_{H_{1}}$ increases monotonically with respect to $p$, which the other decreases. It indicates that the hypergraph state $H_{1}$ is more efficient than the others.
			\begin{figure}
				\includegraphics[scale = 0.76]{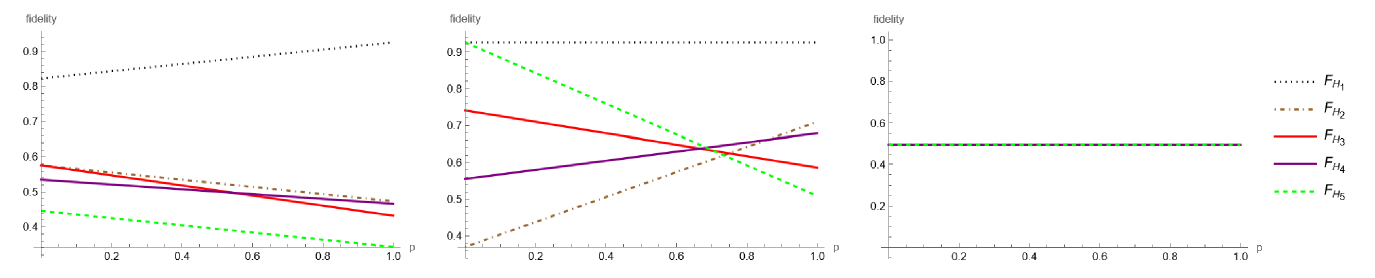}
				\caption{For all five hypergraph states we plot the teleportation fidelity with respect to the channel parameter $p$, in case of qutrit-flip noise. Three subfigues corresponds different choices of states $\ket{\phi}_a$ to teleport. We consider $\ket{\phi}_a = \ket{+}$, $\ket{\phi}_a = \frac{1}{\sqrt{2}}(\ket{0} + \ket{2})$, and $\ket{\phi}_a = \ket{0}$ in the left, middle, and right subfigure, respectively.}
				\label{fig:Qutritflip noise}
			\end{figure}
		   
			If we consider $\theta_1 = 3 \theta_2$, then the teleportation fidelity is a function of two variables, which are the state parameter $\theta_2$ and channel parameter $p$. With respect to the state and channel parameters we can plot the teleportation fidelity. The surface plots are available in Figure \ref{fig:Qutrit flip noise}.
			\begin{figure}[hbt!]
			   	\centering
			   	\begin{subfigure}{0.32\textwidth}
			   		\includegraphics[scale = .37]{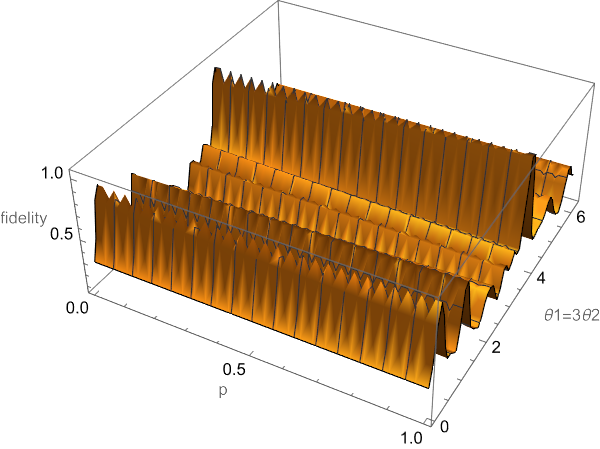}
			   		\caption{$F_{H_1}$}
			   	\end{subfigure}
			   	\begin{subfigure}{0.32\textwidth}
			   		\includegraphics[scale = .37]{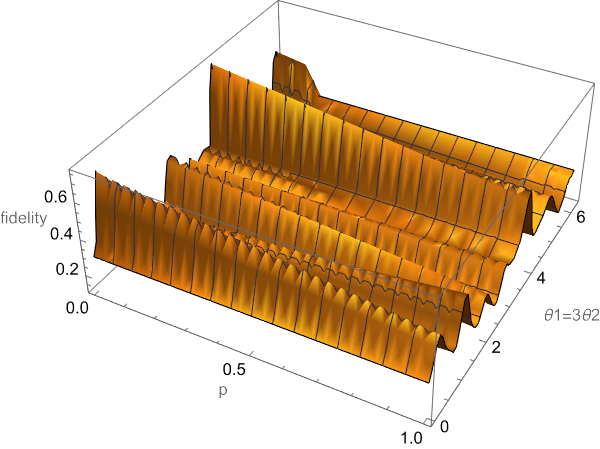}
			   		\caption{$F_{H_2}$}
			   	\end{subfigure}
			   	\begin{subfigure}{0.32\textwidth}
			   		\includegraphics[scale = .35]{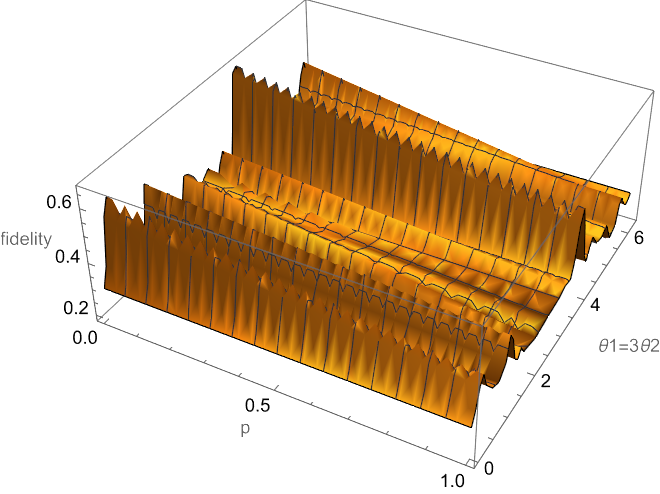}
			   		\caption{$F_{H_3}$}
			   	\end{subfigure}
			   	\\
			   	\begin{subfigure}{0.4\textwidth}
			   		\includegraphics[scale = .37]{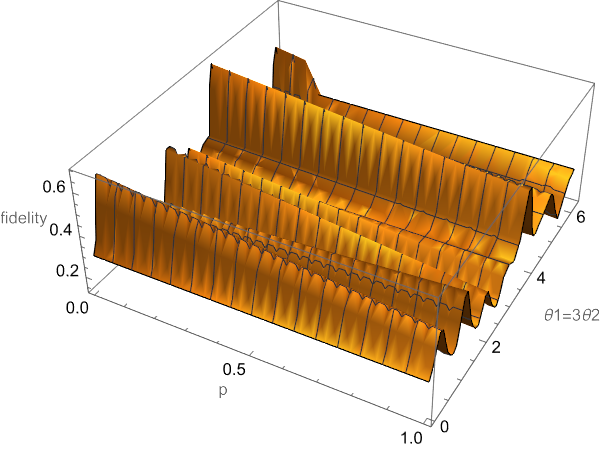}
			   		\caption{$F_{H_4}$}
			   	\end{subfigure}
			   	\begin{subfigure}{0.4\textwidth}
			   		\includegraphics[scale = .35]{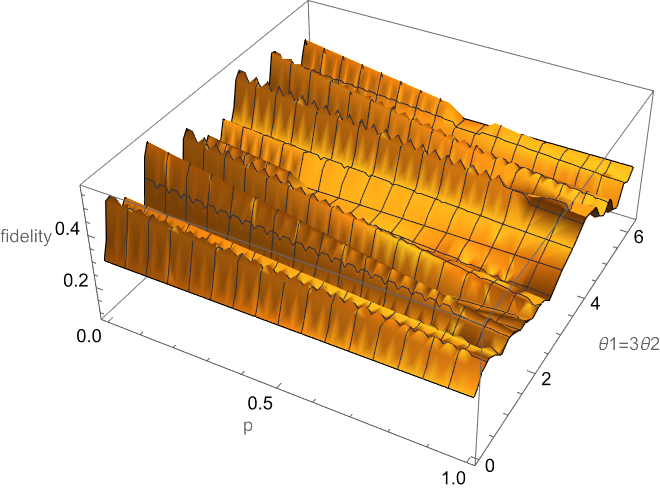}
			   		\caption{$F_{H_5}$}
			   	\end{subfigure}
			   	\caption{The quantum teleportation fidelity under qutrit-flip noise is plotted with respect to the channel parameter $p$ and the state parameter $\theta_2$. Different subfigure is generated for different hypergraph states.}
			   	\label{fig:Qutrit flip noise}
		   \end{figure}

		\subsection{Qutrit-phase-flip noise:}
		
			Next we consider the qutrit-phase-flip noise. This is a generalization of the qubit-phase-flip noise for qutrits. Recall that the qubit-phase-flip noise is represented by the kraus operators
			\begin{center}
				\begin{tabular}{c c}
					$K_1 =\sqrt{1-p} I_2,$ & $K_2 =\sqrt {p} \sigma_z$.
				\end{tabular}
			\end{center}
			Generalizing the qubit-phase flip noise \cite{fonseca2019high, dutta2023qudit, he2020effect}, we get the following Kraus operators representing a qutrit-phase-flip noise: 
			\begin{equation} \label{Qutrit phase flip noise}
				{K_{r,0}} = \begin{cases}
					\sqrt{\frac{1-p}{3}} I_3 & \text{when}~ r = 0, \\ 
					\sqrt{\frac {p} {6}} W_{r,0} & \text{for}~ s = 0, 1\leq r \leq 2, 
				\end{cases}   
			\end{equation}
			where the Weyl operators $W_{0,0}, W_{1,0}, \text{and}~ W_{2,0}$ are defined in equation (\ref{Weyl operator}). Using Lemma \ref{multiqutrit_channel_lemma}, we can construct three-qutrit quantum channels that can be represented by the following Kraus operators:\\
			\small
			\begin{tabular}{p{5.5cm} p{5.5cm} p{5.5cm}}
				\centering
				$K_1=\sqrt{\frac{1-{p}}{3}} (K_{0,0} \otimes I_3 \otimes I_3)$,& $K_2=\sqrt{\frac{{p}}{6}} (K_{1,0} \otimes I_3 \otimes I_3)$,& $K_3=\sqrt{\frac{{p}}{6}} (K_{2,0} \otimes I_3 \otimes I_3$,\\
				$K_4=\sqrt{\frac{1-{p}}{3}} (I_3 \otimes K_{0,0} \otimes I_3)$,& $K_5=\sqrt{\frac{{p}}{6}} (I_3 \otimes K_{1,0} \otimes I_3)$,& $K_6=\sqrt{\frac{{p}}{6}} (I_3 \otimes K_{2,0} \otimes I_3)$,\\
				$K_7=\sqrt{\frac{1-{p}}{3}} (I_3 \otimes I_3 \otimes K_{0,0})$,& $K_8=\sqrt{\frac{{p}}{6}} (I_3 \otimes I_3 \otimes K_{1,0})$,& $K_9=\sqrt{\frac{{p}}{6}} (I_3 \otimes I_3 \otimes K_{2,0})$.
			\end{tabular}

			Now we follow the Procedure \ref{teleportation_procedure} to transfer an arbitrary qutrit quantum state via one of the hyperrgraph states. The teleportation fidelity for different hypergraph states are as follows:
			\begin{equation}
				\small
				\begin{split}
					F_{H_{1}} = & \frac{5}{5184}[2 | p-1|  \{8 \sin ^2(\text{$\theta $}_1) (16 \cos ^2(\text{$\theta $}_1) \cos (2 \text{$\theta $}_2)+\sin ^2(\text{$\theta $}_1) (3 \sin (4 \text{$\theta $}_2)+\cos (4 \text{$\theta $}_2)) \\
					& +\sin (2 \text{$\theta $}_1) (17 \sin (3 \text{$\theta $}_2)  -4 \cos (3 \text{$\theta $}_2))+55 \sin (2 \text{$\theta $}_2)) + 4 \cos (2 \text{$\theta $}_1) (74 \sin ^2(\text{$\theta $}_1) \sin (2 \text{$\theta $}_2)\\
					& +\sin (2 \text{$\theta $}_1) (7 \sin (\text{$\theta $}_2)-4 \cos (\text{$\theta $}_2))+1)+4 \sin (2 \text{$\theta $}_1) (41 \sin (\text{$\theta $}_2)+20 \cos (\text{$\theta $}_2))-81 \cos (4 \text{$\theta $}_1)+333\} \\
					& +p \{64 \sin (4 \text{$\theta $}_1) \sin (\text{$\theta $}_2)-8 \sin ^2(\text{$\theta $}_1) (16 \cos ^2(\text{$\theta $}_1) \cos (2 \text{$\theta $}_2)+\sin ^2(\text{$\theta $}_1) \cos (4 \text{$\theta $}_2)+4 \sin (2 \text{$\theta $}_1) (2 \sin (3 \text{$\theta $}_2) \\
					& +\cos (\text{$\theta $}_2)-\cos (3 \text{$\theta $}_2))+18 \sin (2 \text{$\theta $}_2))-4 \cos (2 \text{$\theta $}_1) (68 \sin ^2(\text{$\theta $}_1) \sin (2 \text{$\theta $}_2)+1)+81 \cos (4 \text{$\theta $}_1)+435 \} ].
				\end{split}
			\end{equation}
			\begin{equation}
				\small
				\begin{split}
					F_{H_{2}} = & \frac{5}{2592} [p \{-6 \sin (2 \text{$\theta $}_1) \sin (\text{$\theta $}_2)+35 \sin (4 \text{$\theta $}_1) \sin (\text{$\theta $}_2)+8 \sin ^2(\text{$\theta $}_1) (\sin (2 \text{$\theta $}_2) (3-7 \sin (2 \text{$\theta $}_1) \sin (\text{$\theta $}_2)) \\
					& +\sin ^2(\text{$\theta $}_1) \cos (4 \text{$\theta $}_2)-\cos (\text{$\theta $}_1) (11 \sin (\text{$\theta $}_1) \sin (3 \text{$\theta $}_2)+26 \cos (\text{$\theta $}_1) \cos (2 \text{$\theta $}_2))) \\
					& +\cos (2 \text{$\theta $}_1) (4-40 \sin ^2(\text{$\theta $}_1) \sin (2 \text{$\theta $}_2))+9 \cos (4 \text{$\theta $}_1)+243\}-2 | p-1|  \{-58 \sin (2 \text{$\theta $}_1) \sin (\text{$\theta $}_2) \\
					& -4 \sin (\text{$\theta $}_1) (4 \sin (\text{$\theta $}_1) \sin (2 \text{$\theta $}_2)+52 \sin (\text{$\theta $}_1) \cos ^2(\text{$\theta $}_1) \cos (2 \text{$\theta $}_2)-2 \sin ^3(\text{$\theta $}_1) \cos (4 \text{$\theta $}_2) \\
					& +\cos (\text{$\theta $}_1) (2 \sin ^2(\text{$\theta $}_1) (29 \sin (3 \text{$\theta $}_2)-7 \cos (3 \text{$\theta $}_2))+11 \cos (\text{$\theta $}_2)))+\cos (2 \text{$\theta $}_1) (80 \sin ^2(\text{$\theta $}_1) \sin (2 \text{$\theta $}_2) \\
					& +38 \sin (2 \text{$\theta $}_1) (\cos (\text{$\theta $}_2)-\sin (\text{$\theta $}_2))+4)+9 \cos (4 \text{$\theta $}_1)-141\} ].
				\end{split}
			\end{equation}
			\begin{equation}
				\small
				\begin{split}
					F_{H_{3}} = & \frac{5}{5184}[2 | p-1|  \{8 \sin ^2(\text{$\theta $}_1) (28 \cos ^2(\text{$\theta $}_1) \cos (2 \text{$\theta $}_2)-\sin ^2(\text{$\theta $}_1) (3 \sin (4 \text{$\theta $}_2)+5 \cos (4 \text{$\theta $}_2)) \\
					& +\sin (2 \text{$\theta $}_1) (8 \sin (3 \text{$\theta $}_2)-7 \cos (3 \text{$\theta $}_2))+13 \sin (2 \text{$\theta $}_2))+4 \cos (2 \text{$\theta $}_1) (14 \sin ^2(\text{$\theta $}_1) \sin (2 \text{$\theta $}_2)-5) \\
					& +2 (22 \sin (2 \text{$\theta $}_1)-19 \sin (4 \text{$\theta $}_1)) \cos (\text{$\theta $}_2)+64 \sin (2 \text{$\theta $}_1) (\cos (2 \text{$\theta $}_1)+2) \sin (\text{$\theta $}_2)-39 \cos (4 \text{$\theta $}_1)+315\} \\
					& +p \{64 \sin (4 \text{$\theta $}_1) \sin (\text{$\theta $}_2)+8 \sin ^2(\text{$\theta $}_1) (-28 \cos ^2(\text{$\theta $}_1) \cos (2 \text{$\theta $}_2)+\sin ^2(\text{$\theta $}_1) (5 \cos (4 \text{$\theta $}_2)-6 \sin (4 \text{$\theta $}_2)) \\
					& +2 \sin (2 \text{$\theta $}_1) (-4 \sin (3 \text{$\theta $}_2)-5 \cos (\text{$\theta $}_2)+5 \cos (3 \text{$\theta $}_2))+6 \sin (2 \text{$\theta $}_2))+4 \cos (2 \text{$\theta $}_1) (4 \sin ^2(\text{$\theta $}_1) \sin (2 \text{$\theta $}_2)+5) \\
					& +39 \cos (4 \text{$\theta $}_1)+453\}].
				\end{split}
			\end{equation}
			\begin{equation}
				\small
				\begin{split}
					F_{H_{4}} = & \frac{5}{2592}[p \{6 \sin (2 \text{$\theta $}_1) \sin (\text{$\theta $}_2)+29 \sin (4 \text{$\theta $}_1) \sin (\text{$\theta $}_2)+8 \sin ^2(\text{$\theta $}_1) (-20 \cos ^2(\text{$\theta $}_1) \cos (2 \text{$\theta $}_2)+\sin ^2(\text{$\theta $}_1) \cos (4 \text{$\theta $}_2) \\
					& +\sin (\text{$\theta $}_1) \cos (\text{$\theta $}_1) (-5 \sin (3 \text{$\theta $}_2)-7 \cos (\text{$\theta $}_2)+7 \cos (3 \text{$\theta $}_2))+6 \sin (2 \text{$\theta $}_2))+\cos (2 \text{$\theta $}_1) (4-16 \sin ^2(\text{$\theta $}_1) \sin (2 \text{$\theta $}_2)) \\
					& +15 \cos (4 \text{$\theta $}_1)+237\}-2 | p-1|  \{8 \sin ^2(\text{$\theta $}_1) (-20 \cos ^2(\text{$\theta $}_1) \cos (2 \text{$\theta $}_2)+\sin ^2(\text{$\theta $}_1) \cos (4 \text{$\theta $}_2) \\
					& +\sin (\text{$\theta $}_1) \cos (\text{$\theta $}_1) (7 \cos (3 \text{$\theta $}_2)-17 \sin (3 \text{$\theta $}_2))+\sin (2 \text{$\theta $}_2))+19 \sin (4 \text{$\theta $}_1) \cos (\text{$\theta $}_2) \\
					& -2 \sin (2 \text{$\theta $}_1) (29 \sin (\text{$\theta $}_2)+11 \cos (\text{$\theta $}_2))+\cos (2 \text{$\theta $}_1) (4 \sin (\text{$\theta $}_1) \sin (\text{$\theta $}_2) (52 \sin (\text{$\theta $}_1) \cos (\text{$\theta $}_2)-19 \cos (\text{$\theta $}_1))+4) \\
					& +15 \cos (4 \text{$\theta $}_1)-147\}].
				\end{split}
			\end{equation}
			\begin{equation}
				\small
				\begin{split}
					F_{H_{5}} = & \frac{5}{5184} [-2 | p-1|  \{8 \sin ^2(\text{$\theta $}_1) (-28 \cos ^2(\text{$\theta $}_1) \cos (2 \text{$\theta $}_2)+\sin ^2(\text{$\theta $}_1) (6 \sin (4 \text{$\theta $}_2)+5 \cos (4 \text{$\theta $}_2)) \\
					& +\sin (2 \text{$\theta $}_1) (7 \sin (3 \text{$\theta $}_2)-5 \cos (3 \text{$\theta $}_2))-10 \sin (2 \text{$\theta $}_2))+4 \cos (2 \text{$\theta $}_1) (4 \sin ^2(\text{$\theta $}_1) \sin (2 \text{$\theta $}_2) \\
					& +7 \sin (2 \text{$\theta $}_1) (\cos (\text{$\theta $}_2)-\sin (\text{$\theta $}_2))+5)+4 \sin (2 \text{$\theta $}_1) (\cos (\text{$\theta $}_2)-41 \sin (\text{$\theta $}_2))+39 \cos (4 \text{$\theta $}_1)-315\} \\
					& +p \{-224 \sin ^2(\text{$\theta $}_1) \cos ^2(\text{$\theta $}_1) \cos (2 \text{$\theta $}_2)+20 \cos (2 \text{$\theta $}_1)+39 \cos (4 \text{$\theta $}_1)+453\} \\
					& +8 p \sin (\text{$\theta $}_1) \{-6 \sin ^3(\text{$\theta $}_1) \sin (4 \text{$\theta $}_2)-8 \sin (2 \text{$\theta $}_1) \sin (\text{$\theta $}_1) \sin (3 \text{$\theta $}_2)+4 (2 \sin (\text{$\theta $}_1)+\sin (3 \text{$\theta $}_1)) \sin (2 \text{$\theta $}_2) \\
					& +5 \sin ^3(\text{$\theta $}_1) \cos (4 \text{$\theta $}_2)+8 \cos (\text{$\theta $}_1) \sin (\text{$\theta $}_2) (\sin ^2(\text{$\theta $}_1) \sin (2 \text{$\theta $}_2)+\cos (2 \text{$\theta $}_1)+3)\}].
				\end{split}
			\end{equation}
			
			Now we plot the teleportation fidelity for different hypergraph states. For a number of specific state $\ket{\phi}_a$ we study the efficiency of different hypergraph states in in Figure \ref{fig:Qutrit-phase-flip noise}. Figure \ref{fig:Qutrit_phase_flip_noise} plots the teleportation fidelity with respect to the channel parameter $p$ and state parameter $\theta_2$. Note that, we consider a linear relationship $\theta_1 = 3 \theta_2$ to reduce a state parameter as discussed in equation (\ref{quantum state vector_1}). 
			\begin{figure}
				\centering 
	       		\includegraphics[scale = .79]{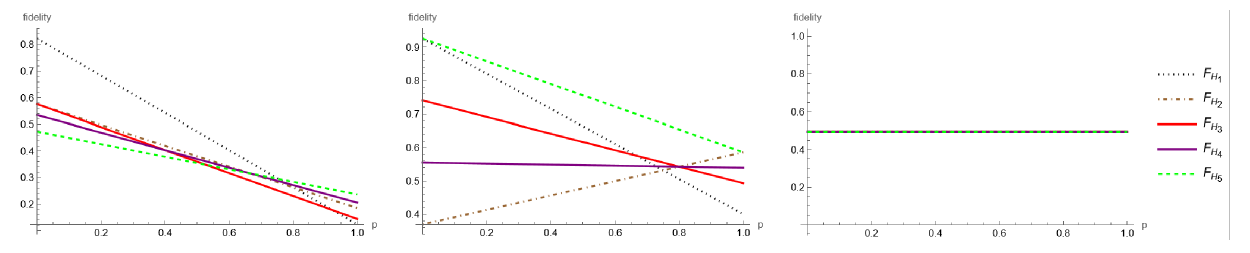}
	     		\caption{For all five hypergraph states we plot the teleportation fidelity with respect to the channel parameter $p$, in case of qutrit-phase-flip noise. Three subfigures consider different choices of states $\ket{\phi}_a$ to teleport.  We consider $\ket{\phi}_a = \ket{+}$, $\ket{\phi}_a = \frac{1}{\sqrt{2}}(\ket{0} + \ket{2})$, and $\ket{\phi}_a = \ket{0}$ in the left, middle, and right subfigure, respectively.}  		
	       		\label{fig:Qutrit-phase-flip noise}
	       	\end{figure}
       	
	       	\begin{figure}[hbt!]
       		\centering
       		\begin{subfigure}{0.32\textwidth}
       			\includegraphics[scale = .37]{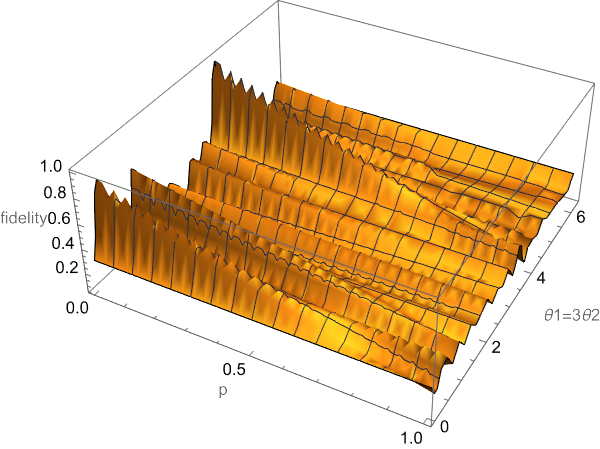}
       			\caption{$F_{H_1}$}
       		\end{subfigure}
       		\begin{subfigure}{0.32\textwidth}
       			\includegraphics[scale = .37]{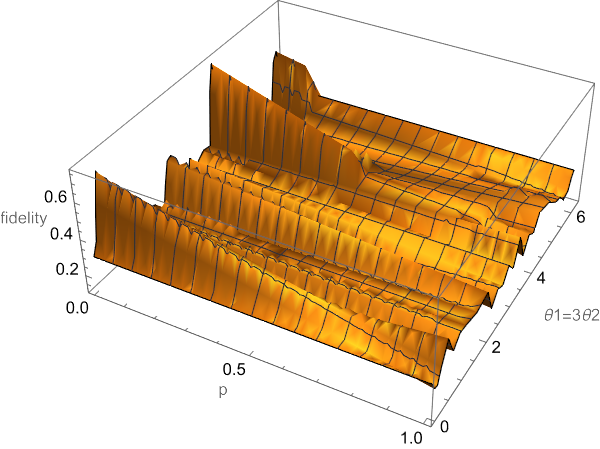}
       			\caption{$F_{H_2}$}
       		\end{subfigure}
       		\begin{subfigure}{0.32\textwidth}
       			\centering
       			\includegraphics[scale = .37]{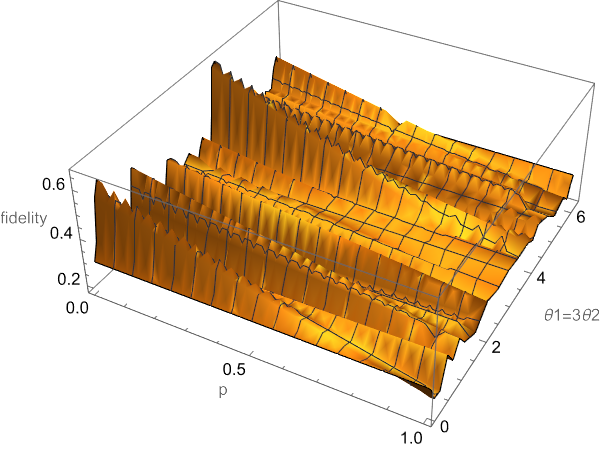}
       			\caption{$F_{H_3}$}
       		\end{subfigure}
       		\\
       		\begin{subfigure}{0.4\textwidth}
       			\centering
       			\includegraphics[scale = .37]{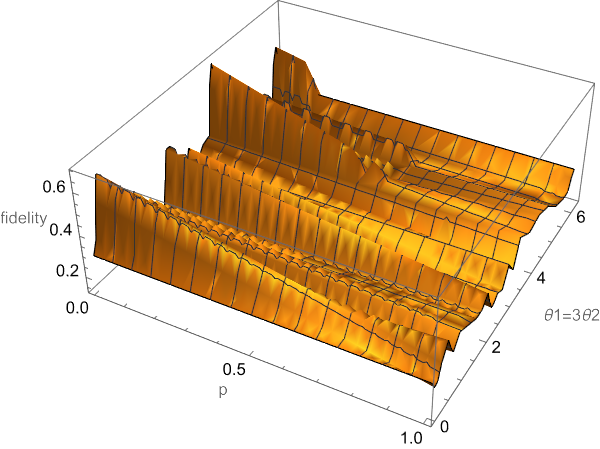}
       			\caption{$F_{H_4}$}
       		\end{subfigure}
       		\begin{subfigure}{0.4\textwidth}
       			\centering
       			\includegraphics[scale = .37]{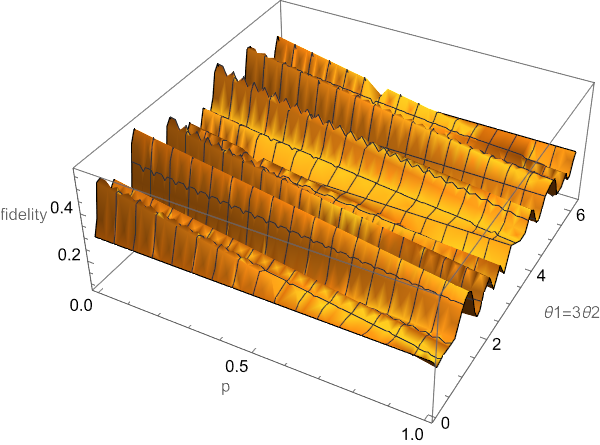}
       			\caption{$F_{H_5}$}
       		\end{subfigure}
       		\caption{The quantum teleportation fidelity under qutrit-phase-flip noise is plotted with respect to the channel parameter $p$ and the state parameter $\theta_2$. Different subfigure is generated for different hypergraphs. Here we consider the qutrit-phase-flip noise.}
       		\label{fig:Qutrit_phase_flip_noise}
    	   	\end{figure}

		\subsection{Depolarizing noise:}
	
			Now we consider the teleportation under depolarizing channel. Recall that the qubit depolarizing noise is represented by the following Kraus operators
		\begin{center} 
		\begin{tabular}{c  c c c}
			$K_1=\sqrt{\frac{4-{3p}}{4}} I_2$, & $K_2=\sqrt{\frac{p}{4}} \sigma_x$, 
			$K_3=\sqrt{\frac{p}{4}} \sigma_y$, & $K_4=\sqrt{\frac{p}{4}} \sigma_z$.
		\end{tabular}
		\end{center}
		The Kraus operators representing depolarizing channel for qutrits are \cite{fonseca2019high, dutta2023qudit}
		\begin{equation} \label{Depolarizing noise}
			{K_{r,s}} = \begin{cases}
				 \sqrt{\frac{9-{8p}}{27}} I_3 & \text{when}~  r = 0, s=0, \\
				 \sqrt{\frac{p}{27}}  W_{r,s} & \text{for}~  0\leq r,s \leq 2  ~\text{and}~ (r,s) \neq (0,0);
			\end{cases}
		\end{equation}
		 where the Weyl operators $W_{0,0}, W_{0,1}, W_{0,2}, W_{1,0}, W_{1,1}, W_{1,2}, W_{2,0}, W_{2,1}, \text{and}~ W_{2,2}$ are defined in equation (\ref{Weyl operator}). Using Lemma \ref{multiqutrit_channel_lemma}, we can construct three-qutrit channel which can be described by the following Kraus operators:
		 
		\small
		\begin{tabular}{p{5.5cm} p{5.5cm} p{5.5cm}}
			\centering
				$K_1=\sqrt{\frac{9-{8p}}{27}} (K_{0,0}\otimes I_3 \otimes I_3)$, & $K_2=\sqrt{\frac{{p}}{27}} (K_{0,1} \otimes I_3 \otimes I_3)$,& $K_3=\sqrt{\frac{{p}}{27}} (K_{0,2} \otimes I_3 \otimes I_3)$,\\
				$K_4=\sqrt{\frac{{p}}{27}} (K_{1,0} \otimes I_3 \otimes I_3)$,& $K_5=\sqrt{\frac{{p}}{27}} (K_{1,1} \otimes I_3 \otimes I_3)$,& $K_6=\sqrt{\frac{{p}}{27}} (K_{1,2} \otimes I_3 \otimes I_3)$,\\
				$K_7=\sqrt{\frac{{p}}{27}} (K_{2,0} \otimes I_3 \otimes I_3)$,& $K_8=\sqrt{\frac{{p}}{27}} (K_{2,1} \otimes I_3 \otimes I_3)$,& $K_9=\sqrt{\frac{{p}}{27}} (K_{2,2} \otimes I_3 \otimes I_3)$,\\
				$K_{10}=\sqrt{\frac{9-{8p}}{27}} (I_3\otimes K_{0,0} \otimes I_3)$,& $K_{11}=\sqrt{\frac{{p}}{27}} (I_3 \otimes K_{0,1} \otimes I_3)$,& $K_{12}=\sqrt{\frac{{p}}{27}} (I_3 \otimes K_{0,2} \otimes I_3)$,\\
				$K_{13}=\sqrt{\frac{{p}}{27}} (I_3 \otimes K_{1,0} \otimes I_3)$,& $K_{14}=\sqrt{\frac{{p}}{27}} (I_3 \otimes K_{1,1} \otimes I_3)$,& $K_{15}=\sqrt{\frac{{p}}{27}} (I_3 \otimes K_{1,2} \otimes I_3)$,\\
				$K_{16}=\sqrt{\frac{{p}}{27}} (I_3 \otimes K_{2,0} \otimes I_3)$,& $K_{17}=\sqrt{\frac{{p}}{27}} (I_3 \otimes K_{2,1} \otimes I_3)$,& $K_{18}=\sqrt{\frac{{p}}{27}} (I_3 \otimes K_{2,2} \otimes I_3)$,\\
				$K_{19}=\sqrt{\frac{9-{8p}}{27}} (I_3\otimes I_3 \otimes K_{0,0})$,& $K_{20}=\sqrt{\frac{{p}}{27}} (I_3\otimes I_3 \otimes K_{0,1})$,& $K_{21}=\sqrt{\frac{{p}}{27}} (I_3\otimes I_3 \otimes K_{0,2})$,\\
				$K_{22}=\sqrt{\frac{{p}}{27}} (I_3\otimes I_3 \otimes K_{1,0})$,& $K_{23}=\sqrt{\frac{{p}}{27}} (I_3\otimes I_3 \otimes K_{1,1})$,& $K_{24}=\sqrt{\frac{{p}}{27}} (I_3\otimes I_3 \otimes K_{1,2})$,\\
				$K_{25}=\sqrt{\frac{{p}}{27}} (I_3\otimes I_3 \otimes K_{2,0})$,& $K_{26}=\sqrt{\frac{{p}}{27}} (I_3\otimes I_3 \otimes K_{2,1})$,& $K_{27}=\sqrt{\frac{{p}}{27}} (I_3\otimes I_3 \otimes K_{2,2})$.
			\end{tabular}
		
		We now apply the quantum teleportation process described in Procedure \ref{teleportation_procedure} to transmit an qutrit quantum state as stated in equation (\ref{quantum state vector}). We find the following teleportation fidelity for different hypergraph states as follows: 
		
        \begin{equation}
        	\small
			\begin{split}
				F_{H_{1}} = &\frac{5}{23328}[| 9-8 p|  \{8 \sin ^2(\text{$\theta $}_1) (16 \cos ^2(\text{$\theta $}_1) \cos (2 \text{$\theta $}_2)+\sin ^2(\text{$\theta $}_1) (3 \sin (4 \text{$\theta $}_2)+\cos (4 \text{$\theta $}_2)) \\
				& +\sin (2 \text{$\theta $}_1) (17 \sin (3 \text{$\theta $}_2)-4 \cos (3 \text{$\theta $}_2))+55 \sin (2 \text{$\theta $}_2))+4 \cos (2 \text{$\theta $}_1) (74 \sin ^2(\text{$\theta $}_1) \sin (2 \text{$\theta $}_2) \\
				& +\sin (2 \text{$\theta $}_1) (7 \sin (\text{$\theta $}_2)-4 \cos (\text{$\theta $}_2))+1)+4 \sin (2 \text{$\theta $}_1) (41 \sin (\text{$\theta $}_2)+20 \cos (\text{$\theta $}_2))-81 \cos (4 \text{$\theta $}_1)+333\} \\
				& +p \{-128 \sin ^2(\text{$\theta $}_1) \cos ^2(\text{$\theta $}_1) \cos (2 \text{$\theta $}_2)-4 \cos (2 \text{$\theta $}_1)+81 \cos (4 \text{$\theta $}_1)+1971\} \\
				& +8 p \sin (\text{$\theta $}_1) \{ \cos (\text{$\theta $}_1) (4 (\cos (2 \text{$\theta $}_1)+7) \cos (\text{$\theta $}_2)+8 \sin ^2(\text{$\theta $}_1) \cos (3 \text{$\theta $}_2)+(29 \cos (2 \text{$\theta $}_1)+139) \sin (\text{$\theta $}_2)) \\
				& -\sin (\text{$\theta $}_1) (5 \sin (2 \text{$\theta $}_1) \sin (3 \text{$\theta $}_2)+\sin ^2(\text{$\theta $}_1) (9 \sin (4 \text{$\theta $}_2)+\cos (4 \text{$\theta $}_2))+(43 \cos (2 \text{$\theta $}_1)-35) \sin (2 \text{$\theta $}_2))\}].
			\end{split}
		\end{equation}
		\begin{equation}
			\small
			\begin{split}
				F_{H_{2}} = & \frac{5}{11664}[| 9-8 p|  \{58 \sin (2 \text{$\theta $}_1) \sin (\text{$\theta $}_2)+8 \sin ^2(\text{$\theta $}_1) (26 \cos ^2(\text{$\theta $}_1) \cos (2 \text{$\theta $}_2)-\sin ^2(\text{$\theta $}_1) \cos (4 \text{$\theta $}_2) \\
				& +\sin (\text{$\theta $}_1) \cos (\text{$\theta $}_1) (29 \sin (3 \text{$\theta $}_2)-7 \cos (3 \text{$\theta $}_2))+2 \sin (2 \text{$\theta $}_2))-4 \cos (2 \text{$\theta $}_1) (20 \sin ^2(\text{$\theta $}_1) \sin (2 \text{$\theta $}_2)+1) \\
				& +44 \sin (\text{$\theta $}_1) \cos (\text{$\theta $}_1) \cos (\text{$\theta $}_2)+19 \sin (4 \text{$\theta $}_1) (\sin (\text{$\theta $}_2)-\cos (\text{$\theta $}_2))-9 \cos (4 \text{$\theta $}_1)+141\} \\
				& +4 p \sin (\text{$\theta $}_1) \{-52 \sin (\text{$\theta $}_1) \cos ^2(\text{$\theta $}_1) \cos (2 \text{$\theta $}_2)+2 \sin ^3(\text{$\theta $}_1) \cos (4 \text{$\theta $}_2)+\cos (\text{$\theta $}_1) ((19 \cos (2 \text{$\theta $}_1)+13) \cos (\text{$\theta $}_2)\\
				& +2 \sin ^2(\text{$\theta $}_1) (7 \cos (3 \text{$\theta $}_2)-11 \sin (3 \text{$\theta $}_2))+(35 \cos (2 \text{$\theta $}_1)+61) \sin (\text{$\theta $}_2)) \\
				& +4 \sin (\text{$\theta $}_1) (11-4 \cos (2 \text{$\theta $}_1)) \sin (2 \text{$\theta $}_2)\} +p (4 \cos (2 \text{$\theta $}_1)+9 \cos (4 \text{$\theta $}_1)+1011)].
			\end{split}
		\end{equation}
		\begin{equation}
			\small
			\begin{split}
				F_{H_{3}} = & \frac{5}{23328}[| 9-8 p|  \{8 \sin ^2(\text{$\theta $}_1) (28 \cos ^2(\text{$\theta $}_1) \cos (2 \text{$\theta $}_2)-\sin ^2(\text{$\theta $}_1) (3 \sin (4 \text{$\theta $}_2)+5 \cos (4 \text{$\theta $}_2)) \\
				& +\sin (2 \text{$\theta $}_1) (8 \sin (3 \text{$\theta $}_2)-7 \cos (3 \text{$\theta $}_2))+13 \sin (2 \text{$\theta $}_2))+4 \cos (2 \text{$\theta $}_1) (14 \sin ^2(\text{$\theta $}_1) \sin (2 \text{$\theta $}_2)-5)\\
				& +2 (22 \sin (2 \text{$\theta $}_1)-19 \sin (4 \text{$\theta $}_1)) \cos (\text{$\theta $}_2)+64 \sin (2 \text{$\theta $}_1) (\cos (2 \text{$\theta $}_1)+2) \sin (\text{$\theta $}_2)-39 \cos (4 \text{$\theta $}_1)+315\}\\
				& +p \{8 \sin ^2(\text{$\theta $}_1) (-28 \cos ^2(\text{$\theta $}_1) \cos (2 \text{$\theta $}_2)+\sin ^2(\text{$\theta $}_1) (5 \cos (4 \text{$\theta $}_2)-9 \sin (4 \text{$\theta $}_2))\\
				& +\sin (2 \text{$\theta $}_1) (7 \cos (3 \text{$\theta $}_2)-2 \sin (3 \text{$\theta $}_2))-13 \sin (2 \text{$\theta $}_2))+4 \cos (2 \text{$\theta $}_1) (-14 \sin ^2(\text{$\theta $}_1) \sin (2 \text{$\theta $}_2)\\
				& +\sin (2 \text{$\theta $}_1) (26 \sin (\text{$\theta $}_2)+19 \cos (\text{$\theta $}_2))+5)-44 \sin (2 \text{$\theta $}_1) (\cos (\text{$\theta $}_2)-2 \sin (\text{$\theta $}_2))+39 \cos (4 \text{$\theta $}_1)+1989\}].
			\end{split}
		\end{equation}
		\begin{equation}
			\small
			\begin{split}
				F_{H_{4}} = & \frac{5}{11664}[| 9-8 p|  \{58 \sin (2 \text{$\theta $}_1) \sin (\text{$\theta $}_2)+4 \sin (\text{$\theta $}_1) (40 \sin (\text{$\theta $}_1) \cos ^2(\text{$\theta $}_1) \cos (2 \text{$\theta $}_2)\\
				& +\cos (\text{$\theta $}_1) (2 \sin ^2(\text{$\theta $}_1) (17 \sin (3 \text{$\theta $}_2)-7 \cos (3 \text{$\theta $}_2))+11 \cos (\text{$\theta $}_2))-2 \sin (\text{$\theta $}_1) (\sin ^2(\text{$\theta $}_1) \cos (4 \text{$\theta $}_2)+\sin (2 \text{$\theta $}_2)))\\
				& -2 \cos (2 \text{$\theta $}_1) (52 \sin ^2(\text{$\theta $}_1) \sin (2 \text{$\theta $}_2)+19 \sin (2 \text{$\theta $}_1) (\cos (\text{$\theta $}_2)-\sin (\text{$\theta $}_2))+2)-15 \cos (4 \text{$\theta $}_1)+147\}\\
				& +4 p \sin (\text{$\theta $}_1) \{-40 \sin (\text{$\theta $}_1) \cos ^2(\text{$\theta $}_1) \cos (2 \text{$\theta $}_2)+2 \sin ^3(\text{$\theta $}_1) \cos (4 \text{$\theta $}_2)+\cos (\text{$\theta $}_1) ((19 \cos (2 \text{$\theta $}_1)+13) \cos (\text{$\theta $}_2)\\
				& +2 \sin ^2(\text{$\theta $}_1) (\sin (3 \text{$\theta $}_2)+7 \cos (3 \text{$\theta $}_2))+(35 \cos (2 \text{$\theta $}_1)+61) \sin (\text{$\theta $}_2))-10 \sin (\text{$\theta $}_1) (\cos (2 \text{$\theta $}_1)-5) \sin (2 \text{$\theta $}_2)\}\\
				& +p (4 \cos (2 \text{$\theta $1})+15 (\cos (4 \text{$\theta $}_1)+67))].
			\end{split}
		\end{equation}
		\begin{equation}
			\small
			\begin{split}
				F_{H_{5}} = & \frac{5}{23328}[p \{8 \sin ^2(\text{$\theta $}_1) (-28 \cos ^2(\text{$\theta $}_1) \cos (2 \text{$\theta $}_2)+5 \sin ^2(\text{$\theta $}_1) \cos (4 \text{$\theta $}_2)+\sin (2 \text{$\theta $}_1) (\sin (3 \text{$\theta $}_2)-5 \cos (3 \text{$\theta $}_2))\\
				& +20 \sin (2 \text{$\theta $}_2))+4 \cos (2 \text{$\theta $}_1) (16 \sin ^2(\text{$\theta $}_1) \sin (2 \text{$\theta $}_2)+\sin (2 \text{$\theta $}_1) (7 \cos (\text{$\theta $}_2)-\sin (\text{$\theta $}_2))+5)\\
				& +4 \sin (2 \text{$\theta $}_1) (25 \sin (\text{$\theta $}_2)+\cos (\text{$\theta $}_2))+39 \cos (4 \text{$\theta $}_1)+1989\}-| 9-8 p|  \{8 \sin ^2(\text{$\theta $}_1) (-28 \cos ^2(\text{$\theta $}_1) \cos (2 \text{$\theta $}_2)\\
				& +\sin ^2(\text{$\theta $}_1) (6 \sin (4 \text{$\theta $}_2)+5 \cos (4 \text{$\theta $}_2))+\sin (2 \text{$\theta $}_1) (7 \sin (3 \text{$\theta $}_2)-5 \cos (3 \text{$\theta $}_2))-10 \sin (2 \text{$\theta $}_2))\\
				& +4 \cos (2 \text{$\theta $}_1) (4 \sin ^2(\text{$\theta $}_1) \sin (2 \text{$\theta $}_2)+7 \sin (2 \text{$\theta $}_1) (\cos (\text{$\theta $}_2)-\sin (\text{$\theta $}_2))+5)+4 \sin (2 \text{$\theta $}_1) (\cos (\text{$\theta $}_2)-41 \sin (\text{$\theta $}_2))\\
				& +39 \cos (4 \text{$\theta $}_1)-315\}].
			\end{split}
		\end{equation}
		
		Now we plot the fidelity $F_{H_1}, F_{H_2}, F_{H_3}, F_{H_4}$, and $F_{H_5}$ as a function of channel parameter $p$, for a number of states in Figure \ref{fig:Depolarizing}. Also, in Figure \ref{fig:Depolarizing noise}, we depict surface plots representing the dependency of fidelity with respect to the state and channel parameters.
		
		\begin{figure}
			\includegraphics[scale = .78]{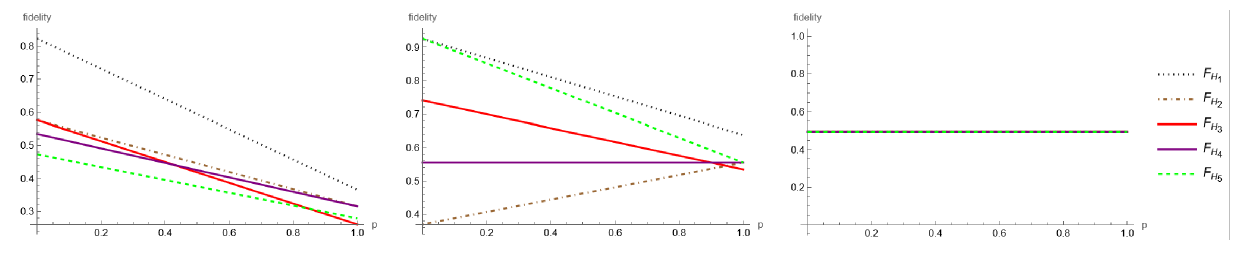}
			\caption{For all five hypergraph states we plot the teleportation fidelity with respect to the channel parameter $p$,  in case of depolarizing quantum state vector noise. Three subfigures consider different states $\ket{\phi}_a$ to teleport.  We consider $\ket{\phi}_a = \ket{+}$, $\ket{\phi}_a = \frac{1}{\sqrt{2}}(\ket{0} + \ket{2})$, and $\ket{\phi}_a = \ket{0}$ in the left, middle, and right subfigure, respectively.}
			\label{fig:Depolarizing}
		\end{figure}
		
		\begin{figure}[hbt!]
			\centering
			\begin{subfigure}{0.32\textwidth}
				\centering
				\includegraphics[scale = .39]{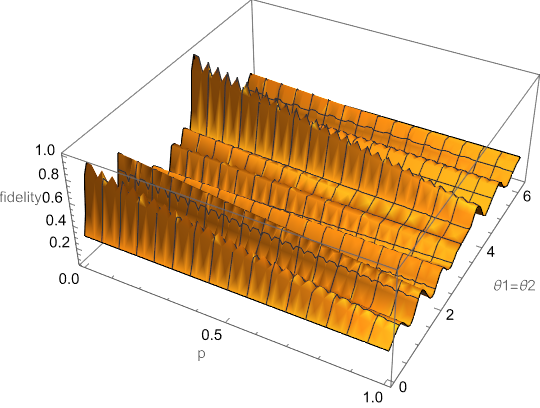}
				\caption{$F_{H_1}$}
			\end{subfigure}
			\begin{subfigure}{0.32\textwidth}
				\centering
				\includegraphics[scale = .37]{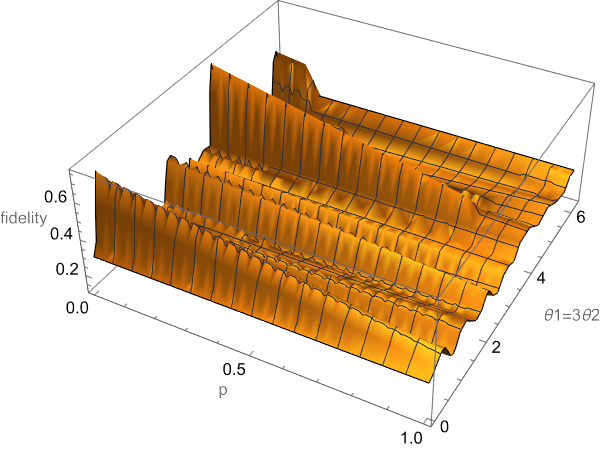}
				\caption{$F_{H_2}$}
			\end{subfigure}
			\begin{subfigure}{0.32\textwidth}
				\centering
				\includegraphics[scale = .37]{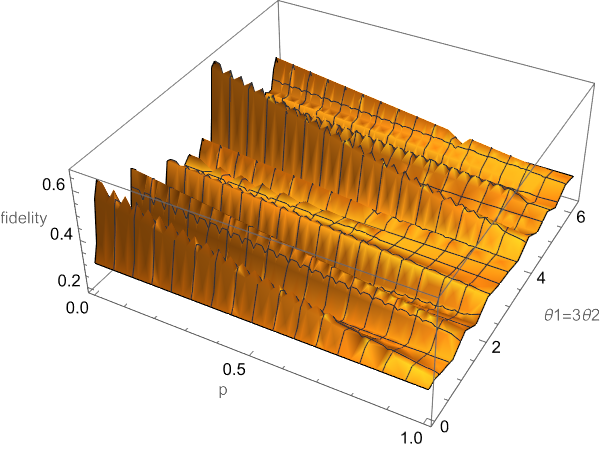}
				\caption{$F_{H_3}$}
			\end{subfigure}
			\\
			\begin{subfigure}{0.32\textwidth}
				\centering
				\includegraphics[scale = .37]{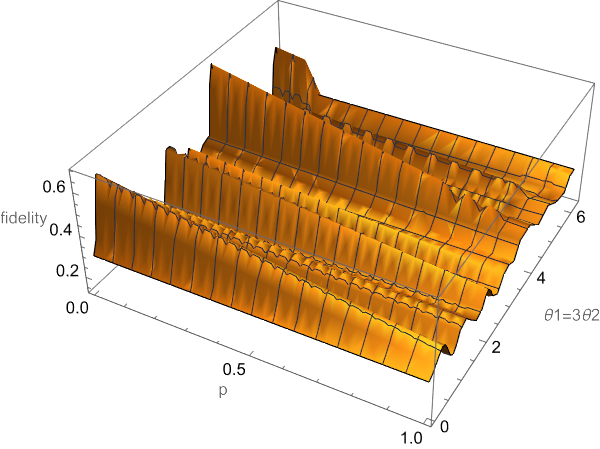}
				\caption{$F_{H_4}$}
			\end{subfigure}
			\begin{subfigure}{0.32\textwidth}
				\centering
				\includegraphics[scale = .37]{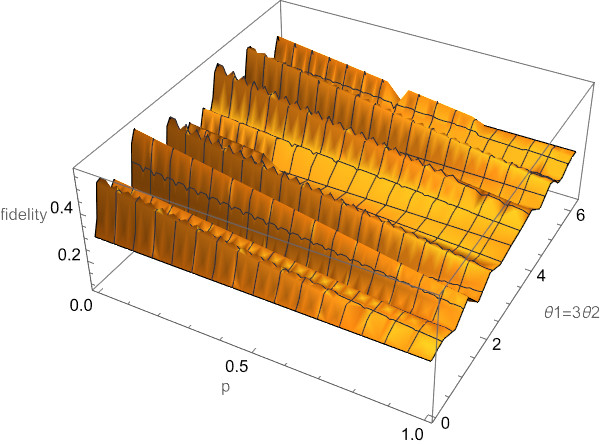}
				\caption{$F_{H_5}$}
			\end{subfigure}
			\caption{The quantum teleportation fidelity under depolarizing noise is plotted with respect to the channel parameter $p$ and the state parameter $\theta_2$. Different subfigure represents the fidelity for different hypergraph states.}
			\label{fig:Depolarizing noise}
		\end{figure}

		\subsection{Markovian and non-Markovian amplitude damping channel:}
		
			The Markovian amplitude damping channel for a qubit is represented by the following Kraus operators
		\begin{center}
			\begin{tabular}{c c}
				$K_{0,0} = \ket{0} \bra{0} + \sqrt{1 - p} \ket{1} \bra{1},$ & $K_{0,1} = \sqrt{p} \ket{0} \bra{1}.$
			\end{tabular}
		\end{center}
		 For a qutrit system we generalize the amplitude damping channel using the Kraus operators  
		\begin{equation} \label{Amplitude damping channel}
			\begin{split}
				&K_{0,0} = \ket{0} \bra{0} + \sqrt{1 - p} \sum_{i=1}^{2} \ket{i} \bra{i},\\
				&K_{0,i} = \sqrt{p} \ket{0} \bra{i}  \text{for}~  1 \le i \le 2,
			\end{split}
		\end{equation}
		where
		\begin{equation}\label{Amplitude damping channel_matrices}
			\begin{split}
				K_{0,0} = \begin{bmatrix}
					1&0&0\\
					0&\sqrt{1-p}&0\\ 
					0&0&\sqrt{1-p}\\
				\end{bmatrix},
				K_{0,1} = \begin{bmatrix}
					0&\sqrt{p}&0\\
					0&0&0\\ 
					0&0&0\\
				\end{bmatrix},
				K_{0,2} = \begin{bmatrix}
					0&0&\sqrt{p}\\
					0&0&0\\ 
					0&0&0\\
				\end{bmatrix}.
			\end{split}
		\end{equation}

		Using Lemma \ref{multiqutrit_channel_lemma} we can construct three-qutrit channel which can be described by the following Kraus operators:
		 
		\small
		\begin{tabular}{p{5.5cm} p{5.5cm} p{5.5cm}}
			\centering
				$K_1=\sqrt{\frac{1}{3}} (K_{0,0} \otimes I_3 \otimes I_3)$,& $K_2=\sqrt{\frac{{1}}{3}} (K_{0,1} \otimes I_3 \otimes I_3)$,& $K_3=\sqrt{\frac{{1}}{3}} (K_{0,2} \otimes I_3 \otimes I_3)$,\\
				$K_4=\sqrt{\frac{{1}}{3}} (I_3 \otimes K_{0,0} \otimes I_3)$,& $K_5=\sqrt{\frac{{1}}{3}} (I_3 \otimes K_{0,1} \otimes I_3)$,& $K_6=\sqrt{\frac{{1}}{3}} (I_3 \otimes E_{0,2} \otimes I_3)$,\\
				$K_7=\sqrt{\frac{{1}}{3}} (I_3 \otimes I_3 \otimes K_{0,0})$,& $K_8=\sqrt{\frac{{1}}{3}} (I_3 \otimes I_3 \otimes K_{0,1})$,& $K_9=\sqrt{\frac{{1}}{3}} (I_3 \otimes I_3 \otimes K_{0,2})$.
			\end{tabular}
	 
	 	Now applying the Procedure \ref{teleportation_procedure}, we teleport any arbitary qutrit state via a noise affected hypergraph state. We work out teleportation fidelity for hypergraph states. We obtain the following expressions:  
	 
		\begin{equation}
			\small
			\begin{split}\label{F_H1}
				F_{H_{1}} = & \frac{1}{15552}[-48 \sqrt{1-p} \sin (2 (\text{$\theta $}_1+\text{$\theta $}_2))-66 \sqrt{1-p} \sin (4 \text{$\theta $}_1+2 \text{$\theta $}_2)+84 \sqrt{1-p} \sin (2 \text{$\theta $}_1-\text{$\theta $}_2) \\
				& +6 \sqrt{1-p} \sin (4 \text{$\theta $}_1-\text{$\theta $}_2)+84 \sqrt{1-p} \sin (2 \text{$\theta $}_1+\text{$\theta $}_2)+6 \sqrt{1-p} \sin (4 \text{$\theta $}_1+\text{$\theta $}_2)\\
				& +48 \sqrt{1-p} \sin (2 \text{$\theta $}_1-2 \text{$\theta $}_2)+66 \sqrt{1-p} \sin (4 \text{$\theta $}_1-2 \text{$\theta $}_2)-64 \sin (\text{$\theta $}_1) \cos ^3(\text{$\theta $}_1) \{(3 \sqrt{1-p}+5) \cos (\text{$\theta $}_2)\\
				& -(7 \sqrt{1-p}+11) \sin (\text{$\theta $}_2)\}+576 \sqrt{1-p} \sin ^2(\text{$\theta $}_1) \cos ^2(\text{$\theta $}_1) \cos (2 \text{$\theta $}_2)-48 \sqrt{1-p} \sin ^4(\text{$\theta $}_1) \cos (4 \text{$\theta $}_2)\\
				& +24 \sin ^4(\text{$\theta $}_1) ((3-4 p) \sin (4 \text{$\theta $}_2)-3 (p-1) \cos (4 \text{$\theta $}_2))+16 \sin ^3(\text{$\theta $}_1) \cos (\text{$\theta $}_1) \{13 \sin (\text{$\theta $}_2)+17 \sin (3 \text{$\theta $}_2)\\
				& -3 \cos (3 \text{$\theta $}_2)+(3 \sqrt{1-p}-11) \cos (\text{$\theta $}_2)+\sqrt{1-p} (38 \sin (\text{$\theta $}_2)+34 \sin (3 \text{$\theta $}_2)-9 \cos (3 \text{$\theta $}_2))\}\\
				& +8 \sin ^2(\text{$\theta $}_1) (2 p \sin (2 \text{$\theta $}_1) (-7 \sin (3 \text{$\theta $}_2)-3 \cos (\text{$\theta $}_2)+3 \cos (3 \text{$\theta $}_2))+3 (25-18 p) \sin (2 \text{$\theta $}_2)\\
				& +12 (3 p-1) \cos (2 \text{$\theta $}_2))-12 \cos (2 \text{$\theta $}_1) \{2 \sin ^2(\text{$\theta $}_1) ((22 p-15) \sin (2 \text{$\theta $}_2)-4 (p-1) \cos (2 \text{$\theta $}_2))\\
				& -21 p+2 \sqrt{1-p}-3\}+8 \sin (2 \text{$\theta $}_1) (29 \cos (\text{$\theta $}_2)+21 p \sin (\text{$\theta $}_2))+4 \sin (4 \text{$\theta $}_1) (3 \cos (\text{$\theta $}_2)+31 p \sin (\text{$\theta $}_2))\\
				& -81 (-p+2 \sqrt{1-p}+1) \cos (4 \text{$\theta $}_1)+228 \sqrt{1-p} \sin (2 \text{$\theta $}_2)+51 p+186 \sqrt{1-p}+813].
			\end{split}
		\end{equation}
		\begin{equation}\label{F_H2}
			\small
			\begin{split}
				F_{H_{2}} = & \frac{1}{15552}[-24 p \sin ^4(\text{$\theta $}_1) \sin (4 \text{$\theta $}_2)+24 (9 p-2) \sin ^2(\text{$\theta $}_1) \sin (2 \text{$\theta $}_2)-27 \sqrt{1-p} \sin (4 \text{$\theta $}_1-\text{$\theta $}_2)\\
				& -48 \sqrt{1-p} \sin (2 (\text{$\theta $}_1+\text{$\theta $}_2))-27 \sqrt{1-p} \sin (4 \text{$\theta $}_1+\text{$\theta $}_2)-30 \sqrt{1-p} \sin (2 \text{$\theta $}_1+3 \text{$\theta $}_2)\\
				& -6 \sqrt{1-p} \sin (4 \text{$\theta $}_1-2 \text{$\theta $}_2)-30 \sqrt{1-p} \sin (2 \text{$\theta $}_1-3 \text{$\theta $}_2)+54 \sqrt{1-p} \sin (2 \text{$\theta $}_1-\text{$\theta $}_2)\\
				& +54 \sqrt{1-p} \sin (2 \text{$\theta $}_1+\text{$\theta $}_2)+6 \sqrt{1-p} \sin (4 \text{$\theta $}_1+2 \text{$\theta $}_2)+15 \sqrt{1-p} \sin (4 \text{$\theta $}_1+3 \text{$\theta $}_2)\\
				& +48 \sqrt{1-p} \sin (2 \text{$\theta $}_1-2 \text{$\theta $}_2)+15 \sqrt{1-p} \sin (4 \text{$\theta $}_1-3 \text{$\theta $}_2)-3 \sqrt{1-p} \cos (4 (\text{$\theta $}_1+\text{$\theta $}_2))\\
				& -96 \sqrt{1-p} \cos (2 \text{$\theta $}_1+\text{$\theta $}_2)-8 \sqrt{1-p} \cos (4 \text{$\theta $}_1+\text{$\theta $}_2)-60 \sqrt{1-p} \cos (4 \text{$\theta $}_1+2 \text{$\theta $}_2)\\
				& -104 \sqrt{1-p} \cos (2 \text{$\theta $}_1+3 \text{$\theta $}_2)-60 \sqrt{1-p} \cos (4 \text{$\theta $}_1-2 \text{$\theta $}_2)-52 \sqrt{1-p} \cos (4 \text{$\theta $}_1-3 \text{$\theta $}_2)\\
				& -3 \sqrt{1-p} \cos (4 \text{$\theta $}_1-4 \text{$\theta $}_2)+96 \sqrt{1-p} \cos (2 \text{$\theta $}_1-\text{$\theta $}_2)+8 \sqrt{1-p} \cos (4 \text{$\theta $}_1-\text{$\theta $}_2)\\
				& +52 \sqrt{1-p} \cos (4 \text{$\theta $}_1+3 \text{$\theta $}_2)+12 \sqrt{1-p} \cos (2 \text{$\theta $}_1+4 \text{$\theta $}_2)+104 \sqrt{1-p} \cos (2 \text{$\theta $}_1-3 \text{$\theta $}_2)\\
				& +12 \sqrt{1-p} \cos (2 \text{$\theta $}_1-4 \text{$\theta $}_2)+96 (p-1) \sin ^3(\text{$\theta $}_1) \cos (\text{$\theta $}_1) \cos (3 \text{$\theta $}_2)\\
				& -560 (p-1) \sin ^3(\text{$\theta $}_1) \cos (\text{$\theta $}_1) \sin (3 \text{$\theta $}_2)+48 (p+3) \sin ^2(\text{$\theta $}_1) \cos (2 \text{$\theta $}_2)\\
				& -24 \cos (2 \text{$\theta $}_1) \{\sin ^2(\text{$\theta $}_1) ((18-11 p) \sin (2 \text{$\theta $}_2)+6 (p-1) \cos (2 \text{$\theta $}_2))-12 p+\sqrt{1-p}\}\\
				& +24 \sin (2 \text{$\theta $}_1) \cos (\text{$\theta $}_2) ((7 p-5) \cos (2 \text{$\theta $}_1)+p+1)+4 \sin (2 \text{$\theta $}_1) \sin (\text{$\theta $}_2) ((23 p+49) \cos (2 \text{$\theta $}_1)\\
				& +33 p+39)-6 (2 p+11 \sqrt{1-p}-2) \cos (4 \text{$\theta $}_1)+84 \sqrt{1-p} \sin (2 \text{$\theta $}_2)-18 \sqrt{1-p} \cos (4 \text{$\theta $}_2)\\
				& +120 \sqrt{1-p} \cos (2 \text{$\theta $}_2)+108 (p+7)+90 \sqrt{1-p}].
			\end{split}
		\end{equation}
		\begin{equation}\label{F_H3}
			\small
			\begin{split}
				F_{H_{3}} = & \frac{5}{7776}[-24 \sin ^4(\text{$\theta $}_1) \sin (4 \text{$\theta $}_2)-27 \sqrt{1-p} \sin (4 \text{$\theta $}_1-\text{$\theta $}_2)-24 \sqrt{1-p} \sin (2 (\text{$\theta $}_1+\text{$\theta $}_2))\\
				& -3 \sqrt{1-p} \sin (4 (\text{$\theta $}_1+\text{$\theta $}_2))-27 \sqrt{1-p} \sin (4 \text{$\theta $}_1+\text{$\theta $}_2)-18 \sqrt{1-p} \sin (4 \text{$\theta $}_1+2 \text{$\theta $}_2)\\
				& -30 \sqrt{1-p} \sin (2 \text{$\theta $}_1+3 \text{$\theta $}_2)-30 \sqrt{1-p} \sin (2 \text{$\theta $}_1-3 \text{$\theta $}_2)-12 \sqrt{1-p} \sin (2 \text{$\theta $}_1-4 \text{$\theta $}_2)\\
				& +54 \sqrt{1-p} \sin (2 \text{$\theta $}_1-\text{$\theta $}_2)+54 \sqrt{1-p} \sin (2 \text{$\theta $}_1+\text{$\theta $}_2)+15 \sqrt{1-p} \sin (4 \text{$\theta $}_1+3 \text{$\theta $}_2)\\
				& +12 \sqrt{1-p} \sin (2 \text{$\theta $}_1+4 \text{$\theta $}_2)+24 \sqrt{1-p} \sin (2 \text{$\theta $}_1-2 \text{$\theta $}_2)+18 \sqrt{1-p} \sin (4 \text{$\theta $}_1-2 \text{$\theta $}_2)\\
				& +15 \sqrt{1-p} \sin (4 \text{$\theta $}_1-3 \text{$\theta $}_2)+3 \sqrt{1-p} \sin (4 \text{$\theta $}_1-4 \text{$\theta $}_2)-3 \sqrt{1-p} \cos (4 (\text{$\theta $}_1+\text{$\theta $}_2))\\
				& -102 \sqrt{1-p} \cos (2 \text{$\theta $}_1+\text{$\theta $}_2)-5 \sqrt{1-p} \cos (4 \text{$\theta $}_1+\text{$\theta $}_2)-36 \sqrt{1-p} \cos (4 \text{$\theta $}_1+2 \text{$\theta $}_2)\\
				& -38 \sqrt{1-p} \cos (2 \text{$\theta $}_1+3 \text{$\theta $}_2)-36 \sqrt{1-p} \cos (4 \text{$\theta $}_1-2 \text{$\theta $}_2)-19 \sqrt{1-p} \cos (4 \text{$\theta $}_1-3 \text{$\theta $}_2)\\
				& -3 \sqrt{1-p} \cos (4 \text{$\theta $}_1-4 \text{$\theta $}_2)+102 \sqrt{1-p} \cos (2 \text{$\theta $}_1-\text{$\theta $}_2)+5 \sqrt{1-p} \cos (4 \text{$\theta $}_1-\text{$\theta $}_2)\\
				& +19 \sqrt{1-p} \cos (4 \text{$\theta $}_1+3 \text{$\theta $}_2)+12 \sqrt{1-p} \cos (2 \text{$\theta $}_1+4 \text{$\theta $}_2)+38 \sqrt{1-p} \cos (2 \text{$\theta $}_1-3 \text{$\theta $}_2)\\
				& +12 \sqrt{1-p} \cos (2 \text{$\theta $}_1-4 \text{$\theta $}_2)+72 (p-1) \sin ^4(\text{$\theta $}_1) \cos (4 \text{$\theta $}_2)+48 (p-2) \sin ^3(\text{$\theta $}_1) \cos (\text{$\theta $}_1) \cos (3 \text{$\theta $}_2)\\
				& +16 (5-2 p) \sin ^3(\text{$\theta $}_1) \cos (\text{$\theta $}_1) \sin (3 \text{$\theta $}_2)+48 \sin ^2(\text{$\theta $}_1) \cos (2 \text{$\theta $}_2) (-(p-1) \cos (2 \text{$\theta $}_1)\\
				& +3 p+1)-24 (2 p-1) \sin ^2(\text{$\theta $}_1) (\cos (2 \text{$\theta $}_1)+3) \sin (2 \text{$\theta $}_2)\\
				& +12 \sin (2 \text{$\theta $}_1) \cos (\text{$\theta $}_2) ((17 p-10) \cos (2 \text{$\theta $}_1)-p+2)+4 \sin (2 \text{$\theta $}_1) \sin (\text{$\theta $}_2) ((43-10 p) \cos (2 \text{$\theta $}_1)-6 p+45)\\
				& +36 (9 p-1) \cos (2 \text{$\theta $}_1)-24 \sqrt{1-p} \cos (2 \text{$\theta $}_1)+3 (p-1) \cos (4 \text{$\theta $}_1)-114 \sqrt{1-p} \cos (4 \text{$\theta $}_1)\\
				& -18 \sqrt{1-p} \sin (4 \text{$\theta $}_2)+84 \sqrt{1-p} \sin (2 \text{$\theta $}_2)-18 \sqrt{1-p} \cos (4 \text{$\theta $}_2)+72 \sqrt{1-p} \cos (2 \text{$\theta $}_2)\\
				& +57 p+138 \sqrt{1-p}+807].
			\end{split}
		\end{equation}
		\begin{equation}\label{F_H4}
			\small
			\begin{split}
				F_{H_{4}} = & \frac{5}{7776}[-24 p \sin ^4(\text{$\theta $}_1) \sin (4 \text{$\theta $}_2)+24 (9 p-2) \sin ^2(\text{$\theta $}_1) \sin (2 \text{$\theta $}_2)-27 \sqrt{1-p} \sin (4 \text{$\theta $}_1-\text{$\theta $}_2)\\
				& -48 \sqrt{1-p} \sin (2 (\text{$\theta $}_1+\text{$\theta $}_2))-27 \sqrt{1-p} \sin (4 \text{$\theta $}_1+\text{$\theta $}_2)-30 \sqrt{1-p} \sin (2 \text{$\theta $}_1+3 \text{$\theta $}_2)\\
				& -24 \sqrt{1-p} \sin (4 \text{$\theta $}_1-2 \text{$\theta $}_2)-30 \sqrt{1-p} \sin (2 \text{$\theta $}_1-3 \text{$\theta $}_2)+54 \sqrt{1-p} \sin (2 \text{$\theta $}_1-\text{$\theta $}_2)\\
				& +54 \sqrt{1-p} \sin (2 \text{$\theta $}_1+\text{$\theta $}_2)+24 \sqrt{1-p} \sin (4 \text{$\theta $}_1+2 \text{$\theta $}_2)+15 \sqrt{1-p} \sin (4 \text{$\theta $}_1+3 \text{$\theta $}_2)\\
				& +48 \sqrt{1-p} \sin (2 \text{$\theta $}_1-2 \text{$\theta $}_2)+15 \sqrt{1-p} \sin (4 \text{$\theta $}_1-3 \text{$\theta $}_2)-3 \sqrt{1-p} \cos (4 (\text{$\theta $}_1+\text{$\theta $}_2))\\
				& -108 \sqrt{1-p} \cos (2 \text{$\theta $}_1+\text{$\theta $}_2)-2 \sqrt{1-p} \cos (4 \text{$\theta $}_1+\text{$\theta $}_2)-42 \sqrt{1-p} \cos (4 \text{$\theta $}_1\\
				& +2 \text{$\theta $}_2)-68 \sqrt{1-p} \cos (2 \text{$\theta $}_1+3 \text{$\theta $}_2)-42 \sqrt{1-p} \cos (4 \text{$\theta $}_1-2 \text{$\theta $}_2)-34 \sqrt{1-p} \cos (4 \text{$\theta $}_1\\
				& -3 \text{$\theta $}_2)-3 \sqrt{1-p} \cos (4 \text{$\theta $}_1-4 \text{$\theta $}_2)+108 \sqrt{1-p} \cos (2 \text{$\theta $}_1-\text{$\theta $}_2)+2 \sqrt{1-p} \cos (4 \text{$\theta $}_1-\text{$\theta $}_2)\\
				& +34 \sqrt{1-p} \cos (4 \text{$\theta $}_1+3 \text{$\theta $}_2)+12 \sqrt{1-p} \cos (2 \text{$\theta $}_1+4 \text{$\theta $}_2)+68 \sqrt{1-p} \cos (2 \text{$\theta $}_1-3 \text{$\theta $}_2)\\
				& +12 \sqrt{1-p} \cos (2 \text{$\theta $}_1-4 \text{$\theta $}_2)+96 (p-1) \sin ^3(\text{$\theta $}_1) \cos (\text{$\theta $}_1) \cos (3 \text{$\theta $}_2)\\
				& -272 (p-1) \sin ^3(\text{$\theta $}_1) \cos (\text{$\theta $}_1) \sin (3 \text{$\theta $}_2)+48 (p+3) \sin ^2(\text{$\theta $}_1) \cos (2 \text{$\theta $}_2)\\
				& -24 \cos (2 \text{$\theta $}_1) \{\sin ^2(\text{$\theta $}_1) ((18-11 p) \sin (2 \text{$\theta $}_2)+6 (p-1) \cos (2 \text{$\theta $}_2))-12 p+\sqrt{1-p}\}\\
				& +24 \sin (2 \text{$\theta $}_1) \cos (\text{$\theta $}_2) ((7 p-5) \cos (2 \text{$\theta $}_1)+p+1)+4 \sin (2 \text{$\theta $}_1) \sin (\text{$\theta $}_2) ((17 p+55) \cos (2 \text{$\theta $}_1)\\
				& +39 p+33)-6 (2 p+17 \sqrt{1-p}-2) \cos (4 \text{$\theta $}_1)+48 \sqrt{1-p} \sin (2 \text{$\theta $}_2)-18 \sqrt{1-p} \cos (4 \text{$\theta $}_2)\\
				& +84 \sqrt{1-p} \cos (2 \text{$\theta $}_2)+108 (p+7)+126 \sqrt{1-p}].
			\end{split}
		\end{equation}
		\begin{equation}\label{F_H5}
			\small
			\begin{split}
				F_{H_{5}} = & \frac{5}{7776}[24 (5 p-4) \sin ^4(\text{$\theta $}_1) \sin (4 \text{$\theta $}_2)-6 \sqrt{1-p} \sin (4 \text{$\theta $}_1-\text{$\theta $}_2)-24 \sqrt{1-p} \sin (2 (\text{$\theta $}_1+\text{$\theta $}_2))\\
				& -3 \sqrt{1-p} \sin (4 (\text{$\theta $}_1+\text{$\theta $}_2))-6 \sqrt{1-p} \sin (4 \text{$\theta $}_1+\text{$\theta $}_2)-6 \sqrt{1-p} \sin (4 \text{$\theta $}_1+3 \text{$\theta $}_2)\\
				& -6 \sqrt{1-p} \sin (4 \text{$\theta $}_1-3 \text{$\theta $}_2)-12 \sqrt{1-p} \sin (2 \text{$\theta $}_1-4 \text{$\theta $}_2)+12 \sqrt{1-p} \sin (2 \text{$\theta $}_1-\text{$\theta $}_2)\\
				& +12 \sqrt{1-p} \sin (2 \text{$\theta $}_1+\text{$\theta $}_2)+12 \sqrt{1-p} \sin (2 \text{$\theta $}_1+3 \text{$\theta $}_2)+12 \sqrt{1-p} \sin (2 \text{$\theta $}_1+4 \text{$\theta $}_2)\\
				& +24 \sqrt{1-p} \sin (2 \text{$\theta $}_1-2 \text{$\theta $}_2)+12 \sqrt{1-p} \sin (2 \text{$\theta $}_1-3 \text{$\theta $}_2)+3 \sqrt{1-p} \sin (4 \text{$\theta $}_1-4 \text{$\theta $}_2)\\
				& -\sqrt{1-p} \cos (4 \text{$\theta $}_1-\text{$\theta $}_2)-3 \sqrt{1-p} \cos (4 (\text{$\theta $}_1+\text{$\theta $}_2))-114 \sqrt{1-p} \cos (2 \text{$\theta $}_1+\text{$\theta $}_2)\\
				& -36 \sqrt{1-p} \cos (4 \text{$\theta $}_1+2 \text{$\theta $}_2)-2 \sqrt{1-p} \cos (2 \text{$\theta $}_1+3 \text{$\theta $}_2)-36 \sqrt{1-p} \cos (4 \text{$\theta $}_1-2 \text{$\theta $}_2)\\
				& -\sqrt{1-p} \cos (4 \text{$\theta $}_1-3 \text{$\theta $}_2)-3 \sqrt{1-p} \cos (4 \text{$\theta $}_1-4 \text{$\theta $}_2)+114 \sqrt{1-p} \cos (2 \text{$\theta $}_1-\text{$\theta $}_2)\\
				&+\sqrt{1-p} \cos (4 \text{$\theta $}_1+\text{$\theta $}_2) +\sqrt{1-p} \cos (4 \text{$\theta $}_1+3 \text{$\theta $}_2)+12 \sqrt{1-p} \cos (2 \text{$\theta $}_1+4 \text{$\theta $}_2)\\
				& +2 \sqrt{1-p} \cos (2 \text{$\theta $}_1-3 \text{$\theta $}_2)+12 \sqrt{1-p} \cos (2 \text{$\theta $}_1-4 \text{$\theta $}_2)+72 (p-1) \sin ^4(\text{$\theta $}_1) \cos (4 \text{$\theta $}_2)\\
				& +48 (3-4 p) \sin ^3(\text{$\theta $}_1) \cos (\text{$\theta $}_1) \cos (3 \text{$\theta $}_2)+16 (25 p-22) \sin ^3(\text{$\theta $}_1) \cos (\text{$\theta $}_1) \sin (3 \text{$\theta $}_2)\\
				& +48 \sin ^2(\text{$\theta $}_1) \cos (2 \text{$\theta $}_2) (-(p-1) \cos (2 \text{$\theta $}_1)+3 p+1)+24 \sin ^2(\text{$\theta $}_1) \sin (2 \text{$\theta $}_2) ((p-2) \cos (2 \text{$\theta $}_1)-5 p+6)\\
				& +12 \sin (2 \text{$\theta $}_1) \cos (\text{$\theta $}_2) ((12 p-5) \cos (2 \text{$\theta $}_1)+4 p-3)-4 \sin (2 \text{$\theta $}_1) \sin (\text{$\theta $}_2) ((p-22) \cos (2 \text{$\theta $}_1)+39 p-66)\\
				& +36 (9 p-1) \cos (2 \text{$\theta $}_1)-24 \sqrt{1-p} \cos (2 \text{$\theta $}_1)+3 (p-1) \cos (4 \text{$\theta $}_1)-114 \sqrt{1-p} \cos (4 \text{$\theta $}_1)\\
				& -18 \sqrt{1-p} \sin (4 \text{$\theta $}_2)+48 \sqrt{1-p} \sin (2 \text{$\theta $}_2)-18 \sqrt{1-p} \cos (4 \text{$\theta $}_2)+72 \sqrt{1-p} \cos (2 \text{$\theta $}_2)\\
				& +57 p+138 \sqrt{1-p}+807].
			\end{split}
		\end{equation}
       
       \begin{figure}
       	\includegraphics[scale = .75]{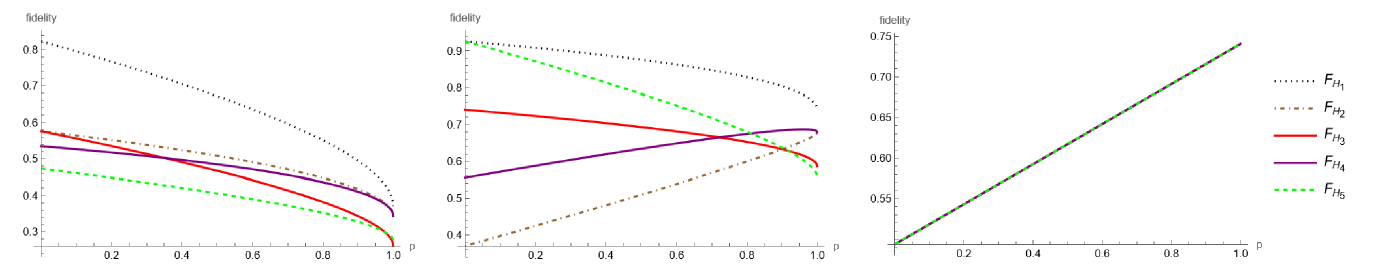}
       	\caption{For all five hypergraph states we plot the teleportation fidelity with respect to the channel parameter $p$, in case of markovian amplitude damping channel. Three subfigures consider different states $\ket{\phi}_a$ to teleport. We consider $\ket{\phi}_a = \ket{+}$, $\ket{\phi}_a = \frac{1}{\sqrt{2}}(\ket{0} + \ket{2})$, and $\ket{\phi}_a = \ket{0}$ in the left, middle, and right subfigure, respectively.}
       	\label{fig:Markovian_amplitude_damping_channel}
       \end{figure}
       
       \begin{figure}
       	\centering
       	\begin{subfigure}{0.32\textwidth}
       		\centering
       		\includegraphics[scale = .37]{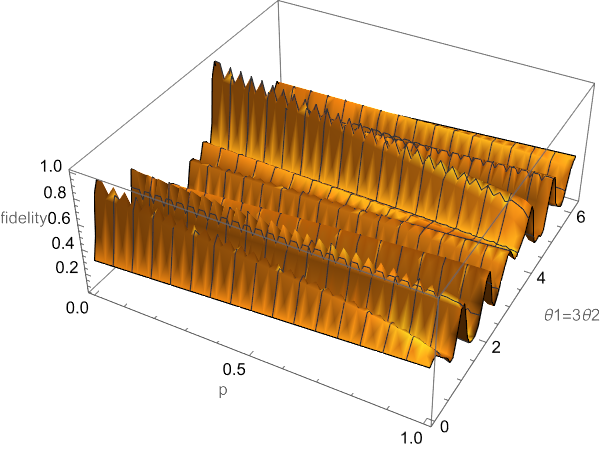}
       		\caption{$F_{H_1}$}
       	\end{subfigure}
       	\begin{subfigure}{0.32\textwidth}
       		\centering
       		\includegraphics[scale = .37]{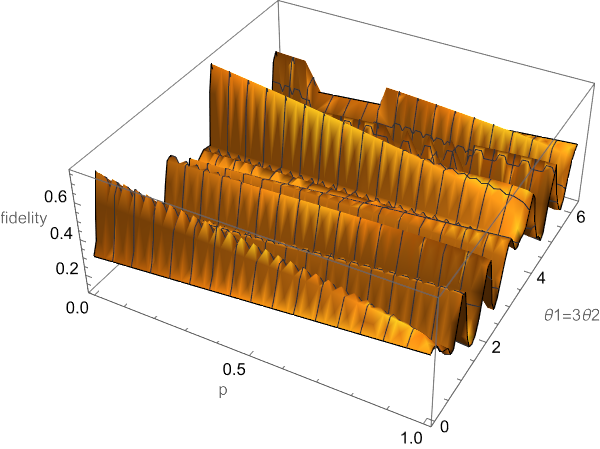}
       		\caption{$F_{H_2}$}
       	\end{subfigure}
       	\begin{subfigure}{0.32\textwidth}
       		\centering
       		\includegraphics[scale = .37]{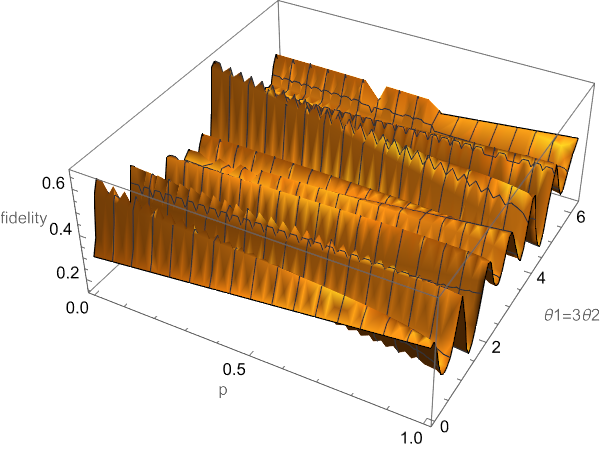}
       		\caption{$F_{H_1}$}
       	\end{subfigure}
       	\\
       	\begin{subfigure}{0.32\textwidth}
       		\centering
       		\includegraphics[scale = .37]{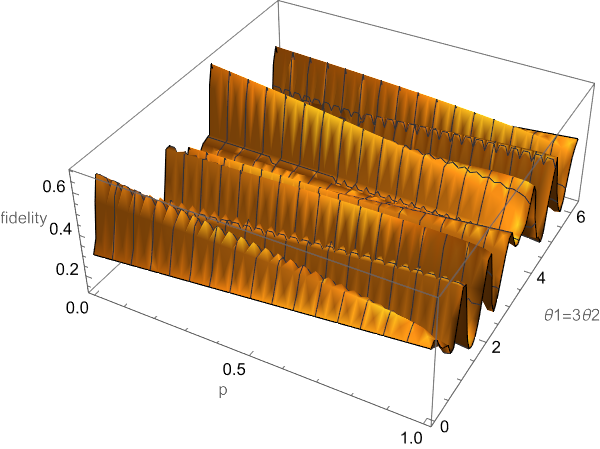}
       		\caption{$F_{H_4}$}
       	\end{subfigure}
       	\begin{subfigure}{0.32\textwidth}
       		\centering
       		\includegraphics[scale = .35]{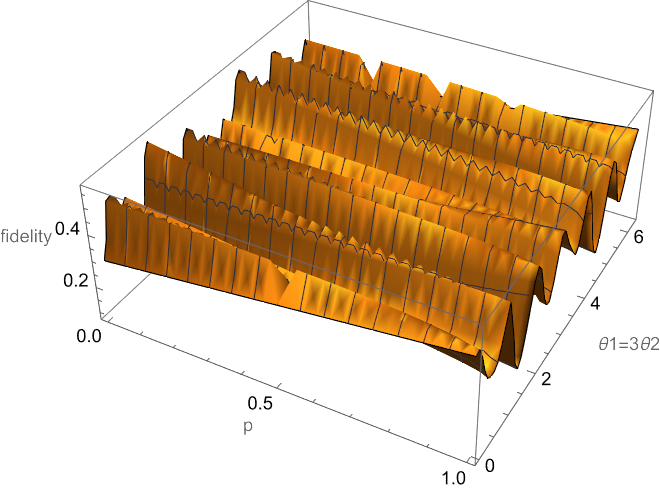}
       		\caption{$F_{H_5}$}
       	\end{subfigure}
       	\label{fig:Markovian Amplitude Damping Channel}
       	\caption{The quantum teleportation fidelity under Markovian amplitude damping noise is plotted with respect to the channel parameter $p$ and the state parameter $\theta_2$. Different subfigure is generated for different hypergraph states.}
       \end{figure}
       
       We also consider the non-Markovian amplitude damping channel for our discussion. We generalize it for qubit to qutrit. The Kraus operations mentioned in equation (\ref{Amplitude damping channel}) will be updated as:
       \begin{equation}
       	 K_{0,0} = \ket{0} \bra{0} + \sqrt{1 - \lambda(t)} \sum_{i=1}^{2} \ket{i} \bra{i},\\ K_{0,i} = \sqrt{\lambda(t)} \ket{0} \bra{i} \text{for}~  1 \le i \le 2.
       \end{equation}
       where $\lambda(t) = 1 - e^{-gt} ( \frac{g}{l} \sinh[\frac{lt}{2}] + \cosh[\frac{lt}{2}])^2$, and $l = \sqrt{g^2 - 2 \gamma g}$. Recall that $g$ is the spectral width of the system-environment coupling and $\gamma$ is the spontaneous emission rate.
       
       We work out the expression of teleportation fidelity for the non-Markovian amplitude damping channel. These expression are similar to equations (\ref{F_H1}), (\ref{F_H2}), (\ref{F_H3}), (\ref{F_H4}), and (\ref{F_H5}) where $p$ is replaced by $\lambda(t)$. We consider three qutrit states for our calculations, which are $\ket{+}$, $\frac{1}{\sqrt{2}} (\ket{0} + \ket{2})$ and $\ket{0}$. Fixing $g = 1$, and $\gamma = 10$ we plot the teleportation fidelity with respect to the parameter $t$, which is available in Figure \ref{fig:Non-Markovian Amplitude Damping channel}. The signature of non-Markovianity is observed.
       
       If we consider $\theta_1 = 3 \theta_2$, equation (\ref{quantum state vector}) then the number of state parameters is reduced to one $\theta_2$. Then we can plot fidelity as a function of the state parameter $\theta_2$ and the channel parameter $t$. The corresponding surface plots of fidelity for different hypergraph states are in Figure \ref{fig:Non_Markovian_Amplitude_Damping_Channel}. 
       
       \begin{figure}
       	\includegraphics[scale = .9]{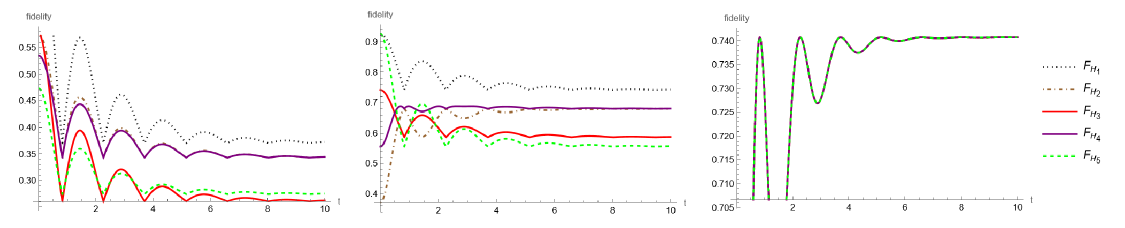}
       	\caption{For all five hypergraph states we plot the teleportation fidelity with respect to the channel parameter $p$, in case of non-Markovian amplitude damping channel. Three subfigues consider different states $\ket{\phi}_a$ to teleport. We consider $\ket{\phi}_a = \ket{+}$, $\ket{\phi}_a = \frac{1}{\sqrt{2}}(\ket{0} + \ket{2})$, and $\ket{\phi}_a = \ket{0}$ in the left, middle, and right subfigure, respectively.}
       	\label{fig:Non-Markovian Amplitude Damping channel}
       \end{figure}
       
       \begin{figure}[hbt!]
       	\centering
       	\begin{subfigure}{0.32\textwidth}
       		\centering
       		\includegraphics[scale = .34]{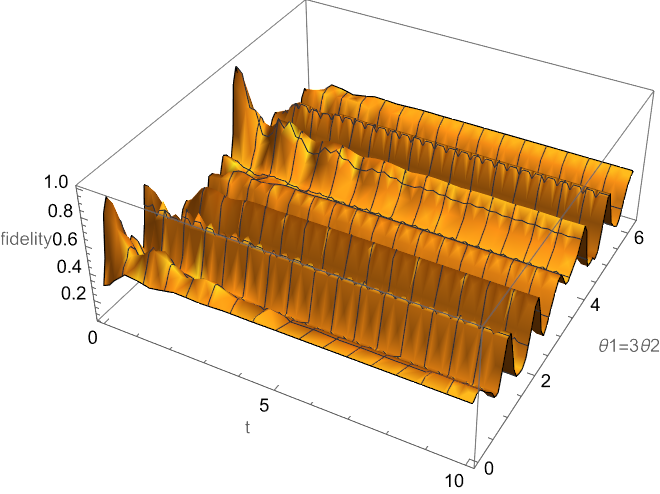}
       		\caption{$F_{H_1}$}
       	\end{subfigure}
       	\begin{subfigure}{0.32\textwidth}
       		\centering
       		\includegraphics[scale = .37]{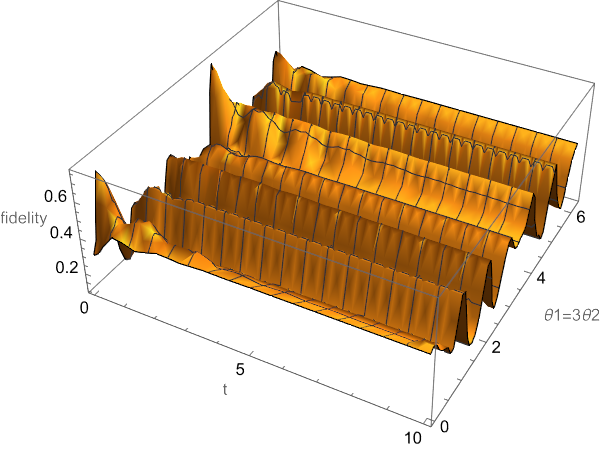}
       		\caption{$F_{H_2}$}
       	\end{subfigure}
       	\begin{subfigure}{0.32\textwidth}
       		\centering
       		\includegraphics[scale = .37]{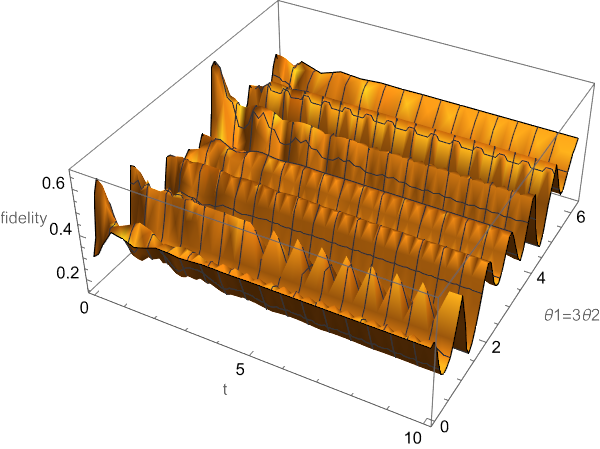}
       		\caption{$F_{H_3}$}
       	\end{subfigure}
       	\\
       	\begin{subfigure}{0.32\textwidth}
       		\centering
       		\includegraphics[scale = .37]{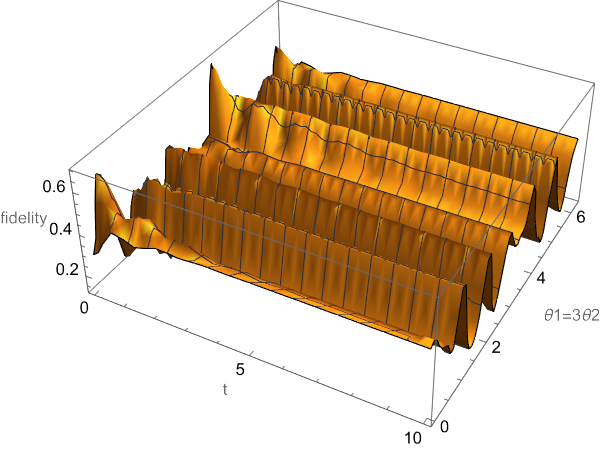}
       		\caption{$F_{H_4}$}
       	\end{subfigure}
       	\begin{subfigure}{0.32\textwidth}
       		\centering
       		\includegraphics[scale = .37]{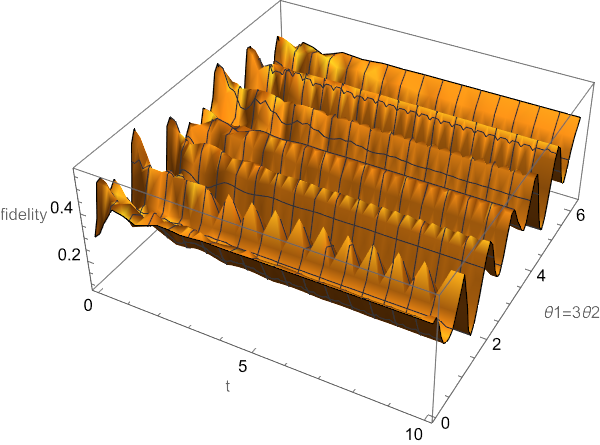}
       		\caption{$F_{H_5}$}
       	\end{subfigure}
       	\caption{The quantum teleportation fidelity under non-Markovian amplitude damping noise is plotted with respect to the channel parameter $t$ and the state parameter $\theta_2$. Different subfigure is generated for different hypergraphs. Also, here we fix $g=1$, and $\gamma=10$.}
       	\label{fig:Non_Markovian_Amplitude_Damping_Channel}
       \end{figure}

	 \subsection{Markovian and non-Markovian Dephasing channel:}
	    
	    The Markovian dephasing channel for a qubit system is represented by the Kraus operators
             \begin{center}
				\begin{tabular}{c c c c}
					$K_1 = \sqrt{(1 - p)} I_2,$ & $K_2 = \sqrt{\frac {p} {3}} \sigma_x,$ & $K_3 = \sqrt{\frac {p} {3}} \sigma_y,$ & $K_4 = \sqrt{\frac {p} {3}} \sigma_z.$
				\end{tabular}
			\end{center}
			For a qutrit system we generalize the Markovian dephasing channel using the following Kraus operators
			\begin{equation} \label{Markovian dephasing}
				{K_{r,s}} = \begin{cases}
					\sqrt{\frac{1 - p}{3}} I_3 & \text{when}~ r = 0, s=0, \\ 
					\sqrt{\frac {p} {24}} W_{r,s} & \text{for}~ 0\leq r,s \leq 2 ~\text{and}~ (r,s) \neq (0,0),
				\end{cases} 
			\end{equation}
			 where the Weyl operators $W_{r,s}, r,s = 0, 1, 2$ are mentioned in equation (\ref{Weyl operator}). Using Lemma \ref{multiqutrit_channel_lemma}, we can construct three-qutrit channel which can be described by the following Kraus operators:
	         
	         \small
			 \noindent\begin{tabular}{p{5.5cm} p{5.5cm} p{5.5cm}}
			 	\centering
					$K_1=\sqrt{\frac{1-p}{3}} (K_{0,0}\otimes I_3 \otimes I_3)$,& $K_2=\sqrt{\frac{p}{24}} (K_{0,1} \otimes I_3 \otimes I_3)$,& $K_3=\sqrt{\frac{p}{24}} (K_{0,2} \otimes I_3 \otimes I_3)$,\\
					$K_4=\sqrt{\frac{p}{24}} (K_{1,0} \otimes I_3 \otimes I_3)$,& $K_5=\sqrt{\frac{p}{24}} (K_{1,1} \otimes I_3 \otimes I_3)$,& $K_6=\sqrt{\frac{p}{24}} (K_{1,2} \otimes I_3 \otimes I_3)$,\\
					$K_7=\sqrt{\frac{p}{24}} (K_{2,0} \otimes I_3 \otimes I_3)$,& $K_8=\sqrt{\frac{p}{24}} (K_{2,1} \otimes I_3 \otimes I_3)$,& $K_9=\sqrt{\frac{p}{24}} (K_{2,2} \otimes I_3 \otimes I_3)$,\\
					$K_{10}=\sqrt{\frac{1-p}{3}} (I_3\otimes K_{0,0} \otimes I_3)$,& $K_{11}=\sqrt{\frac{p}{24}} (I_3 \otimes K_{0,1} \otimes I_3)$,& $K_{12}=\sqrt{\frac{p}{24}} (I_3 \otimes K_{0,2} \otimes I_3)$,\\
					$K_{13}=\sqrt{\frac{p}{24}} (I_3 \otimes K_{1,0} \otimes I_3)$,& $K_{14}=\sqrt{\frac{p}{24}} (I_3 \otimes K_{1,1} \otimes I_3)$,& $K_{15}=\sqrt{\frac{p}{24}} (I_3 \otimes K_{1,2} \otimes I_3)$,\\
					$K_{16}=\sqrt{\frac{p}{24}} (I_3 \otimes K_{2,0} \otimes I_3)$,& $K_{17}=\sqrt{\frac{p}{24}} (I_3 \otimes K_{2,1} \otimes I_3)$,& $K_{18}=\sqrt{\frac{p}{24}} (I_3 \otimes K_{2,2} \otimes I_3)$,\\
					$K_{19}=\sqrt{\frac{1-p}{3}} (I_3\otimes I_3 \otimes K_{0,0})$,& $K_{20}=\sqrt{\frac{p}{24}} (I_3\otimes I_3 \otimes K_{0,1})$,& $K_{21}=\sqrt{\frac{p}{24}} (I_3\otimes I_3 \otimes K_{0,2})$,\\
					$K_{22}=\sqrt{\frac{p}{24}} (I_3\otimes I_3 \otimes K_{1,0})$,& $K_{23}=\sqrt{\frac{p}{24}} (I_3\otimes I_3 \otimes K_{1,1})$,& $K_{24}=\sqrt{\frac{p}{24}} (I_3\otimes I_3 \otimes K_{1,2})$,\\
					$K_{25}=\sqrt{\frac{p}{24}} (I_3\otimes I_3 \otimes K_{2,0})$,& $K_{26}=\sqrt{\frac{p}{24}} (I_3\otimes I_3 \otimes K_{2,1})$,& $K_{27}=\sqrt{\frac{p}{24}} (I_3\otimes I_3 \otimes K_{2,2})$.
				\end{tabular}
				
			We transfer a qutrit quantum state mentioned in equation (\ref{quantum state vector}) using our quantum teleportation procedure, where the hypergraph states were distributed via a Markovian Dephasing channel. The teleportation fidelity are as follows: 	
			\begin{equation}
				\small
				\begin{aligned}
					F_{H_{1}}&=\frac{5}{20736}[8 | p - 1|  \{8 \sin ^2(\text{$\theta $}_1) (16 \cos ^2(\text{$\theta $}_1) \cos (2 \text{$\theta $}_2)+\sin ^2(\text{$\theta $}_1) (3 \sin (4 \text{$\theta $}_2)+\cos (4 \text{$\theta $}_2))\\
					& +\sin (2 \text{$\theta $}_1) (17 \sin (3 \text{$\theta $}_2)-4 \cos (3 \text{$\theta $}_2))+55 \sin (2 \text{$\theta $}_2))+4 \cos (2 \text{$\theta $}_1) (74 \sin ^2(\text{$\theta $}_1) \sin (2 \text{$\theta $}_2)\\
					&+\sin (2 \text{$\theta $}_1) (7 \sin (\text{$\theta $}_2)-4 \cos (\text{$\theta $}_2))+1)+4 \sin (2 \text{$\theta $}_1) (41 \sin (\text{$\theta $}_2)+20 \cos (\text{$\theta $}_2))-81 \cos (4 \text{$\theta $}_1)\\
					& +333\}+ p \{-8 \sin ^2(\text{$\theta $}_1) (16 \cos ^2(\text{$\theta $}_1) \cos (2 \text{$\theta $}_2)+\sin ^2(\text{$\theta $}_1) (9 \sin (4 \text{$\theta $}_2)+\cos (4 \text{$\theta $}_2))\\
					& +\sin (2 \text{$\theta $}_1) (5 \sin (3 \text{$\theta $}_2)-4 \cos (3 \text{$\theta $}_2))-35 \sin (2 \text{$\theta $}_2))+4 \cos (2 \text{$\theta $}_1) (-86 \sin ^2(\text{$\theta $}_1) \sin (2 \text{$\theta $}_2)\\
					& +\sin (2 \text{$\theta $}_1) (29 \sin (\text{$\theta $}_2)+4 \cos (\text{$\theta $}_2))-1)+4 \sin (2 \text{$\theta $}_1) (139 \sin (\text{$\theta $}_2)+28 \cos (\text{$\theta $}_2))\\
					& +81 \cos (4 \text{$\theta $}_1)+1971\}].
				\end{aligned}
			\end{equation}
			\begin{equation}
				\small
				\begin{aligned}
					F_{H_{2}}&=\frac{5}{10368}[-8 | p-1|  \{8 \sin ^2(\text{$\theta $}_1) (-26 \cos ^2(\text{$\theta $}_1) \cos (2 \text{$\theta $}_2)+\sin ^2(\text{$\theta $}_1) \cos (4 \text{$\theta $}_2)\\
					& +\sin (\text{$\theta $}_1) \cos (\text{$\theta $}_1) (7 \cos (3 \text{$\theta $}_2)-29 \sin (3 \text{$\theta $}_2))-2 \sin (2 \text{$\theta $}_2))+\cos (2 \text{$\theta $}_1) (80 \sin ^2(\text{$\theta $}_1) \sin (2 \text{$\theta $}_2)\\
					& +38 \sin (2 \text{$\theta $}_1) (\cos (\text{$\theta $}_2)-\sin (\text{$\theta $}_2))+4)-2 \sin (2 \text{$\theta $}_1) (29 \sin (\text{$\theta $}_2)+11 \cos (\text{$\theta $}_2))+9 \cos (4 \text{$\theta $}_1)-141\}\\
					& +4 p \sin (\text{$\theta $}_1) \{-52 \sin (\text{$\theta $}_1) \cos ^2(\text{$\theta $}_1) \cos (2 \text{$\theta $}_2)+2 \sin ^3(\text{$\theta $}_1) \cos (4 \text{$\theta $}_2)\\
					& +\cos (\text{$\theta $}_1) ((19 \cos (2 \text{$\theta $}_1)+13) \cos (\text{$\theta $}_2)+2 \sin ^2(\text{$\theta $}_1) (7 \cos (3 \text{$\theta $}_2)-11 \sin (3 \text{$\theta $}_2))+(35 \cos (2 \text{$\theta $}_1)+61) \sin (\text{$\theta $}_2))\\
					& +4 \sin (\text{$\theta $}_1) (11-4 \cos (2 \text{$\theta $}_1)) \sin (2 \text{$\theta $}_2)\}+p (4 \cos (2 \text{$\theta $}_1)+9 \cos (4 \text{$\theta $}_1)+1011)].
				\end{aligned}
			\end{equation}
			\begin{equation}
				\small
				\begin{aligned}
					F_{H_{3}}&=\frac{5}{20736}[8 | p-1|  \{8 \sin ^2(\text{$\theta $}_1) (28 \cos ^2(\text{$\theta $}_1) \cos (2 \text{$\theta $}_2)-\sin ^2(\text{$\theta $}_1) (3 \sin (4 \text{$\theta $}_2)+5 \cos (4 \text{$\theta $}_2))\\
					& +\sin (2 \text{$\theta $}_1) (8 \sin (3 \text{$\theta $}_2)-7 \cos (3 \text{$\theta $}_2))+13 \sin (2 \text{$\theta $}_2))+4 \cos (2 \text{$\theta $}_1) (14 \sin ^2(\text{$\theta $}_1) \sin (2 \text{$\theta $}_2)-5)\\
					& +2 (22 \sin (2 \text{$\theta $}_1)-19 \sin (4 \text{$\theta $}_1)) \cos (\text{$\theta $}_2)+64 \sin (2 \text{$\theta $}_1) (\cos (2 \text{$\theta $}_1)+2) \sin (\text{$\theta $}_2)-39 \cos (4 \text{$\theta $}_1)+315\}\\
					& +p \{8 \sin ^2(\text{$\theta $}_1) (-28 \cos ^2(\text{$\theta $}_1) \cos (2 \text{$\theta $}_2)+\sin ^2(\text{$\theta $}_1) (5 \cos (4 \text{$\theta $}_2)-9 \sin (4 \text{$\theta $}_2))\\
					& +\sin (2 \text{$\theta $}_1) (7 \cos (3 \text{$\theta $}_2)-2 \sin (3 \text{$\theta $}_2))-13 \sin (2 \text{$\theta $}_2))+4 \cos (2 \text{$\theta $}_1) (-14 \sin ^2(\text{$\theta $}_1) \sin (2 \text{$\theta $}_2)\\
					& +\sin (2 \text{$\theta $}_1) (26 \sin (\text{$\theta $}_2)+19 \cos (\text{$\theta $}_2))+5)-44 \sin (2 \text{$\theta $}_1) (\cos (\text{$\theta $}_2)\\
					& -2 \sin (\text{$\theta $}_2))+39 \cos (4 \text{$\theta $}_1)+1989\}].
				\end{aligned}
			\end{equation}
			\begin{equation}
				\small
				\begin{aligned}
					F_{H_{4}}&=\frac{5}{10368}[p \{8 \sin ^2(\text{$\theta $}_1) (-20 \cos ^2(\text{$\theta $}_1) \cos (2 \text{$\theta $}_2)+\sin ^2(\text{$\theta $}_1) \cos (4 \text{$\theta $}_2)+\sin (\text{$\theta $}_1) \cos (\text{$\theta $}_1) (\sin (3 \text{$\theta $}_2)+7 \cos (3 \text{$\theta $}_2))\\
					& +25 \sin (2 \text{$\theta $}_2))+2 \sin (2 \text{$\theta $}_1) (61 \sin (\text{$\theta $}_2)+13 \cos (\text{$\theta $}_2))+4 \cos (2 \text{$\theta $}_1) (19 \sin (\text{$\theta $}_1) \cos (\text{$\theta $}_1) \cos (\text{$\theta $}_2)\\
					& +5 \sin (\text{$\theta $}_1) \sin (\text{$\theta $}_2) (7 \cos (\text{$\theta $}_1)-4 \sin (\text{$\theta $}_1) \cos (\text{$\theta $}_2))+1)+15 \cos (4 \text{$\theta $}_1)+1005\}\\
					& -8 | p-1|  \{-58 \sin (2 \text{$\theta $}_1) \sin (\text{$\theta $}_2)+8 \sin ^2(\text{$\theta $}_1) (-20 \cos ^2(\text{$\theta $}_1) \cos (2 \text{$\theta $}_2)+\sin ^2(\text{$\theta $}_1) \cos (4 \text{$\theta $}_2)\\
					& +\sin (\text{$\theta $}_1) \cos (\text{$\theta $}_1) (7 \cos (3 \text{$\theta $}_2)-17 \sin (3 \text{$\theta $}_2))+\sin (2 \text{$\theta $}_2))-44 \sin (\text{$\theta $}_1) \cos (\text{$\theta $}_1) \cos (\text{$\theta $}_2)\\
					& +19 \sin (4 \text{$\theta $}_1) \cos (\text{$\theta $}_2)+\cos (2 \text{$\theta $}_1) (4 \sin (\text{$\theta $}_1) \sin (\text{$\theta $}_2) (52 \sin (\text{$\theta $}_1) \cos (\text{$\theta $}_2)-19 \cos (\text{$\theta $}_1))+4)\\
					& +15 \cos (4 \text{$\theta $}_1)-147\}].
				\end{aligned}
			\end{equation}
			\begin{equation}
				\small
				\begin{aligned}
					F_{H_{5}}&=\frac{5}{20736}[p \{8 \sin ^2(\text{$\theta $}_1) (-28 \cos ^2(\text{$\theta $}_1) \cos (2 \text{$\theta $}_2)+5 \sin ^2(\text{$\theta $}_1) \cos (4 \text{$\theta $}_2)+\sin (2 \text{$\theta $}_1) (\sin (3 \text{$\theta $}_2)-5 \cos (3 \text{$\theta $}_2))\\
					& +20 \sin (2 \text{$\theta $}_2))+4 \cos (2 \text{$\theta $}_1) (16 \sin ^2(\text{$\theta $}_1) \sin (2 \text{$\theta $}_2)+\sin (2 \text{$\theta $}_1) (7 \cos (\text{$\theta $}_2)-\sin (\text{$\theta $}_2))+5)\\
					& +4 \sin (2 \text{$\theta $}_1) (25 \sin (\text{$\theta $}_2)+\cos (\text{$\theta $}_2))+39 \cos (4 \text{$\theta $}_1)+1989\}-8 | p-1|  \{8 \sin ^2(\text{$\theta $}_1) (-28 \cos ^2(\text{$\theta $}_1) \cos (2 \text{$\theta $}_2)\\
					& +\sin ^2(\text{$\theta $}_1) (6 \sin (4 \text{$\theta $}_2)+5 \cos (4 \text{$\theta $}_2))+\sin (2 \text{$\theta $}_1) (7 \sin (3 \text{$\theta $}_2)-5 \cos (3 \text{$\theta $}_2))-10 \sin (2 \text{$\theta $}_2))\\
					& +4 \cos (2 \text{$\theta $}_1) (4 \sin ^2(\text{$\theta $}_1) \sin (2 \text{$\theta $}_2)+7 \sin (2 \text{$\theta $}_1) (\cos (\text{$\theta $}_2)-\sin (\text{$\theta $}_2))+5)+4 \sin (2 \text{$\theta $}_1) (\cos (\text{$\theta $}_2)\\
					& -41 \sin (\text{$\theta $}_2))+39 \cos (4 \text{$\theta $}_1)-315\}].
				\end{aligned}
			\end{equation}
       		
       		For $\ket{\phi}_a = \ket{+}$, $\frac{1}{\sqrt{2}}(\ket{0} + \ket{2})$, and $\ket{0}$ we plot the teleportation fidelity with respect to the channel parameter $p$ in Figure \ref{fig:MarkovianDephasing}.
	       
	       \begin{figure}
		       	\includegraphics[scale = .75]{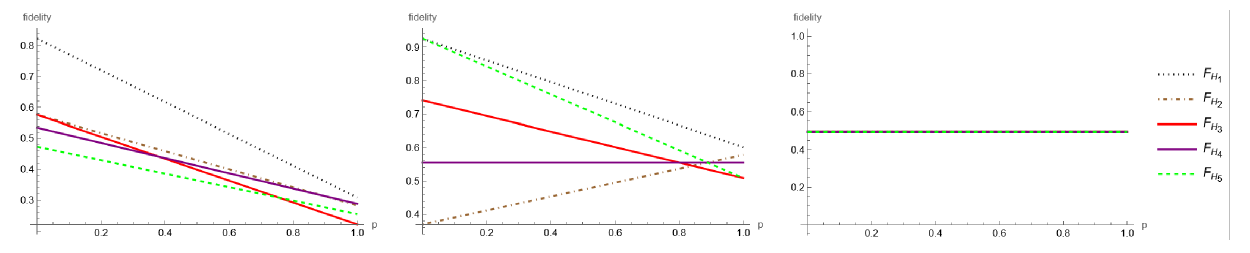}
		       	\caption{For all five hypergraph states we plot the teleportation fidelity with respect to the channel parameter $p$, in case of Markovian dephasing noise. Three subfigures consider different states $\ket{\phi}_a$ to teleport.  We consider $\ket{\phi}_a = \ket{+}$, $\ket{\phi}_a = \frac{1}{\sqrt{2}}(\ket{0} + \ket{2})$, and $\ket{\phi}_a = \ket{0}$ in the left, middle, and right subfigure, respectively.}
		       	\label{fig:MarkovianDephasing}
	       \end{figure}
       
       \begin{figure}
       	\centering
       	\begin{subfigure}{0.32\textwidth}
       		\centering
       		\includegraphics[scale = .37]{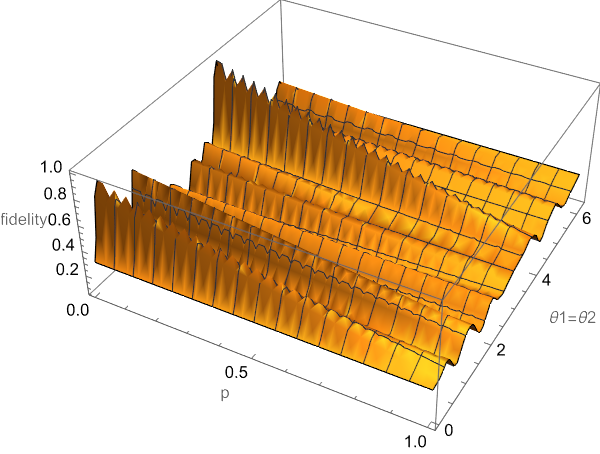}
       		\caption{$F_{H_1}$}
       	\end{subfigure}
       	\begin{subfigure}{0.32\textwidth}
       		\centering
       		\includegraphics[scale = .37]{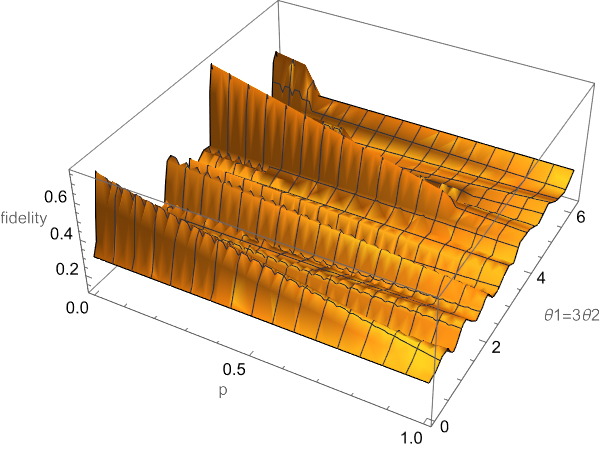}
       		\caption{$F_{H_2}$}
       \end{subfigure}
       	\begin{subfigure}{0.32\textwidth}
       		\centering
       		\includegraphics[scale = .37]{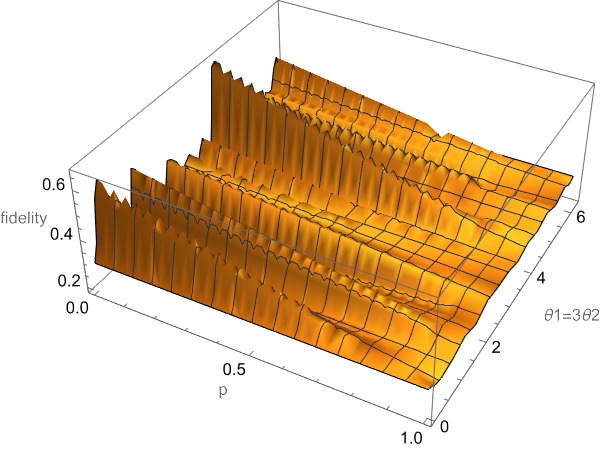}
       		\caption{$F_{H_3}$}
       	\end{subfigure}
       	\\
       	\begin{subfigure}{0.32\textwidth}
       		\centering
       		\includegraphics[scale = .37]{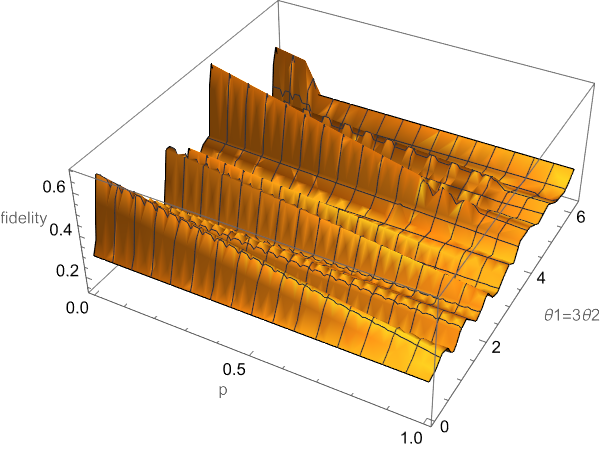}
       		\caption{$F_{H_4}$}
       	\end{subfigure}
       	\begin{subfigure}{0.32\textwidth}
       		\centering
       		\includegraphics[scale = .37]{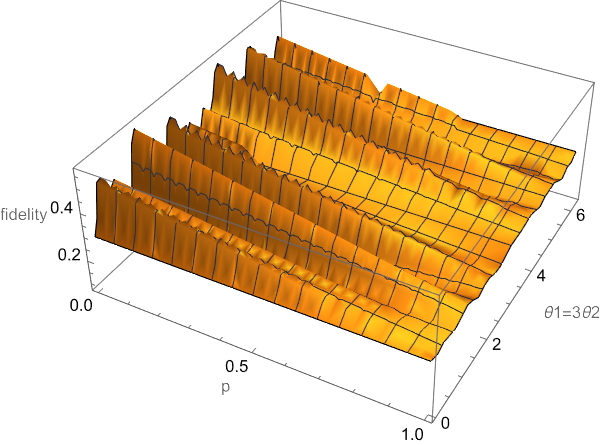}
       		\caption{$F_{H_5}$}
       \end{subfigure}
       	\label{fig:Markovian-Dephasing}
       	\caption{The quantum teleportation fidelity under Markovian dephasing noise is plotted with respect to the channel parameter $p$ and the state parameter $\theta_2$ assuming $\theta_1 = 3 \theta_2$. Different subfigure is generated for different hypergraph states.}
       \end{figure}
       
       The non-Markovian dephasing channel \cite{shrikant2018non} for a qutrit system can be represented by using the following Kraus operators
       \begin{equation}\label{Non Markovian dephasing}
       		{K_{r,s}} =  \begin{cases}
       			\sqrt{\frac{1 - \kappa}{3}} I_3 ~\text{when}~ r = 0, s=0, \\ 
       			\sqrt{\frac {\kappa} {24}} W_{r,s} ~\text{for}~ 0\leq r,s \leq 2 ~\text{and}~ (r,s) \neq (0,0).
       		\end{cases}
       \end{equation}
       Here we consider $\kappa(p) = p \frac{1 + \eta (1 - 2p) \sin(\beta p)}{1 + \eta (1 - 2p)}$. Also, $\eta$ and $\beta$ are two positive constants characterizing the strength and frequency of the channel, respectively. Consider the Kraus operators for the Markovian and non-Markovian dephasing channels as mentioned in equations (\ref{Markovian dephasing}), and (\ref{Non Markovian dephasing}), respectively. Note that, $p$ of equation (\ref{Markovian dephasing}) is replaced by $\kappa(p)$ in equation (\ref{Non Markovian dephasing}). The similar changes will be reflected in the expressions of teleportation Fidelity. But, in case of non-Markovian dephasing these expressions have two additional channel parameters $\eta$, and $\beta$.
       
       Now, we work out the expression of teleportation fidelity for the non-Markovian dephasing channel. We consider three qutrit states for our calculations. They are $\ket{+}$, $\frac{1}{\sqrt{2}} (\ket{0} + \ket{2})$, and $\ket{0}$. Fixing $\eta = 0.5$, and $\beta = 100$ we plot the teleportation fidelity with respect to time, which is available in Figure \ref{fig:non-Markovian Dephasing}. We observe the signature of non-Markovianity.
       
       If we consider $\theta_1 = 3 \theta_2$ then the number of state parameters is reduced to one, which is $\theta_2$. It assists us to plot fidelity as a function of the state parameter $\theta_2$ and the channel parameter $p$. The corresponding surface plots of fidelity for different hypergraph states are depict in Figure \ref{fig:non_Markovian_Dephasing}.
       
       \begin{figure}
       	\includegraphics[scale = .78]{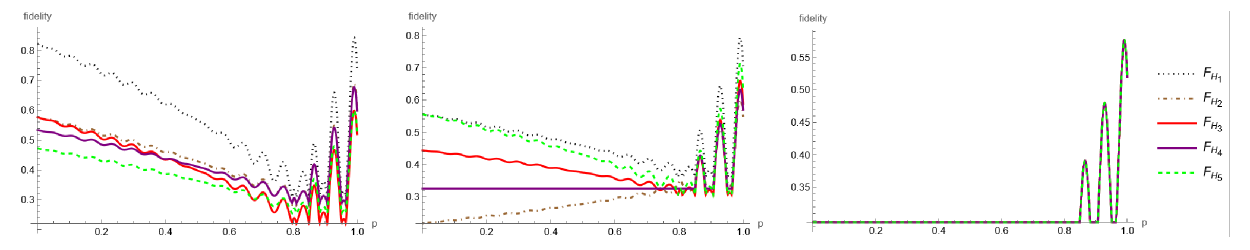}
       	\caption{For all five hypergraph states we plot the teleportation fidelity with respect to the channel parameter $p$, in case of non-Markovian dephasing noise. Three subfigures consider different states $\ket{\phi}_a$ to teleport.  We consider $\ket{\phi}_a = \ket{+}$, $\ket{\phi}_a = \frac{1}{\sqrt{2}}(\ket{0} + \ket{2})$, and $\ket{\phi}_a = \ket{0}$ in the left, middle, and right subfigure, respectively.}
       \label{fig:non-Markovian Dephasing}
       \end{figure}
       
       \begin{figure}[hbt!]
       	\centering
       	\begin{subfigure}{0.32\textwidth}
       		\centering
       		\includegraphics[scale = .37]{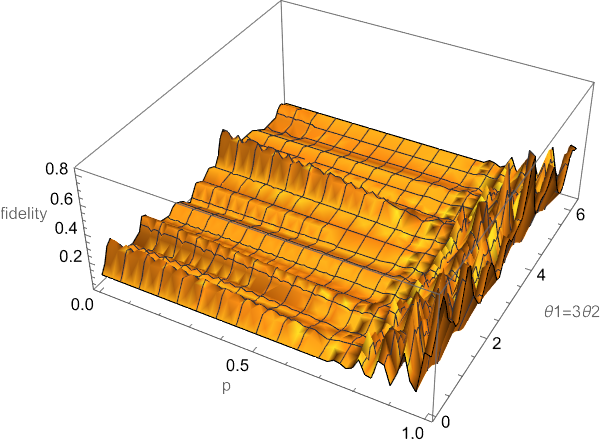}
       		\caption{$F_{H_1}$}
       	\end{subfigure}
       	\begin{subfigure}{0.32\textwidth}
       		\centering
       		\includegraphics[scale = .37]{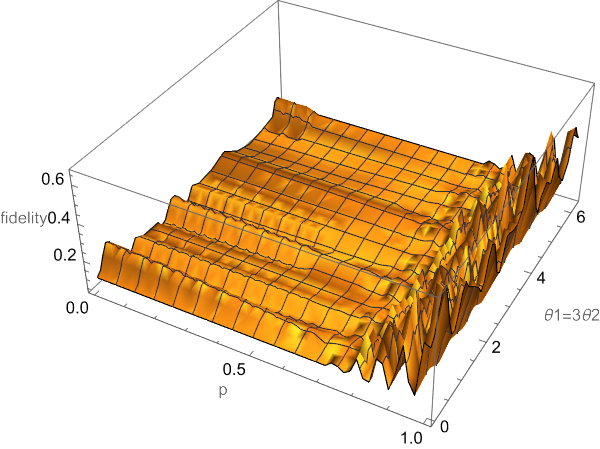}
       		\caption{$F_{H_2}$}
       	\end{subfigure}
       	\begin{subfigure}{0.32\textwidth}
       		\centering
       		\includegraphics[scale = .37]{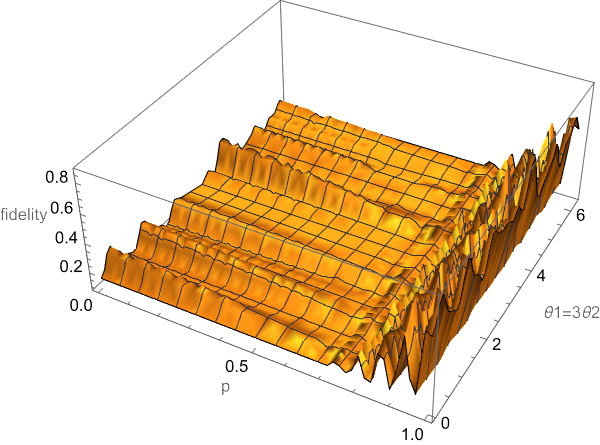}
       		\caption{$F_{H_3}$}
       	\end{subfigure}
       	\\
       	\begin{subfigure}{0.32\textwidth}
       		\centering
       		\includegraphics[scale = .37]{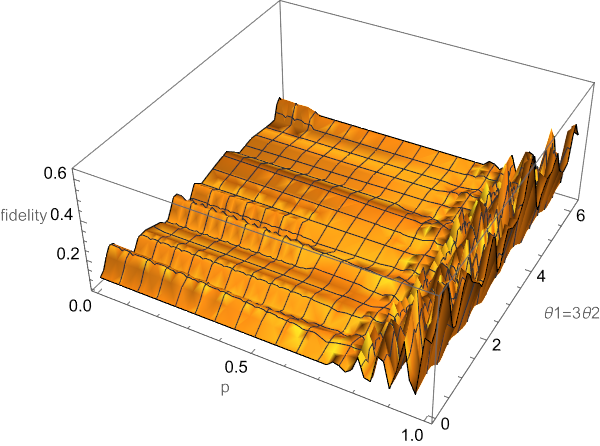}
       		\caption{$F_{H_4}$}
       	\end{subfigure}
       	\begin{subfigure}{0.32\textwidth}
       		\centering
       		\includegraphics[scale = .37]{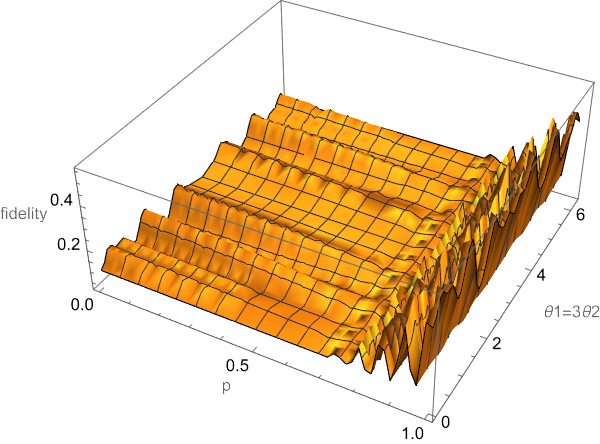}
       		\caption{$F_{H_5}$}
       	\end{subfigure}
       	\caption{The quantum teleportation fidelity under non-Markovian dephasing noise is plotted with respect to the channel parameter $p$ and the state parameter $\theta_2$ for different hypergraphs. To plot these figures we fixed $\eta = 0.5$, and $\beta = 100$.}
       \label{fig:non_Markovian_Dephasing}
       \end{figure}

		\subsection{Non-Markovian depolarization noise:}
		
			We generalize the qubit non-Markovian depolarization noise for qutrits. Then qubit non-Markovian depolarization noise represented by
		\begin{center}
			\begin{tabular}{c c}
				$K_1 = \sqrt{(1-\alpha p) (1-p)} I_2,$ & $K_2 = \sqrt{[1 + \alpha(1-p)] p} \sigma_z,$
			\end{tabular}
		\end{center}
		here we choose $0 \leq \alpha \leq 1$.
		
		For a qutrit system we generalize the non-Markovian depolarization noise using the following Kraus operators 
		\begin{equation} \label{Non Markovian depolarization}
			K_{r,s} = \begin{cases}
				 \sqrt{\frac{9-8p(1-p)}{27}} I_3 & \text{when}~ r = 0, s=0, \\
				 \sqrt{\frac{p(1-p)}{27}} W_{r,s} & \text{for}~ 0\leq r,s \leq 2 ~\text{and}~ (r,s) \neq (0,0).
			\end{cases}
		\end{equation} 
		where the Weyl operators $W_{0,0}, W_{0,1}, W_{0,2}, W_{1,0}, W_{1,1}, W_{1,2}, W_{2,0}, W_{2,1}, \text{and}~ W_{2,2}$ are mentioned in equation (\ref{Weyl operator}). Using Lemma \ref{multiqutrit_channel_lemma}, we construct three-qutrit channel which are described by the following Kraus operators:
		
		\small
		\begin{tabular}{p{5.5cm} p{5.5cm} p{5.5cm}}
			\centering
				$K_1=\sqrt{\frac{9-8p(1-p)}{27}} (K_{0,0}\otimes I_3 \otimes I_3)$,& $K_2=\sqrt{\frac{p(1-p)}{27}} (K_{0,1} \otimes I_3 \otimes I_3)$,& $K_3=\sqrt{\frac{p(1-p)}{27}} (K_{0,2} \otimes I_3 \otimes I_3)$,\\
				$K_4=\sqrt{\frac{p(1-p)}{27}} (K_{1,0} \otimes I_3 \otimes I_3)$, & $K_5=\sqrt{\frac{p(1-p)}{27}} (K_{1,1} \otimes I_3 \otimes I_3)$,& $K_6=\sqrt{\frac{p(1-p)}{27}} (K_{1,2} \otimes I_3 \otimes I_3)$,\\
				$K_7=\sqrt{\frac{p(1-p)}{27}} (K_{2,0} \otimes I_3 \otimes I_3)$,& $K_8=\sqrt{\frac{p(1-p)}{27}} (K_{2,1} \otimes I_3 \otimes I_3)$,& $K_9=\sqrt{\frac{p(1-p)}{27}} (K_{2,2} \otimes I_3 \otimes I_3)$,\\
				$K_{10}=\sqrt{\frac{9-8p(1-p)}{27}} (I_3\otimes K_{0,0} \otimes I_3)$,& $K_{11}=\sqrt{\frac{p(1-p)}{27}} (I_3 \otimes K_{0,1} \otimes I_3)$,& $K_{12}=\sqrt{\frac{p(1-p)}{27}} (I_3 \otimes K_{0,2} \otimes I_3)$,\\
				$K_{13}=\sqrt{\frac{p(1-p)}{27}} (I_3 \otimes K_{1,0} \otimes I_3)$,& $K_{14}=\sqrt{\frac{p(1-p)}{27}} (I_3 \otimes K_{1,1} \otimes I_3)$,& $K_{15}=\sqrt{\frac{p(1-p)}{27}} (I_3 \otimes K_{1,2} \otimes I_3)$,\\
				$K_{16}=\sqrt{\frac{p(1-p)}{27}} (I_3 \otimes K_{2,0} \otimes I_3)$,& $K_{17}=\sqrt{\frac{p(1-p)}{27}} (I_3 \otimes K_{2,1} \otimes I_3)$,& $K_{18}=\sqrt{\frac{p(1-p)}{27}} (I_3 \otimes K_{2,2} \otimes I_3)$,\\
				$K_{19}=\sqrt{\frac{9-8p(1-p)}{27}} (I_3\otimes I_3 \otimes K_{0,0})$,& $K_{20}=\sqrt{\frac{p(1-p)}{27}} (I_3\otimes I_3 \otimes K_{0,1})$,& $K_{21}=\sqrt{\frac{p(1-p)}{27}} (I_3\otimes I_3 \otimes K_{0,2})$,\\
				$K_{22}=\sqrt{\frac{p(1-p)}{27}} (I_3\otimes I_3 \otimes K_{1,0})$,& $K_{23}=\sqrt{\frac{p(1-p)}{27}} (I_3\otimes I_3 \otimes K_{1,1})$,& $K_{24}=\sqrt{\frac{p(1-p)}{27}} (I_3\otimes I_3 \otimes K_{1,2})$,\\
				$K_{25}=\sqrt{\frac{p(1-p)}{27}} (I_3\otimes I_3 \otimes K_{2,0})$,& $K_{26}=\sqrt{\frac{p(1-p)}{27}} (I_3\otimes I_3 \otimes K_{2,1})$,& $K_{27}=\sqrt{\frac{p(1-p)}{27}} (I_3\otimes I_3 \otimes K_{2,2})$.
			\end{tabular}
		
		We now use our quantum teleportation procedure described in Procedure \ref{teleportation_procedure} to transmit any qutrit quantum state via one of the hypergraph states. The teleportation fidelity of the different hypergraph states are as follows: 
		
		\begin{equation}
			\small
			\begin{split}
				F_{H_{1}} = & \frac{5}{23328}[p | p-1|  \{-8 \sin ^2(\text{$\theta $}_1) (16 \cos ^2(\text{$\theta $}_1) \cos (2 \text{$\theta $}_2)+\sin ^2(\text{$\theta $}_1) (9 \sin (4 \text{$\theta $}_2)+\cos (4 \text{$\theta $}_2))\\
				& +\sin (2 \text{$\theta $}_1) (5 \sin (3 \text{$\theta $}_2)-4 \cos (3 \text{$\theta $}_2))-35 \sin (2 \text{$\theta $}_2))+4 \cos (2 \text{$\theta $}_1) (-86 \sin ^2(\text{$\theta $}_1) \sin (2 \text{$\theta $}_2)\\
				& +29 \sin (2 \text{$\theta $}_1) \sin (\text{$\theta $}_2)+8 \sin (\text{$\theta $}_1) \cos (\text{$\theta $}_1) \cos (\text{$\theta $}_2)-1)+4 \sin (2 \text{$\theta $}_1) (139 \sin (\text{$\theta $}_2)+28 \cos (\text{$\theta $}_2))\\
				& +81 \cos (4 \text{$\theta $}_1)+1971\}+(8 (p-1) p+9) \{8 \sin (\text{$\theta $}_1) (3 \sin ^3(\text{$\theta $}_1) \sin (4 \text{$\theta $}_2)+55 \sin (\text{$\theta $}_1) \sin (2 \text{$\theta $}_2)\\
				& +16 \sin (\text{$\theta $}_1) \cos ^2(\text{$\theta $}_1) \cos (2 \text{$\theta $}_2)+\sin ^3(\text{$\theta $}_1) \cos (4 \text{$\theta $}_2)+\cos (\text{$\theta $}_1) (2 \sin ^2(\text{$\theta $}_1) (17 \sin (3 \text{$\theta $}_2)\\
				& -4 \cos (3 \text{$\theta $}_2))+41 \sin (\text{$\theta $}_2)))+4 \cos (2 \text{$\theta $}_1) (74 \sin ^2(\text{$\theta $}_1) \sin (2 \text{$\theta $}_2)+\sin (2 \text{$\theta $}_1) (7 \sin (\text{$\theta $}_2)-4 \cos (\text{$\theta $}_2))+1)\\
				& +80 \sin (2 \text{$\theta $}_1) \cos (\text{$\theta $}_2)-81 \cos (4 \text{$\theta $}_1)+333\}].
			\end{split}
		\end{equation}
		\begin{equation}
			\small
			\begin{split}
				F_{H_{2}} = & \frac{5}{11664}[p | p-1|  \{8 \sin ^2(\text{$\theta $}_1) (-26 \cos ^2(\text{$\theta $}_1) \cos (2 \text{$\theta $}_2)+\sin ^2(\text{$\theta $}_1) \cos (4 \text{$\theta $}_2)\\
				& +\sin (\text{$\theta $}_1) \cos (\text{$\theta $}_1) (7 \cos (3 \text{$\theta $}_2)-11 \sin (3 \text{$\theta $}_2))+22 \sin (2 \text{$\theta $}_2))+\cos (2 \text{$\theta $}_1) (-64 \sin ^2(\text{$\theta $}_1) \sin (2 \text{$\theta $}_2)\\
				& +2 \sin (2 \text{$\theta $}_1) (35 \sin (\text{$\theta $}_2)+19 \cos (\text{$\theta $}_2))+4)+2 \sin (2 \text{$\theta $}_1) (61 \sin (\text{$\theta $}_2)+13 \cos (\text{$\theta $}_2))+9 \cos (4 \text{$\theta $}_1)+1011\}\\
				& -(8 (p-1) p+9) \{-58 \sin (2 \text{$\theta $}_1) \sin (\text{$\theta $}_2)-4 \sin (\text{$\theta $}_1) (4 \sin (\text{$\theta $}_1) \sin (2 \text{$\theta $}_2)+52 \sin (\text{$\theta $}_1) \cos ^2(\text{$\theta $}_1) \cos (2 \text{$\theta $}_2)\\
				& -2 \sin ^3(\text{$\theta $}_1) \cos (4 \text{$\theta $}_2)+\cos (\text{$\theta $}_1) (2 \sin ^2(\text{$\theta $}_1) (29 \sin (3 \text{$\theta $}_2)-7 \cos (3 \text{$\theta $}_2))+11 \cos (\text{$\theta $}_2)))\\
				& +\cos (2 \text{$\theta $}_1) (80 \sin ^2(\text{$\theta $}_1) \sin (2 \text{$\theta $}_2)+38 \sin (2 \text{$\theta $}_1) (\cos (\text{$\theta $}_2)-\sin (\text{$\theta $}_2))+4)+9 \cos (4 \text{$\theta $}_1)-141\}].
			\end{split}
		\end{equation}
		\begin{equation}
			\small
			\begin{split}
				F_{H_{3}} = & \frac{5}{23328} [p | p-1|  \{8 \sin ^2(\text{$\theta $}_1) (-28 \cos ^2(\text{$\theta $}_1) \cos (2 \text{$\theta $}_2)+\sin ^2(\text{$\theta $}_1) (5 \cos (4 \text{$\theta $}_2)-9 \sin (4 \text{$\theta $}_2))\\
				& +\sin (2 \text{$\theta $}_1) (7 \cos (3 \text{$\theta $}_2)-2 \sin (3 \text{$\theta $}_2))-13 \sin (2 \text{$\theta $}_2))+4 \cos (2 \text{$\theta $}_1) (-14 \sin ^2(\text{$\theta $}_1) \sin (2 \text{$\theta $}_2)\\
				& +\sin (2 \text{$\theta $}_1) (26 \sin (\text{$\theta $}_2)+19 \cos (\text{$\theta $}_2))+5)-44 \sin (2 \text{$\theta $}_1) (\cos (\text{$\theta $}_2)-2 \sin (\text{$\theta $}_2))+39 \cos (4 \text{$\theta $}_1)+1989\}\\
				& +(8 (p-1) p+9) \{8 \sin ^2(\text{$\theta $}_1) (28 \cos ^2(\text{$\theta $}_1) \cos (2 \text{$\theta $}_2)-\sin ^2(\text{$\theta $}_1) (3 \sin (4 \text{$\theta $}_2)+5 \cos (4 \text{$\theta $}_2))\\
				& +\sin (2 \text{$\theta $}_1) (8 \sin (3 \text{$\theta $}_2)-7 \cos (3 \text{$\theta $}_2))+13 \sin (2 \text{$\theta $}_2))+4 \cos (2 \text{$\theta $}_1) (14 \sin ^2(\text{$\theta $}_1) \sin (2 \text{$\theta $}_2)-5)\\
				& +2 (22 \sin (2 \text{$\theta $}_1)-19 \sin (4 \text{$\theta $}_1)) \cos (\text{$\theta $}_2)+64 \sin (2 \text{$\theta $}_1) (\cos (2 \text{$\theta $}_1)+2) \sin (\text{$\theta $}_2)-39 \cos (4 \text{$\theta $}_1)+315\}].
			\end{split}
		\end{equation}
		\begin{equation}
			\small
			\begin{split}
				F_{H_{4}}&=\frac{5}{11664}[p | p-1|  \{8 \sin ^2(\text{$\theta $}_1) (-20 \cos ^2(\text{$\theta $}_1) \cos (2 \text{$\theta $}_2)+\sin ^2(\text{$\theta $}_1) \cos (4 \text{$\theta $}_2) +\sin (\text{$\theta $}_1) \cos (\text{$\theta $}_1) (\sin (3 \text{$\theta $}_2)\\
				& +7 \cos (3 \text{$\theta $}_2))+25 \sin (2 \text{$\theta $}_2))+2 \sin (2 \text{$\theta $}_1) (61 \sin (\text{$\theta $}_2)+13 \cos (\text{$\theta $}_2))+4 \cos (2 \text{$\theta $}_1) (19 \sin (\text{$\theta $}_1) \cos (\text{$\theta $}_1) \cos (\text{$\theta $}_2)\\
				& +5 \sin (\text{$\theta $}_1) \sin (\text{$\theta $}_2) (7 \cos (\text{$\theta $}_1)-4 \sin (\text{$\theta $}_1) \cos (\text{$\theta $}_2))+1) +15 \cos (4 \text{$\theta $}_1)+1005\}\\
				& -(8 (p-1) p+9) \{-58 \sin (2 \text{$\theta $}_1) \sin (\text{$\theta $}_2)+8 \sin ^2(\text{$\theta $}_1) (-20 \cos ^2(\text{$\theta $}_1) \cos (2 \text{$\theta $}_2)+\sin ^2(\text{$\theta $}_1) \cos (4 \text{$\theta $}_2)\\
				& +\sin (\text{$\theta $}_1) \cos (\text{$\theta $}_1) (7 \cos (3 \text{$\theta $}_2)-17 \sin (3 \text{$\theta $}_2))+\sin (2 \text{$\theta $}_2))-44 \sin (\text{$\theta $}_1) \cos (\text{$\theta $}_1) \cos (\text{$\theta $}_2)\\
				& +19 \sin (4 \text{$\theta $}_1) \cos (\text{$\theta $}_2)+\cos (2 \text{$\theta $}_1) (4 \sin (\text{$\theta $}_1) \sin (\text{$\theta $}_2) (52 \sin (\text{$\theta $}_1) \cos (\text{$\theta $}_2)-19 \cos (\text{$\theta $}_1))+4)\\
				& +15 \cos (4 \text{$\theta $}_1)-147\}].
			\end{split}
		\end{equation}
		\begin{equation}
			\small
			\begin{split}
				F_{H_{5}} = & \frac{5}{23328}[p | p-1|  \{8 \sin ^2(\text{$\theta $}_1) (-28 \cos ^2(\text{$\theta $}_1) \cos (2 \text{$\theta $}_2)+5 \sin ^2(\text{$\theta $}_1) \cos (4 \text{$\theta $}_2)+\sin (2 \text{$\theta $}_1) (\sin (3 \text{$\theta $}_2)-5 \cos (3 \text{$\theta $}_2))\\
				& +20 \sin (2 \text{$\theta $}_2))+4 \cos (2 \text{$\theta $}_1) (16 \sin ^2(\text{$\theta $}_1) \sin (2 \text{$\theta $}_2)+\sin (2 \text{$\theta $}_1) (7 \cos (\text{$\theta $}_2)-\sin (\text{$\theta $}_2))+5)\\
				& +4 \sin (2 \text{$\theta $}_1) (25 \sin (\text{$\theta $}_2)+\cos (\text{$\theta $}_2))+39 \cos (4 \text{$\theta $}_1)+1989\}\\
				& -(8 (p-1) p+9) \{8 \sin ^2(\text{$\theta $}_1) (-28 \cos ^2(\text{$\theta $}_1) \cos (2 \text{$\theta $}_2)+\sin ^2(\text{$\theta $}_1) (6 \sin (4 \text{$\theta $}_2)+5 \cos (4 \text{$\theta $}_2))\\
				& +\sin (2 \text{$\theta $}_1) (7 \sin (3 \text{$\theta $}_2)-5 \cos (3 \text{$\theta $}_2))-10 \sin (2 \text{$\theta $}_2))+4 \cos (2 \text{$\theta $}_1) (4 \sin ^2(\text{$\theta $}_1) \sin (2 \text{$\theta $}_2)\\
				& +7 \sin (2 \text{$\theta $}_1) (\cos (\text{$\theta $}_2)-\sin (\text{$\theta $}_2))+5)+4 \sin (2 \text{$\theta $}_1) (\cos (\text{$\theta $}_2)-41 \sin (\text{$\theta $}_2))+39 \cos (4 \text{$\theta $}_1)-315\}].
			\end{split}
		\end{equation}
		
		For $\ket{\phi}_a = \ket{+}$, $\frac{1}{\sqrt{2}}(\ket{0} + \ket{2})$ and $\ket{0}$ we plot the teleportation fidelity for different hypergraph states with respect to channel parameter $p$ in Figure \ref{fig:Non_Markovian_depolarization}. In Figure \ref{fig:Non_Markovian_Depolarization}, we plot the teleportation fidelity for different hypergraph state with respect to the channel parameter $p$ and channel parameter $\theta_2$ considering $\theta_1 = 3 \theta_2$.

    \begin{figure}
    	\includegraphics[scale = .78]{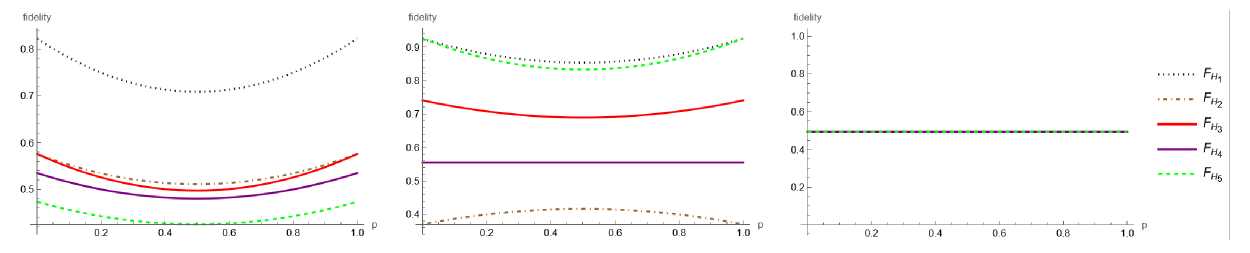}
    	\caption{For all five hypergraph states we plot the teleportation fidelity with respect to the channel parameter $p$, in case of non-Markovian depolarization noise. Three subfigures consider different states $\ket{\phi}_a$ to teleport.  We consider $\ket{\phi}_a = \ket{+}$, $\ket{\phi}_a = \frac{1}{\sqrt{2}}(\ket{0} + \ket{2})$, and $\ket{\phi}_a = \ket{0}$ in the left, middle, and right subfigure, respectively.}
    	\label{fig:Non_Markovian_depolarization}
    \end{figure}
    
    \begin{figure}
    	\centering
    	\begin{subfigure}{0.32\textwidth}
    		\centering
    		\includegraphics[scale = .35]{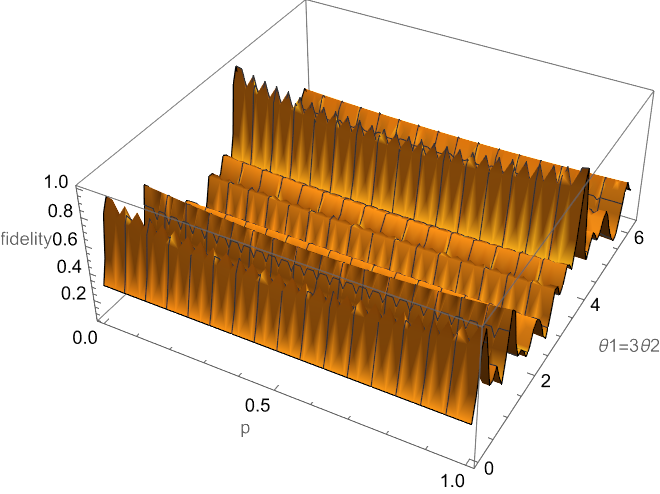}
    		\caption{$F_{H_1}$}
    	\end{subfigure}
    	\begin{subfigure}{0.32\textwidth}
    		\centering
    		\includegraphics[scale = .37]{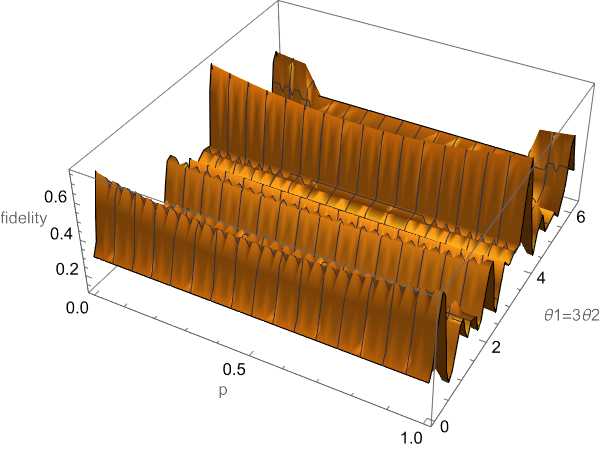}
    		\caption{$F_{H_2}$}
    	\end{subfigure}
    	\begin{subfigure}{0.32\textwidth}
    		\centering
    		\includegraphics[scale = .37]{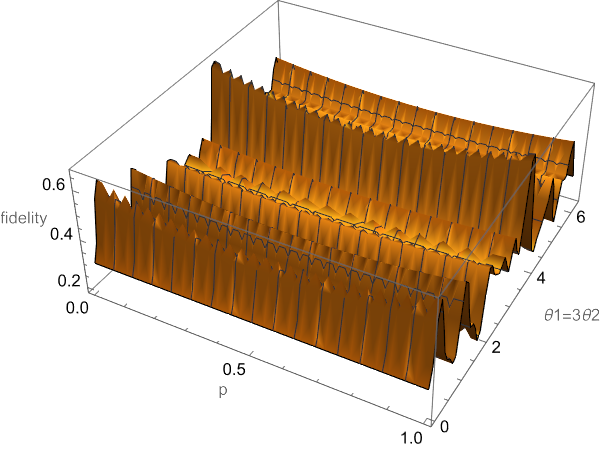}
    		\caption{$F_{H_3}$}
    	\end{subfigure}
    	\\
    	\begin{subfigure}{0.32\textwidth}
    		\centering
    		\includegraphics[scale = .37]{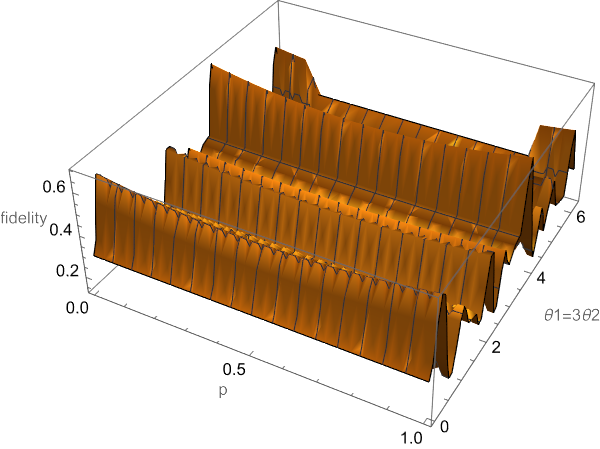}
    		\caption{$F_{H_4}$}
    	\end{subfigure}
    	\begin{subfigure}{0.32\textwidth}
    		\centering
    		\includegraphics[scale = .37]{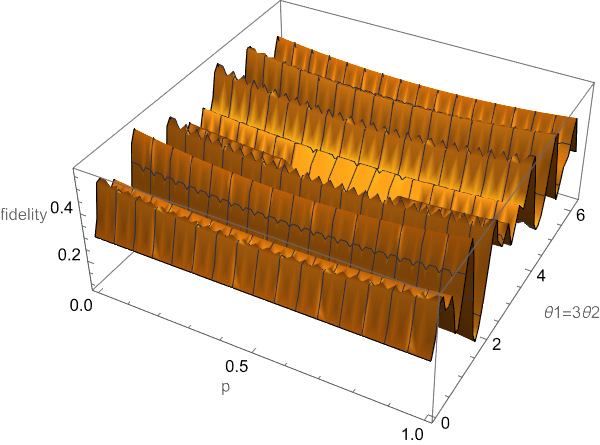}
    		\caption{$F_{H_5}$}
    	\end{subfigure}
    	\caption{The quantum teleportation fidelity under non-Markovian depolarization noise is plotted with respect to the channel parameter $p$ and the state parameter $\theta_2$. Different subfigure is generated for different hypergraphs. Here we consider the non-Markovian depolarization noise.}
    	\label{fig:Non_Markovian_Depolarization}
    \end{figure}

	\section{Discussions and conclusion}
	
		This article is at the interface of quantum communication with qutrits and hypergraph theory. In our model of quantum teleportation, we need three-qutrit entangled hypergraph states. There are five connected hypergraphs with three vertices which represents three qutrit entangled states. We transport them via noisy quantum channels and share them between two parties, Alice and Bob, for quantum teleportation. Alice wants to teleport a general qutrit state to Bob via the distributed state between them. In this article, we calculate analytical expression of teleportation fidelity which depends on the hypergraph state, noisy channel, and qutrit state to communicate. We have considered three specific states $\ket{\phi}_a  = \ket{+}, \frac{1}{\sqrt{2}}(\ket{0} + \ket{2})$, and $\ket{0}$ for teleportation. We plot the teleportation fidelity for them which depends on channel parameters. 
		
		Figure \ref{fig:Qutritflip noise}, \ref{fig:Qutrit-phase-flip noise}, \ref{fig:Depolarizing}, \ref{fig:Markovian_amplitude_damping_channel}, \ref{fig:Non-Markovian Amplitude Damping channel}, \ref{fig:MarkovianDephasing}, \ref{fig:non-Markovian Dephasing}, and \ref{fig:Non_Markovian_depolarization} suggests that the teleportation fidelity depends on the initial states, noise parameters and the hypergraph states distributed between parties. To teleport $\ket{\phi}_a = \ket{0}$ all the hypergraph states are equally efficient indifferent of noises. But, for the other values of $\ket{\phi}_a$, we observe the effect of hypergraph states and noises on teleportation fidelity.
		
		The hypergraph state $\ket{H_1}$ is most efficient to teleport the state $\ket{\phi}_a = \ket{+}$. In Figure \ref{fig:Qutritflip noise}, \ref{fig:Qutrit-phase-flip noise}, \ref{fig:Depolarizing}, \ref{fig:Markovian_amplitude_damping_channel}, \ref{fig:Non-Markovian Amplitude Damping channel}, \ref{fig:MarkovianDephasing}, \ref{fig:non-Markovian Dephasing}, and \ref{fig:Non_Markovian_depolarization} we mark $F_{H_{1}}$, which is the teleportation fidelity for $\ket{H_1}$, with a dotted line. Figure \ref{fig:Qutritflip noise} corresponds the qutrit-flip noise, it indicates that $F_{H_{1}}$ increases with respect to the channel parameter. Whereas the teleportation fidelity $F_{H_{i}}: i = 2, 3, 4, 5$ decrease with respect to the channel parameter. For the other channels all the teleportation fidelity $F_{H_{i}}: i = 1, 2, 3, 4, 5$ decreases. But $F_{H_{1}}$ has the maximum value in all the cases.
		
		For our investigations we consider $\ket{\phi}_a = \frac{1}{\sqrt{2}} (\ket{0} + \ket{2})$, also. The teleportation fidelity $F_{H_{4}}$ for the hypergraph state $\ket{H_4}$ remains constant for qutrit-phase-flip channel, depolarizing channel, Markovian dephasing channel, and non-Markovian depolarization channel. Behavior of all the teleportation fidelity $F_{H_{i}}$ for $i = 1, 2, \dots 5$ are similar for  qutrit-phase-flip channel, depolarizing channel, and Markovian dephasing channel. For the qutrit-flip noise $F_{H_{1}}$ remains constant for all values of the channel parameter. Also, $F_{H_{1}} > F_{H_{i}}$, for $i = 2, 3, \dots 5$, in this case. Except qutrit-flip noise, $F_{H_{1}}$ and $F_{H_{5}}$ decrease from same point with respect to the channel parameter. For all these channel $F_{H_{5}}$ decreases faster than $F_{H_{1}}$. We mark $F_{H_{1}}$ and $F_{H_{5}}$ with a dotted line, and a green dashed line in Figure \ref{fig:Qutrit-phase-flip noise}, \ref{fig:Depolarizing}, \ref{fig:Markovian_amplitude_damping_channel}, \ref{fig:Non-Markovian Amplitude Damping channel}, \ref{fig:MarkovianDephasing}, \ref{fig:non-Markovian Dephasing}, and \ref{fig:Non_Markovian_depolarization}.

	\section*{Appendix}
		
		In this article, we consider five hypergraphs $H_1, H_2, H_3, H_4$, and $H_5$ for our investigations which are depicted in Figure \ref{Hypergraph}. Now, we present our calculation for $H_5$, which is drawn in Figure \ref{fig:$H5$}. The calculation for the other hypergraphs are similar. 
		
		Recall the construction of the qutrit hypergraph states, which is mentioned in Section \ref{Quantum_hypergraph_states_with_qutrits}.  Corresponding to every vertex we assign a $\ket{+}$ state. As there are only three vertices in every hypergraph, the combined state is $\ket{+++}$, which is mentioned in equation (\ref{+++}). There are three hyperedges with cardinality two in the hypergraph $H_5$, which are $e = (0, 1), (1, 2)$ and $(2, 0)$. The controlled-$Z^{(3)}$ operator applicable for these hyperedges is defined in equation (\ref{controlled_z_for_cardinality_2}). Also, there is one hyperedge with cardinality $3$, which is $(0, 1, 2)$. The corresponding controlled-$Z^{(3)}$ gate is described in equation (\ref{controlled_z_for_cardinality_3}). Recall that we can apply the control $Z^{(3)}$ gates in any order.
		
		We may begin with applying $CCZ_{(0,1,2)}^{(3)}$ on $\ket{+++}$, to construct $\ket{H_5}$. The new state is
		\begin{equation} \label{g1}
			\small
			\begin{split}
				\ket{T_1} = &CCZ_{(0,1,2)}^{(3)}\ket{+++} \\
				= & \frac{1}{3\sqrt{3}} (\ket{000} + \ket{001} + \ket{002} + \ket{010} + \ket{011} + \ket{012} + \ket{020} + \ket{021} + \ket{022} + \ket{100} + \ket{101} + \ket{102} \\
				& + \ket{110}+\ket{111}+\ket{112}+\ket{120}+\ket{121}+\ket{122}+\ket{200}+\ket{201}+\ket{202}+\ket{210}+\ket{211}+\ket{212}\\
				& + \ket{220}+ \omega \ket{221}+ \omega^2 \ket{222}). 
			\end{split}
		\end{equation}
		Corresponding to the hyperedge $(2, 0)$ we apply $CZ_{(0, 2)}^{(3)}$ on the state $\ket{T_1}$. The new state is 
		\begin{equation} \label{f3}
			\begin{split}
				\ket{T_2} = &CZ_{(0, 2)}^{(3)} CCZ_{(0,1,2)}^{(3)}\ket{+++}\\
				= & \frac{1}{3\sqrt{3}} (\ket{000} + \ket{001} + \ket{002} + \ket{010} + \ket{011} + \ket{012} + \ket{020} + \ket{021} + \ket{022} + \ket{100} + \ket{101} + \omega \ket{102} \\
				& + \ket{110}+\ket{111}+ \omega \ket{112}+\ket{120}+\ket{121}+ \omega \ket{122}+\ket{200}+\ket{201}+ \omega^2 \ket{202}+\ket{210}+\ket{211}+ \omega^2 \ket{212}\\
				& + \ket{220}+ \omega \ket{221}+ \omega \ket{222}). 
			\end{split}
		\end{equation}
        Similarly, applying the operators corresponding to the remaining hyperedges we get 
		\begin{equation}
			\begin{split}
				& CZ_{(0, 1)}^{(3)} CZ_{(1, 2)}^{(3)} CZ_{(0, 2)}^{(3)} CCZ_{(0,1,2)}^{(3)}\ket{+++}\\
				= & \frac{1}{3\sqrt{3}} \sum_{q_0 = 0}^{2} \sum_{q_1 = 0}^{2} \sum_{q_2 = 0}^{2} f_{(0, 1)}(q_0, q_1, q_2) f_{(1, 2)}(q_0, q_1, q_2) f_{(2, 0)}(q_0, q_1, q_2) g_{(0, 1, 2)} (q_0, q_1, q_2) \ket{q_0, q_1, q_2}\\
				= & \frac{1}{3\sqrt{3}} (\ket{000} + \ket{001} + \ket{002} + \ket{010} + \ket{011} + \ket{012} + \ket{020} + \omega \ket{021} + \omega^2 \ket{022} + \ket{100} + \ket{101} + \omega \ket{102} \\
				& + \ket{110}+\ket{111}+ \omega \ket{112}+\ket{120}+ \omega \ket{121}+ \ket{122}+\ket{200}+ \ket{201}+ \omega^2 \ket{202}+ \omega \ket{210}+ \omega \ket{211}+ \ket{212}\\
				& + \omega^2 \ket{220}+ \omega \ket{221}+ \omega^2 \ket{222}), 
			\end{split}
		\end{equation} 
		which is the qutrit hypergraph state $\ket{H_5}$ corresponding to $H_5$. The density matrix corresponding to $\ket{H_5}$ is 
		\begin{equation} \label{rhoh5}
			\hat{\rho_{5}} = \ket{H_{5}}_{123} \bra{H_{5}}.
		\end{equation}
		We can represent these operations as a qutrit quantum circuit, which is depicted in Figure \ref{H_5 circuit}.
		
		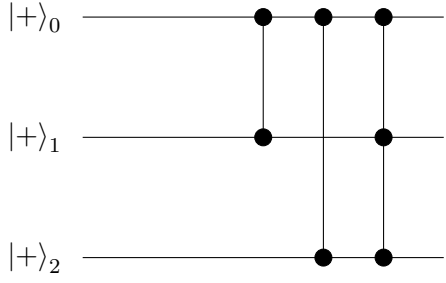
\begin{figure}
			\centering  
			\begin{tikzpicture}[scale=1.6]
				\draw (0, 0) -- (3, 0);
				\draw (0, 1) -- (3, 1);
				\draw (0, 2) -- (3, 2);
				\node at (-.4, 0) {$\ket{+}_2$};
				\node at (-.4, 1) {$\ket{+}_1$};
				\node at (-.4, 2) {$\ket{+}_0$};
				\draw[fill] (1.5, 1) circle [radius = 2pt];
				\draw[fill] (1.5, 2) circle [radius = 2pt];
				\draw (1.5, 1) -- (1.5, 2);
				\draw[fill] (2, 2) circle [radius = 2pt];
				\draw[fill] (2, 0) circle [radius = 2pt];
				\draw (2, 0) -- (2,2);
				\draw (2.5, 1) -- (2.5, 2);
				\draw (2.5, 0) -- (2.5, 1);
				\draw[fill] (2.5, 1) circle [radius = 2pt];
				\draw[fill] (2.5, 2) circle [radius = 2pt];
				\draw[fill] (2.5, 0) circle [radius = 2pt];
			\end{tikzpicture}
			\caption{The quantum circuit corresponding to the hypergraph $H_5$ depict in Figure \ref{fig:$H5$}. The horizontal lines represent $\ket{+}$ states corresponding to the vertices $0, 1$, and $2$. The first vertical line between $\ket{+}_0$ and $\ket{+}_1$ representes $CZ^{(3)}$ operation on $\ket{+}_0$ and $\ket{+}_1$ which corresponds the hyperedge $(0, 1)$. Similarly the second vertical line between $\ket{+}_0$ and $\ket{+}_2$ represents $CZ^{(3)}$ operation on $\ket{+}_0$ and $\ket{+}_2$ which corresponds the hyperedge $(0, 2)$. Also the third vertical line corresponds the $CCZ^{(3)}$ operation related to the hyperedge $(0, 1, 2)$.}
			\label{H_5 circuit}
		\end{figure}
		
		According to our quantum teleportation protocol, Dave prepares $\ket{H_5}$ and sends two particles to Alice and one to Bob via a noisy quantum channel. A number of quantum channels are consider in this work. For this discussion we consider the qutrit-flip noisy channel on $\ket{H_5}$, which is represented by the Kraus operators mentioned in equation (\ref{Qutrit flip noise}). After transmitting the hypergraph state $\ket{H_5}$ via the qutrit-flip channel we get a new state represented by the density matrix $\hat {\hat {\rho_5}} = \sum K_m \hat{\rho_{5}} K^\dagger_{m}$, where $K_m$ are mentioned in subsection \ref{Qutrit-flip noise and quantum teleportation}. As Alice wants to transfer the state in equation (\ref{quantum state vector}) using $\hat {\hat {\rho_5}}$ the combined state is $\rho_{H_5} = \rho_{in} \otimes \hat {\hat {\rho_5}}$. After Von-Neumann measurement and recovery of states using classical communications and application of unitary operation Bob receive the state 
		\begin{equation}
			\small
			\begin{split}
				\rho^{out}_{H_{5}} = & \sum_{l=1}^{4} Tr[M_{l}^{\dagger} M_{l} \rho_{H_{5}}] \frac{U_lTr_{a01}[M_l\rho_{H_{5}}M^{\dagger}_l]U^{\dagger}_l}{Tr[M^{\dagger}_lM_l\rho_{H_{5}}]}\\
				= & \big[\frac{1}{324}\{ 2 (2 \text{p}-1) \sin ^2(\text{$\theta $}_1) \sin (2 \text{$\theta $}_2)+(4-3 \text{p}) \sin (2 \text{$\theta $}_1) \sin (\text{$\theta $}_2)+(3 \text{p}-2) \sin (2 \text{$\theta $}_1) \cos (\text{$\theta $}_2)+16\} \ket{0}\\
				& + \frac{1}{324}\{\sin (\text{$\theta $}_1) (\cos (\text{$\theta $}_1) ((-7 i \sqrt{3} \text{p}+13 \text{p}+6 i \sqrt{3}-6) \sin (\text{$\theta $}_2)+(3 i \sqrt{3} \text{p}-19 \text{p}-6 i \sqrt{3}+18) \cos (\text{$\theta $}_2))\\
				& +(-6-6 i \sqrt{3}) (\text{p}-1) \sin (\text{$\theta $}_1) \sin (\text{$\theta $}_2) \cos (\text{$\theta $}_2))\}\ket{1} + \frac{1}{972}\{((6+2 i \sqrt{3}) \text{p}-3 i \sqrt{3}-3) \sin ^2(\text{$\theta $}_1) \sin (2 \text{$\theta $}_2)\\
				& +(6+(-3-i \sqrt{3}) \text{p}) \sin (2 \text{$\theta $}_1) \sin (\text{$\theta $}_2)+2 (2 (6-i \sqrt{3}) \text{p}+3 i \sqrt{3}-9) \sin ^2(\text{$\theta $}_1) \cos (2 \text{$\theta $}_2)\\
				& +2 ((3-4 i \sqrt{3}) \text{p}+3 i \sqrt{3}+3) \sin (2 \text{$\theta $}_1) \cos (\text{$\theta $}_2)+((-6+4 i \sqrt{3}) \text{p}-3 i \sqrt{3}+9) \cos (2 \text{$\theta $}_1)-6 i (\sqrt{3}-2 i) \text{p}\\
				& +3 i \sqrt{3}+15\}\ket{2}\big]\bra{0} + \big[\frac{1}{648}\{6 i (\sqrt{3}+i) (\text{p}-1) \sin ^2(\text{$\theta $}_1) \sin (2 \text{$\theta $}_2)\\
				& +(7 i \sqrt{3} \text{p}+13 \text{p}-6 i \sqrt{3}-6) \sin (2 \text{$\theta $}_1) \sin (\text{$\theta $}_2)+((-19-3 i \sqrt{3}) \text{p}+6 i \sqrt{3}+18) \sin (2 \text{$\theta $}_1) \cos (\text{$\theta $}_2)\}\ket{0}\\
				& + \frac{1}{324}\{(4-3 \text{p}) \sin ^2(\text{$\theta $}_1) \sin (2 \text{$\theta $}_2)-2 (\text{p}-1) \sin (2 \text{$\theta $}_1) (2 \cos (\text{$\theta $}_2)-\sin (\text{$\theta $}_2))+16\}\ket{1}\\
				& + \frac{1}{972}\{3 (7 i \sqrt{3} \text{p}-4 \text{p}-5 i \sqrt{3}+5) \sin ^2(\text{$\theta $}_1) \sin (2 \text{$\theta $}_2)+3 i (4 \sqrt{3} \text{p}+i-3 \sqrt{3}) \sin (2 \text{$\theta $}_1) \sin (\text{$\theta $}_2)\\
				& +2 (4 i \sqrt{3} \text{p}+9 \text{p}-3 i \sqrt{3}-9) \sin ^2(\text{$\theta $}_1) \cos (2 \text{$\theta $}_2)+(-4 i \sqrt{3} \text{p}+12 \text{p}+3 i \sqrt{3}-3) \sin (2 \text{$\theta $}_1) \cos (\text{$\theta $}_2)\\
				& +((-15-2 i \sqrt{3}) \text{p}+3 i \sqrt{3}+9) \cos (2 \text{$\theta $}_1)-15 \text{p}+4 i \sqrt{3} \text{p}-3 i \sqrt{3}+15\}\ket{2} \big]\bra{1}\\
				& + \big[\frac{1}{972}\{((6-2 i \sqrt{3}) \text{p}+3 i \sqrt{3}-3) \sin ^2(\text{$\theta $}_1) \sin (2 \text{$\theta $}_2)+(6+i (\sqrt{3}+3 i) \text{p}) \sin (2 \text{$\theta $}_1) \sin (\text{$\theta $}_2)\\
				& +2 (2 (6+i \sqrt{3}) \text{p}-3 i \sqrt{3}-9) \sin ^2(\text{$\theta $}_1) \cos (2 \text{$\theta $}_2)+2 ((3+4 i \sqrt{3}) \text{p}-3 i \sqrt{3}+3) \sin (2 \text{$\theta $}_1) \cos (\text{$\theta $}_2)\\
				& +((-6-4 i \sqrt{3}) \text{p}+3 i \sqrt{3}+9) \cos (2 \text{$\theta $}_1)+6 i (\sqrt{3}+2 i) \text{p}-3 i \sqrt{3}+15\}\ket{0}\\
				& + \frac{1}{972}\{\sin ^2(\text{$\theta $}_1) (3 (1+i \sqrt{3}) (5 \sin (2 \text{$\theta $}_2)+2)-3 i \text{p} (2 \sqrt{3}+(7 \sqrt{3}-4 i) \sin (2 \text{$\theta $}_2))\\
				& +2 ((9-4 i \sqrt{3}) \text{p}+3 i \sqrt{3}-9) \cos (2 \text{$\theta $}_2))+i \sin (2 \text{$\theta $}_1) (3 (-4 \sqrt{3} \text{p}+i+3 \sqrt{3}) \sin (\text{$\theta $}_2)\\
				& +(4 (\sqrt{3}-3 i) \text{p}+3 i-3 \sqrt{3}) \cos (\text{$\theta $}_2))+(24+(-30-2 i \sqrt{3}) \text{p}) \cos ^2(\text{$\theta $}_1)\}\ket{1}\\
				&  + \frac{1}{324}\{2 ((2-3 \text{p}) \sin ^2(\text{$\theta $}_1) \sin (2 \text{$\theta $}_2)+(2-3 \text{p}) \sin (2 \text{$\theta $}_1) \sin (\text{$\theta $}_2)+8)+(\text{p}-2) \sin (2 \text{$\theta $}_1) \cos (\text{$\theta $}_2)\}\ket{2}\big]\bra{2}.
			\end{split}
		\end{equation}
		After that, we work out the teleportation fidelity which is 
		\begin{equation}
			\small
			\begin{split}
				F_{H_{5}} = & \frac{5}{2592}[12 (5 p-4) \sin ^4(\text{$\theta $}_1) \sin (4 \text{$\theta $}_2)+4 (13 p+20) \sin ^2(\text{$\theta $}_1) \sin (2 \text{$\theta $}_2)\\
				& +8 (4 p-5) \sin ^4(\text{$\theta $}_1) \cos (4 \text{$\theta $}_2)+8 (10-9 p) \sin ^3(\text{$\theta $}_1) \cos (\text{$\theta $}_1) \cos (3 \text{$\theta $}_2)\\
				& +16 (13 p-7) \sin ^3(\text{$\theta $}_1) \cos (\text{$\theta $}_1) \sin (3 \text{$\theta $}_2)+4 (14-17 p) \sin ^2(2 \text{$\theta $}_1) \cos (2 \text{$\theta $}_2)\\
				& +4 \cos (2 \text{$\theta $}_1) ((31 p-4) \sin ^2(\text{$\theta $}_1) \sin (2 \text{$\theta $}_2)+4 p-5)+2 \sin (2 \text{$\theta $}_1) \cos (\text{$\theta $}_2) ((19 p-14) \cos (2 \text{$\theta $}_1)\\
				& +5 p-2) +4 \sin (2 \text{$\theta $}_1) \sin (\text{$\theta $}_2) ((p+7) \cos (2 \text{$\theta $}_1)-37 p+41)+(38 p-39) \cos (4 \text{$\theta $}_1)\\
				& -54 p+315].
			\end{split}
		\end{equation}

		Putting $\theta_1 = \sin^{-1} \left(\sqrt{\frac{2}{3}}\right)$, and $\theta_2 = \frac{\pi}{4}$ we find $\ket{\phi}_a = \ket{+}$. For these values of $\theta_1$ and $\theta_2$, the teleportation fidelity is
		\begin{equation}
			\small
			\begin{split}
				F_{H_{5}} = & \frac{25}{729} (13 - 3 p).
			\end{split}
		\end{equation}
		Now, $F_{H_{5}}$ is a function of the channel parameter $p$, only. We plot $F_{H_{5}}$ in Figure \ref{fig:Qutritflip noise} with green dashed line. Similarly, putting other values of $(\theta_1,\theta_2) = (\frac{\pi}{4}, \frac{\pi}{2})$ and $(\theta_1,\theta_2) = (0, \frac{\pi}{2})$ which are mentioned in equation (\ref{zero+two_state}), and (\ref{zero_state}) we find the teleportation fidelity for $\frac{1}{\sqrt{2}}(\ket{0} + \ket{2})$ and $\ket{0}$. They are also the function of channel parameter $p$, only. They are depicted in middle, and right subfigure of Figure \ref{fig:Qutritflip noise}, respectively.
	 
		The calculation for the other hypergraph states and channels are similar.

	\section*{Acknowledgment}
	We are thankful to Javid Naikoo for some programs.
	
	\section*{Funding}
		
		SG is supported by a doctoral fellowship from MoE, Government of India. SD is supported by a sponsored project entitled ``Transmission of quantum information using perfect state transfer" (CRG/2021/001834) funded by Science and Engineering Research Board, Government of India.
	
	\section*{Data availability statement}
	No data is generated in this study.

%\bibliographystyle{unsrt}
%\bibliography{library}

\end{document}